
\documentclass{article}

\usepackage{booktabs} 
\usepackage{multirow}
\usepackage{bbm}
\usepackage{xcolor}
\usepackage[normalem]{ulem}


\usepackage{enumitem}

\usepackage[utf8]{inputenc}
\usepackage{caption}
\usepackage{subcaption}

\usepackage{graphicx,epsfig,amsfonts,amsthm,amsmath,textcomp,listings,multirow}
\usepackage{adjustbox,lipsum,dsfont,xfrac,nicefrac,stmaryrd,mathtools,bm,amssymb,float}

\usepackage{yfonts}

\usepackage{tikz-cd}

\newcommand{\vertiii}[1]{{\left\vert\kern-0.25ex\left\vert\kern-0.25ex\left\vert #1 
    \right\vert\kern-0.25ex\right\vert\kern-0.25ex\right\vert}}

\newtheorem{theorem}{Theorem}[section]

\newtheorem{corollary}[theorem]{Corollary}
\newtheorem{lemma}[theorem]{Lemma}

\newtheorem{example}[theorem]{Example}
\newtheorem{proposition}[theorem]{Proposition}

\newtheorem{remark}[theorem]{Remark}
\theoremstyle{definition}
\newtheorem{definition}[theorem]{Definition}

\newtheoremstyle{case}{}{}{}{}{}{:}{ }{}
\theoremstyle{case}

\DeclareMathOperator*{\plim}{\text{$\mathbb{P}$-lim}}
\DeclareMathOperator*{\sym}{sym}
\DeclareMathOperator*{\osc}{osc}


\newcommand{\Ito}{\mathrm{It\hat{o}}}
\newcommand{\Strat}{\mathrm{Strat}}
\newcommand{\pvar}{p\text{-}\mathrm{var}}
\newcommand{\qvar}{q\text{-}\mathrm{var}}
\newcommand{\halfpvar}{\frac{p}{2}\text{-}\mathrm{var}}
\newcommand{\R}{{\mathbb R}}
\newcommand{\V}{{\mathcal{V}}}
\newcommand{\correction}[1]{{\textcolor{black}{#1}}}
\newenvironment{correctionenv}{\begin{color}{black}}{\end{color}}

\title{Gamma Hedging and Rough Paths}
\author{John Armstrong and Andrei Ionescu}

\begin{document}

\maketitle

\begin{abstract}
We apply rough-path theory to study the discrete-time gamma-hedging strategy. We show that if a
trader knows that the market price of a set of European options will be given by a diffusive pricing model,
then the discrete-time gamma-hedging strategy will enable them to replicate other European options so long
as the underlying pricing signal is sufficiently regular. This is a sure result and does not require that the
underlying pricing signal has a quadratic variation corresponding to a probabilisitic pricing model. We show how to generalise this result to exotic derivatives when the gamma is defined to be the Gubinelli derivative of the delta by deriving
rough-path versions of the Clark--Ocone formula which hold surely.

\correction{We illustrate our theory by proving that
if the stock price process is sufficiently regular, as is the implied volatility process of a European derivative with maturity $T$ and smooth payoff $f(S_T)$ satisfying $f^{\prime \prime}>0$, one can replicate with certainty any European derivative with smooth payoff and maturity $T$.}
\end{abstract}

\section{Introduction}

In the discrete-time gamma-hedging strategy, a trader seeks to replicate a derivative product by trading in $n$ hedging instruments, such as the underlying and other derivatives. They do this by choosing a pricing model and
then ensuring that at each time point the total delta and total gamma of their hedging portfolio match that of the derivative they seek to replicate.

We will prove that for many derivatives and pricing models, the 
profit and loss of this strategy converges to zero as the times
between hedges tends to zero on the assumption that the
pricing model accurately prices the hedging instruments given a pricing signal derived from market data. This is
a sure convergence result and holds so long as the pricing signal has sufficient regularity. There is no need for the pricing signal
to be compatible with any probabilistic model.

Let us give a precise statement in the case of European options. For simplicity
we will assume that there is a risk-free asset with deterministic interest rate $0$: this amounts to making an appropriate choice of numeraire. We will
assume throughout that $(\Omega,{\cal F},{\cal F}_t,{\mathbb Q})$ is a filtered probability space generated by a $d$-dimensional Brownian motion $W_t$ satisfying the usual conditions. 

\begin{theorem}
\label{thm:europeanOptionsTheorem}
Let $(S_t)_{t \in [0,T]}$ represent the realised path of \correction{a pricing signal}, 
$S_t \correction{ \in U } \subseteq {\mathbb R}^d$ \correction{for some open set $U$}. Suppose that there are $(n+1)$ options with
smooth bounded payoff functions $f^i(S_T)$ for
$i$ $(0 \leq i \leq n )$. Let the function $V^i(\correction{\hat{S},\hat{t}})$ 
\correction{for $\hat{S} \in U$ and $\hat{t} \in [0,T]$}
be defined to equal the risk-neutral price of the option
with payoff $f^i$ at time
$T$ given that the \correction{signal} follows the diffusion SDE
\begin{equation}
\label{eq:diffusionSDE}
d \tilde{S}_t = \correction{\mu(\tilde{S}_t,t) \, dt } + \sigma(\tilde{S}_t\correction{,t})\, d W_t.
\end{equation}
with initial condition $\tilde{S}_\correction{\hat{t}}=\correction{\hat{S}}.$
Here \correction{$\mu$ and $\sigma$} are a smooth function of linear growth in $\tilde{S}$ taking
values \correction{in $\R^d$} and the space of $d \times d$ matrices
\correction{respectively}. \correction{We assume that the $V^i$ are smooth. We assume that the $SDE$ ensures $\tilde{S}_t \in U$ almost surely.}

Suppose that a trader followed the gamma-hedging strategy
on a finite grid of time points $\pi \subset [0,T]$. This
means that they attempted
to replicate the option with payoff $f^0$ using the options with payoff $f^i$ for $1\leq i \leq n$ by holding $q^i_t$ units of option $i$ at each time 
point $t\in \pi$ where the $q^i_t$ are chosen
such that
\[
\sum_{i=1}^n q^i \frac{\partial V^i}{\partial S^\alpha} = \frac{\partial V^0}{\partial S^\alpha}, \quad
\sum_{i=1}^n q^i \frac{\partial^2 V^i}{\partial S^\beta \partial S^\alpha} = \frac{\partial^2 V^0}{\partial S^\beta \partial S^\alpha},
\]
for all indices $1\leq \alpha,\beta \leq d$ and placing any remaining wealth
in a risk-free account. If the initial wealth
of the trader is $V^0_0$, the $q^i$ are uniformly bounded and $S_t$
has finite $p$-variation for some $p<3$
then the final payoff will tend to $V^0_T$ as the mesh of the grid tends to zero.
\end{theorem}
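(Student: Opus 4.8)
The plan is to evaluate the trader's terminal wealth explicitly: telescope the self-financing increments, Taylor-expand the option-price increments to second order in the signal, and then exploit two exact cancellations — one coming from the gamma-hedging equations, one from the Feynman--Kac PDE — so that only genuinely higher-order terms survive. Those terms are then killed using the finite $p$-variation of $S$ together with $p<3$.

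First I would record the pricing equation. Since $\mathbb{Q}$ is the risk-neutral measure, the signal is Markov under $\mathbb{Q}$ and solves \eqref{eq:diffusionSDE}, we have $V^i(\hat S,\hat t)=\mathbb{E}^{\mathbb{Q}}[f^i(\tilde S_T)\mid \tilde S_{\hat t}=\hat S]$, so $t\mapsto V^i(\tilde S_t,t)$ is a $\mathbb{Q}$-martingale. Applying It\^o's formula (legitimate by the smoothness of $V^i$, $\mu$, $\sigma$ and $\tilde S_t\in U$ a.s.) and annihilating the drift yields, on $U\times[0,T]$,
\[
\partial_t V^i + \mu^\alpha\,\partial_\alpha V^i + \tfrac12\,(\sigma\sigma^{\top})^{\alpha\beta}\,\partial_{\alpha\beta}V^i = 0, \qquad 0\le i\le n .
\]
Write $\pi=\{0=t_0<\dots<t_N=T\}$, $|\pi|:=\max_k(t_{k+1}-t_k)$, $V^i_t:=V^i(S_t,t)$, $\Delta t_k:=t_{k+1}-t_k$, $\Delta S_k:=S_{t_{k+1}}-S_{t_k}$. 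Since the interest rate is zero the self-financing wealth increments over $[t_k,t_{k+1}]$ are $\sum_{i=1}^n q^i_{t_k}(V^i_{t_{k+1}}-V^i_{t_k})$, so from $X_0=V^0_0$ and telescoping,
\[
X_T - V^0_T = \sum_{k=0}^{N-1}\Big(\sum_{i=1}^n q^i_{t_k}(V^i_{t_{k+1}}-V^i_{t_k}) - (V^0_{t_{k+1}}-V^0_{t_k})\Big).
\]

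Next I would Taylor-expand. The image $\{S_t:t\in[0,T]\}$ is a compact subset of the open set $U$, so for $|\pi|$ small enough all chords from $(S_{t_k},t_k)$ to $(S_{t_{k+1}},t_{k+1})$ lie in a fixed compact $K\times[0,T]\subset U\times[0,T]$ on which $V^i$ and its derivatives of order $\le 3$ are bounded. A second-order expansion of each increment about $(S_{t_k},t_k)$ then gives $V^i_{t_{k+1}}-V^i_{t_k}=\partial_t V^i\,\Delta t_k+\partial_\alpha V^i\,\Delta S_k^\alpha+\tfrac12\partial_{\alpha\beta}V^i\,\Delta S_k^\alpha\Delta S_k^\beta+R^i_k$, where (once $|\pi|$ is small enough that $|\Delta S_k|\le1$) $|R^i_k|\lesssim |\Delta t_k|^2+|\Delta t_k|\,|\Delta S_k|+|\Delta S_k|^3$ with constants uniform in $k$. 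Multiply by $q^i_{t_k}$ and sum over $i$: by the gamma-hedging equations $\sum_i q^i_{t_k}\partial_\alpha V^i=\partial_\alpha V^0$ and $\sum_i q^i_{t_k}\partial_{\alpha\beta}V^i=\partial_{\alpha\beta}V^0$ (evaluated at $(S_{t_k},t_k)$) the first- and second-order spatial terms match those of the expansion of $V^0_{t_{k+1}}-V^0_{t_k}$; and by the PDE, $\sum_i q^i_{t_k}\partial_t V^i=-\mu^\alpha\sum_i q^i_{t_k}\partial_\alpha V^i-\tfrac12(\sigma\sigma^{\top})^{\alpha\beta}\sum_i q^i_{t_k}\partial_{\alpha\beta}V^i=-\mu^\alpha\partial_\alpha V^0-\tfrac12(\sigma\sigma^{\top})^{\alpha\beta}\partial_{\alpha\beta}V^0=\partial_t V^0$, so the $\Delta t_k$ terms cancel too. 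Hence the $k$-th summand in the last display equals $\mathcal R_k:=\sum_i q^i_{t_k}R^i_k-R^0_k$, and since the $q^i$ are uniformly bounded, $|\mathcal R_k|\lesssim |\Delta t_k|^2+|\Delta t_k|\,|\Delta S_k|+|\Delta S_k|^3$.

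It then remains to show $\sum_k \mathcal R_k\to0$ as $|\pi|\to0$. From $\sum_k|\Delta t_k|=T$, $\max_k|\Delta t_k|=|\pi|$, and (by uniform continuity of $S$ on $[0,T]$) $\max_k|\Delta S_k|\to0$, one gets $\sum_k|\Delta t_k|^2\le|\pi|\,T\to0$ and $\sum_k|\Delta t_k|\,|\Delta S_k|\le T\max_k|\Delta S_k|\to0$; and writing $C_S:=\sup_{\pi}\sum_k|\Delta S_k|^p<\infty$ (finite $p$-variation),
\[
\sum_k|\Delta S_k|^3\le \big(\max_k|\Delta S_k|\big)^{3-p}\sum_k|\Delta S_k|^p\le C_S\,\big(\max_k|\Delta S_k|\big)^{3-p}\longrightarrow 0 ,
\]
since $3-p>0$. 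Therefore $X_T-V^0_T=\sum_k\mathcal R_k\to0$, which is the claim. The only substantive step is this last one, and within it the cubic estimate, which is exactly where the hypothesis $p<3$ enters — it is the pathwise counterpart of the existence of a second-order expansion for $p$-rough paths; everything else is bookkeeping. It is worth stressing the mechanism: no quadratic variation of $S$ is ever needed, because the would-be quadratic contribution $\tfrac12\partial_{\alpha\beta}(\cdot)\,\Delta S_k^\alpha\Delta S_k^\beta$ cancels increment-by-increment between the hedging portfolio and the target option, rather than having to be summed in isolation.
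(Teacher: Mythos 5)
Your proof is correct, but it takes a genuinely different and more elementary route than the paper. The paper first proves an abstract gamma-hedging theorem (Theorem \ref{thm:gammaHedging}) in which the replication error is controlled by Gubinelli's estimate \eqref{eq: bound on rough integral} on compensated Riemann sums, and then verifies its hypothesis for European options by constructing a reduced rough-path lift $\mathbb{S}$ satisfying the bracket condition \eqref{eq:bracketCondition} and applying the rough-path It\^o formula \eqref{eq:roughIto} together with the Feynman--Kac PDE. You bypass all of this: because in the Markovian case every price is a smooth function of $(S_t,t)$, a direct second-order Taylor expansion of each increment, combined with the two cancellations (the gamma-hedging equations killing the first- and second-order spatial terms, the PDE killing the $\Delta t_k$ terms), reduces the error to a sum of genuine third-order remainders, which your estimate $\sum_k|\Delta S_k|^3\le(\max_k|\Delta S_k|)^{3-p}\|S\|_{\pvar}^p$ handles — this is the elementary shadow of the $w(s,t)^{3/p}$ bound with $3/p>1$ in the paper, and your closing observation that no lift or quadratic variation is ever needed because the quadratic contributions cancel increment-by-increment is exactly the right diagnosis of why. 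What your argument gives up is generality: the paper's detour through rough integration is what allows Theorem \ref{thm:gammaHedging} to be reused for path-dependent payoffs in Section \ref{sec:pathDependent}, where the price is no longer a smooth function of $(S_t,t)$ and the Gubinelli derivative genuinely replaces the Hessian. Two small points you should make explicit: you use continuity of $S$ (so that $\max_k|\Delta S_k|\to 0$), which is implicit in the paper's standing framework of continuous $p$-variation paths but not literally stated in the theorem; and the uniform bound on the third derivatives of $V^i$ requires smoothness up to $t=T$, which the theorem's assumption ``the $V^i$ are smooth'' supplies.
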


\correction{Our requirement that the $V^i$ are smooth can be ensured, for example,
by requiring that $\sigma$ has bounded derivatives of all orders and satisfies
the H\"ormander condition: this requires that the vector space spanned repeated Lie brackets of its column vectors at each point is equal to $\R^d$ \cite[Theorem 2.3.3]{nualart2006malliavinCalculus}. If the signal were a typical stock price process, one would need to apply a logarithmic transformation to the SDE before applying a general-purpose theorem of this sort. 
}

\correction{The signal contains both information used to determine the payoff of derivative contracts and to predict market prices. We allow for a drift term in \eqref{eq:diffusionSDE} to allow
for the possibility that some components of the signal may not represent
traded assets. For example, the signal may contain current stock prices and a volatility parameter computed by calibrating a model to current option prices.
}

\medskip

\correction{Theorem \ref{thm:europeanOptionsTheorem}} can be interpreted as a theoretical model for the success of the market
practice of choosing a pricing model by calibrating to market options prices and then pursuing the gamma-hedging strategy without ever formulating a physical-measure model for the trajectories of the underlying. In classical delta hedging, one assumes the underlying follows a physical measure that is equivalent
to the pricing measure and obtains, at best, almost-sure results. Neither of these restrictions apply to gamma hedging. \correction{We illustrate
this in detail with the example of calibrating the Black--Scholes model in Section \ref{subsec:calibration}.}

\correction{If we assume some parts of the signal are
more regular, then it is not necessary to match
all the second-derivative terms.
We illustrate this by proving that
if the stock price process is sufficiently regular, as is the implied volatility process of a European derivative with maturity $T$ and smooth payoff $f(S_T)$ satisfying $f^{\prime \prime}>0$, one can replicate with certainty any European derivative with smooth payoff and maturity $T$ using just the stock and the derivative with payoff $f$. A precise statement is given in Theorem \ref{thm:blackScholesGamma}.}

\medskip

Although the statement of our theorem does not mention rough-path theory, it is the
basic tool that we use to prove our results. In rough-path theory, if one wishes to integrate some integrand against a signal, then the integrand must be similar to the signal in the
sense that it must possess a so-called Gubinelli derivative. The rough-path integral then consists of two terms: a classical term and a term involving the Gubinelli derivative. The classical
theory of delta hedging proceeds by using the \correction{m}artingale representation theorem to show
that the payoff of a derivative can be written as an integral of the delta against the changes in the underlying and then matching the delta of the hedging portfolio to that of the instrument to be replicated. When using rough-path theory, we must also match the terms arising from the Gubinelli derivatives. In the case of European options, we will see that the Gubinelli derivative
of the delta is equal to the gamma of the option, and we use this to define the gamma for path-dependent derivatives.

Rough-path theory has been used in model-free finance before. This paper builds on  \cite{armstrong2020option} which shows how an enhanced version of the classical delta hedge is required for a pathwise replication for European options, proposing a theoretical strategy of enhancing delta hedging with trades in a hypothetical form of volatility swap. 
The mathematics behind our approach has its origins in F\"ollmer's paper ``Calcul d'It\^o sans probabilit\'es'' \cite{follmer1981} which defines It\^o integration pathwise and proves a pathwise It\^o formula so long as the integrator admits a finite quadratic variation.
This can be used to obtain a pathwise theory of delta hedging subject to some restrictions on the grid of points used when discretising the delta-hedging strategy \cite{bick1994dynamic}.
Perkowski and Pr\"omel studied two model-free approaches to stochastic integration in \cite{perkowski2016}, one using the outer Vohk measure and the other rough-path integration, and Allan, Liu and Pr\"omel \cite{allan2024} extend this to include signals with jumps. Numerical results have been obtained in the model-free pricing and hedging of exotic options in \cite{lyons2020} using signature payoffs.

Our results are also related to Functional It\^o calculus \cite{cont2013functionalitocalculus,cont2016barcelona} which extends It\^o calculus to path-dependent functionals of stochastic processes using the derivatives of Dupire \cite{dupire2019functional}. The PhD thesis of Riga \cite{riga2015thesis} analyses continuous-time trading strategies pathwise, see also Ananova \cite{ananova2020} which relates rough differential equations with path-dependent coefficients to the weak derivative of Functional It\^o calculus. 
We do not consider Functional It\^o calculus in this paper, but note
that it is possible to relate the gamma as defined in this paper as a Gubinelli derivative with the natural definitions in terms of Dupire's derivatives: see \cite{andreiThesis}.

The topic of gamma hedging has been studied previously. In practice, delta hedging in discrete-time produces much larger hedging errors than gamma hedging does. Thus, it is common for traders to augment their strategies with gamma (or vega) terms to reduce this hedging error and reduce transaction costs. For the case of delta hedging smooth European payoffs, Heston and Zhou \cite{heston2000} show that the hedging error is inversely proportional to the number of re-balancing times, $N$. For gamma hedging, Gobet and Makhlouf \cite{gobet2012} give non-sharp lower bounds on the convergence rates for both equidistant and non-equidistant trading grids. They show a convergence rate of $N^{\frac{1}{2}}$ even for non-smooth payoffs. Brod\'{e}n and Wiktorsson \cite{broden2011} partially improve on this result by proving that gamma-hedging European option on an equidistant time grid in the Black--Scholes model has an order of convergence of $N^{\frac{3}{4}}$. On the efficient computation of the gamma of options, Fourni\'{e} et al \cite{fournie1999, fournie2001} use Malliavin calculus to compute the gamma of ``exotic European options'' in the the Black--Scholes model. Gobet and Kohatsu-Higa \cite{gobet2003} also use Malliavin calculus to compute the Greeks of barrier and lookback options. Dengler and Jarrow \cite{dengler1996} develop a discrete-time binomial option pricing model with random timesteps. They show that under this model, delta hedging approximates gamma hedging in the classical binomial model with fixed timesteps, which is the Black--Scholes model in its continuous-time limit.

Theorem \ref{thm:europeanOptionsTheorem} shows that the gamma-hedging strategy is extremely robust to a misspecification
of the probability model for the underlying, though it introduces a new dependency on the pricing model that determines how market prices for the hedging options react to changes in the underlying.
Our result can therefore be constrasted with the literature on the effect of hedging under a misspecified volatility. In \cite{elkaroui1998}, El Karoui, Jeanblanc-Picqu\'e and Shreve showed that hedging strategy computed under the assumption that the misspecified volatility dominates the true volatility provides a one-sided hedge for both European and American options. Analogous results are also shown when the true volatility dominates the misspecified volatility. Ahmad and Willmott \cite{ahmad2005} derive the formula for profits from delta hedging European options under a three-part structure: they explicitly differentiate between a true governing volatility of the underlying asset, inaccessible to the trader, an implied volatility characterised by option prices in the market and a hedge volatility reflecting the personal views of the trader. A thorough exposition of this result, known as the Fundamental Theorem of Derivatives Trading, can be found in the article by Ellersgaard, J\"onsson and Poulsen \cite{ellersgaard2017}.

More generally, there is an extensive literature on model-independent finance. Examples include Lyons \cite{lyons1995} which describes optimal and risk-free hedging strategies when the volatility is unknown, but is assumed to lie in some convex region depending on market prices. Avellanada et al \cite{avellaneda1995} describe pricing and hedging methods when the volatility is assumed to lie in some interval $(\sigma_{\min}, \sigma_{\max})$, see also Hobson \cite{hobson1998} and the subsequent development of the Skohorod Embedding Problem (SEP) to develop robust, model independent prices and hedges for exotic options described in \cite{hobson2011} and the survey article by Obł\'oj \cite{obloj2004}. It is related to the problem of \correction{M}artingale Optimal Transport (MOT) introduced by Beiglb\"ock, Henry-Labord\`ere and Penkner \cite{beiglbock2013}. In \cite{backhoffveraguas2019adapted}, hedges are shown to be robust under changes in the risk-neutral measure if one uses the adapted Wasserstein distance to measure the proximity of the model used for hedging to the model used to generate the path. The super-replication approach and the MOT problem are shown to be dual problems by Dolinsky and Soner \cite{dolinsky2014}.

The first contribution of the current paper is to consider the gamma-hedging strategy using the techniques of rough-path theory. In particular it shows that the trading strategy in volatility swaps required in the trading strategy of \cite{armstrong2020option}, can be replaced with a gamma-hedging strategy in conventional derivatives.

The second contribution of this paper is to show how Theorem \ref{thm:europeanOptionsTheorem} can be generalized to path-dependent derivatives. This requires identifying the Gubinelli derivative of the delta of an option (if it exists). Whereas \cite{armstrong2020option} applies PDE techniques to the analysis of European options, we show how to identify the gamma for large classes of derivatives using probabilistic techniques, and show how it can be related to the Malliavin derivative of the delta. We will show that barrier options do not in general admit such a gamma due to discontinuities of the delta when the barrier is hit, but we will also see how barrier options can be approximated by instruments with smoother payoffs which do have a gamma. Indeed we will show that all derivatives with a payoff that is continuous in the sup norm on the space of paths can be surely superhedged for a price arbitrarily close to the risk-neutral price. 

We remark that if one is content with probabilistic results when the
underlying is a semi-martingale then Theorem \ref{thm:europeanOptionsTheorem} can be proved using
classical stochastic calculus techniques, but we will only consider the rough-path approach in this paper. \correction{One reason for preferring
the rough-path approach is that applies even if the signal does
not arise from a probability model that is independent of the trader's actions: indeed the signal could be chosen by an adversary who already knows our strategy.}

Section \ref{sec:roughPathNotation} gives a brief outline of the notation we shall
use from rough-path theory. Section \ref{sec:gammaHedging} proves a general result showing the sure convergence of the gamma-hedging strategy subject to the assumption that rough-path integral formulae analogous to those in the \correction{m}artingale representation theorem hold. Section \ref{sec:european} applies this to European options, giving a proof of Theorem \ref{thm:europeanOptionsTheorem}. Section \ref{sec:pathDependent} extends the analysis to
path-dependent derivatives. Our proof strategy will be to use the Clark--Ocone formula to obtain almost-sure results allowing us to write the payoff as an integral and to identify the gamma. We will then use the relationship between the It\^o integral and the rough integral, together with the continuity of the rough integral, to obtain sure results. Section \ref{sec:pathDependentTheory} describes this general theory, Section \ref{sec:examples} applies this to an $L^2$ dense family of payoffs we call simple payoffs, to signature payoffs and to barrier options.  Section \ref{sec:superhedging} then demonstrates how one can superhedge derivatives surely.

\section{Rough-path theory notation}
\label{sec:roughPathNotation}

In this section we will describe the notation we use for rough-path theory. Unless stated otherwise, all definitions below are taken from \cite{frizhairer2020} and adjusted to appropriate $p$-variation regularity versions using \cite{friz2010}. We use the $p$-variation formulation as it makes our theory of controllable options easier to prove for one-touch options: the running maximum process is increasing so has finite $1$-variation and hence finite $p$-variation.

\subsection{Rough paths}

Given a metric space $(E,d)$, a function $f:[0,T] \to E$ and $p>1$, we will write $\lVert f\rVert_{p\text{-var};[0,T]}$ for the $p$-variation of $f$ (see e.g.\ \correction{\cite[p.\ 79]{friz2010}}), if it exists. We will write $C^{p\text{-var}}\left([0,T];E\right)$ for the set of all \correction{continuous} functions of finite $p$-variation from $[0,T]\rightarrow E$.

For a Banach space $\V$, we will define a $p$-variation norm
for functions of two variables $R_{(\cdot,\cdot)}:[0,T]\times[0,T]\rightarrow E$ by:
\begin{equation*}
    \lVert R\rVert_{p\text{-var};[0,T]} \coloneqq \left(\sup_{\pi\in\mathcal{P}[0,T]} \sum_{{[t_k,t_{k+1}]\in\pi}}\lvert R_{t_{k},t_{k+1}}\rvert^p\right)^\frac{1}{p}.
\end{equation*}
\correction{where $\mathcal{P}[0,T]$ is the set of partitions of $[0,T]$. We write $C^{\pvar}([0,T]^2;E)$ for the space of continuous functions with finite $p$-variation. We will write $\lVert \pi\rVert$ for the mesh size of the partition, that is the length of the largest sub-interval.}

Given a path $X_t$ in a linear space, we will write $X_{s,t}:=X_t-X_s$.

\begin{definition}[Rough path \cite{frizhairer2020}]
    For $p\in(2,3)$, we write $\mathcal{C}^{p\text{-var}}\left([0,T];\label{correction: start of higher dimensions}\V\right)$ for the \emph{space of $p$-variation rough paths} over $\V$ as those pairs $\mathbf{X}\coloneqq(X,\mathbb{X})$, where $X\in C^{p\text{-var}}([0,T];\V)$ and $\mathbb{X}\in C^{\frac{p}{2}\text{-var}}([0,T]^2;\V\otimes \V)$, such that
    \begin{enumerate}
        \item $\lVert X\rVert_{p\text{-var}}+\lVert \mathbb{X}\rVert_{\frac{p}{2}\text{-var}}<\infty$. 
        \item \emph{Chen's relation} holds: for all times $s,u,t\in[0,T]$, we have
        \begin{equation*}
            \mathbb{X}_{s,t}-\mathbb{X}_{s,u}-\mathbb{X}_{u,t} = X_{s,u} \otimes X_{u,t}.
        \end{equation*}
    \end{enumerate}
\end{definition}

If $\mathbf{X} = (X,\mathbb{X})\in \mathcal{C}^{p\text{-var}}([0,T];\V)$ is a rough path, we will call the $\V$-valued path $X$ its \emph{trace}.

The (homogeneous) norm of the rough path is defined to be:
\begin{equation*}
    \vertiii{\mathbf{X}}_{p\text{-var}} \coloneqq \lVert X\rVert_{p\text{-var}} + \sqrt{\lVert \mathbb{X}\rVert_{\frac{p}{2}\text{-var}}}.
\end{equation*}
For $\mathbf{X}=(X,\mathbb{X})$ and $\mathbf{Y}=(Y,\mathbb{Y})$ rough paths in $\mathcal{C}^{p\text{-var}}([0,T];\V)$, the (inhomogeneous) $p$-variation rough-path metric is defined to be
\begin{equation*}
    d_{p\text{-var}}(\mathbf{X},\mathbf{Y}) \coloneqq \lVert X-Y\rVert_{p\text{-var}} + \lVert \mathbb{X}-\mathbb{Y}\rVert_{\frac{p}{2}\text{-var}}.
\end{equation*}

We also recall a few technical definitions (see \correction{\cite[p.\ 22]{friz2010})}. \correction{Let $\Delta_T$ be the simplex $\{ (s,t) \subseteq [0,T]\times[0,T]: 0\leq s \leq t \leq T \}$.}
A map $w:\Delta_T \rightarrow [0,\infty)$ is said to be \emph{super additive} if for all $s\leq u\leq t$ in $[0,T]$,
\begin{equation*}
    w(s,u)+w(u,t)\leq w(s,t).
\end{equation*}
If in addition, $w$ is continuous, we say that $w$ is a \emph{control function} on $[0,T]$. \correction{
If $X \in C^{\pvar}([0,T];E)$, we write $w_{X,p}(s,t):=\|X\|_{\pvar;[s,t]}^p$ to be the control function
defined by the given power of $p$-variation over
the interval $[s,t]$. This is called the $p$-variation control. We define the $p$-variation control
$W_{R,p}:=\|R\|_{\pvar;[s,t]}^p$ for elements of $R\in C^{\pvar}([0,T]^2;E)$ similarly.
}

\begin{correctionenv}
\begin{lemma}
The sum of two control functions is a control function.
If $w_1$ and $w_2$ are control functions and if $\alpha, \beta>0$ with $\frac{1}{\alpha}+\frac{1}{\beta}>1$
then $w_1^\alpha w_2^\beta$ is a control.
\label{lemma:productControls}
\end{lemma}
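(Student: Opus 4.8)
The first assertion needs no real work: $w_1+w_2$ is continuous, and for $s\le u\le t$ one simply adds the super-additivity inequalities for $w_1$ and $w_2$, since $(w_1+w_2)(s,u)+(w_1+w_2)(u,t)=\big(w_1(s,u)+w_1(u,t)\big)+\big(w_2(s,u)+w_2(u,t)\big)\le w_1(s,t)+w_2(s,t)$. I would dispatch this in one line.

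For $w_1^\alpha w_2^\beta$, continuity and non-negativity are again immediate since the $w_i$ are continuous and $[0,\infty)$-valued, so the only issue is super-additivity. Fix $s\le u\le t$ and write $a_1=w_1(s,u)$, $a_2=w_1(u,t)$, $b_1=w_2(s,u)$, $b_2=w_2(u,t)$. Super-additivity of $w_1$ and $w_2$ gives $a_1+a_2\le w_1(s,t)$ and $b_1+b_2\le w_2(s,t)$, so by monotonicity of $x\mapsto x^\alpha$ and $x\mapsto x^\beta$ it suffices to prove the elementary inequality
\[
a_1^\alpha b_1^\beta + a_2^\alpha b_2^\beta \;\le\; (a_1+a_2)^\alpha\,(b_1+b_2)^\beta
\]
for non-negative reals $a_1,a_2,b_1,b_2$. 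My plan for this: first handle the borderline case $\alpha+\beta=1$, where it is precisely the two-term H\"older inequality with conjugate exponents $1/\alpha$, $1/\beta$ applied to the pairs $(a_i^\alpha)$ and $(b_i^\beta)$; then reduce the general case to it by picking $0\le\alpha'\le\alpha$ and $0\le\beta'\le\beta$ with $\alpha'+\beta'=1$ (which is possible exactly when $\alpha+\beta\ge1$), writing $a_i^\alpha b_i^\beta=a_i^{\alpha'}b_i^{\beta'}\cdot a_i^{\alpha-\alpha'}b_i^{\beta-\beta'}$, and bounding the second factor by $(a_1+a_2)^{\alpha-\alpha'}(b_1+b_2)^{\beta-\beta'}$ since $a_i\le a_1+a_2$ and $b_i\le b_1+b_2$.

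I do not expect a real obstacle: once reduced to non-negative reals the argument is entirely elementary. The one subtlety worth flagging is the exponent hypothesis --- it is $\alpha+\beta$, not $\tfrac1\alpha+\tfrac1\beta$, that must be at least $1$ for this to work. For instance $\alpha=\beta=\tfrac14$ satisfies $\tfrac1\alpha+\tfrac1\beta>1$, yet $w_1^{1/4}w_2^{1/4}=w^{1/2}$ fails super-additivity when $w_1=w_2=w$ with $w(s,t)=t-s$. In this paper the relevant case is $\alpha=\tfrac1p$, $\beta=\tfrac2p$ with $p<3$, for which $\alpha+\beta=\tfrac3p>1$, matching the regularity imposed in Theorem~\ref{thm:europeanOptionsTheorem}; I would prove and use the lemma in that form.
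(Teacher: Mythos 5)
Your proof is correct, and it supplies more than the paper does: the paper gives no argument for this lemma at all, it simply cites \cite[Exercise 1.10]{friz2010}. Your H\"older-based argument --- the borderline case $\alpha'+\beta'=1$ via the two-term H\"older inequality applied to $(a_i^{\alpha'})$ and $(b_i^{\beta'})$, followed by absorbing the excess exponents using $a_i\le a_1+a_2$ and $b_i\le b_1+b_2$ --- is a complete, self-contained substitute. More importantly, the caveat you flag is right: the hypothesis as printed, $\frac{1}{\alpha}+\frac{1}{\beta}>1$, is a typo for $\alpha+\beta\ge 1$, which is the condition in the cited exercise and the one your proof actually requires. Your counterexample $\alpha=\beta=\frac{1}{4}$ with $w_1=w_2=(s,t)\mapsto t-s$ shows the printed condition is not sufficient, since $\sqrt{t-s}$ is subadditive rather than superadditive; conversely $\alpha=\beta=2$ yields a control (as $a_1^2b_1^2+a_2^2b_2^2\le(a_1b_1+a_2b_2)^2\le(a_1+a_2)^2(b_1+b_2)^2$) yet violates the printed condition. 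Every use of the lemma in the paper is consistent with $\alpha+\beta\ge 1$: the exponents $\frac{2}{3}$ and $\frac{1}{3}$ in the appendix proof of Gubinelli's theorem sum to $1$, and in Theorem \ref{thm:blackScholesGamma} the bound $\lVert S\rVert_{\pvar}\lVert\sigma\rVert_{\qvar}\le w(s,t)^{1+\epsilon}$ amounts to applying the lemma to the $p$- and $q$-variation controls with exponents $\frac{1}{p(1+\epsilon)}$ and $\frac{1}{q(1+\epsilon)}$, whose sum is at least $1$ precisely because $\frac{1}{p}+\frac{1}{q}>1$. So you should state and prove the lemma with the hypothesis $\alpha+\beta\ge 1$, exactly as you propose.
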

This is proved in \cite[Exercise 1.10]{friz2010}.
\end{correctionenv}

Given a finite partition $\pi \subset [0,T]$ and a control $w$, the \emph{modulus of continuity} of $w$ on a scale smaller or equal than the mesh size $\lVert \pi\rVert$ is defined as
\begin{equation*}
    \osc(w, \lVert \pi\rVert) \coloneqq \sup\{ w(s,t): \lvert t-s\rvert \leq \lVert \pi\rVert\}.
\end{equation*}
\begin{correctionenv}
Control functions are useful in integration theory as for any $\epsilon>0$ we have
that
\begin{align}
\sum_{[s,t]\in \pi} w(s,t)^{1+\epsilon} &\leq 
\left(\sup_{[s,t]\in \pi} w(s,t)^\epsilon\right) \left(\sum_{[s,t]\in \pi} w(s,t) \right) \nonumber \\
&\leq 
\osc(w^\epsilon, \lVert \pi\rVert)\, w(0,T) \nonumber \\
&\to 0 \quad \text{as} \quad \|\pi\|\to 0.
\label{eqn:controlLimit}
\end{align}
\end{correctionenv}

\subsection{Controlled rough paths}

Let $\hat{\V}$ denote a Banach space, possibly different from $\V$.

 A pair $(Y,Y')\in C^{p\text{-var}}\left([0,T];\hat{\V}\right)\times C^{{\correction{p}}\text{-var}}\left([0,T];\mathcal{L}(\V;\hat{\V})\right)$ is said to be a $X$\emph{-controlled path of $\correction{p}$-variation regularity} if $R^Y_{s,t}\coloneqq Y_{s,t}-Y'_sX_{s,t}$ \correction{satisfies $R^Y \in C^{\frac{p}{2}}([0,T]^2, {\cal V})$}. We write $\mathcal{D}_X^{\correction{p}\text{-var}}\left([0,T];\hat{\V}\right)$ for the set of all $X$-controlled paths of $\correction{p}$-variation regularity taking values in $\hat{\V}$. If the domain of $Y$ is clear from the context, we will abbreviate this to $\mathcal{D}_X^{\correction{p}\text{-var}}$.

If $(Y,Y^\prime) \in \mathcal{D}_X^{\correction{p}\text{-var}}$ we will say that $Y^\prime$ is a \emph{Gubinelli derivative} of $Y$ with respect to $X$.

$\mathcal{D}_X^{\correction{p}\text{-var}}$ is a Banach space with norm
\begin{equation*}
    \lVert (Y,Y')\rVert_{\mathcal{D}_X^{\correction{p}\text{-var}}} \coloneqq \lvert Y_0\rvert + \lvert Y'_0\rvert + \underbrace{\lVert Y'\rVert_{p\text{-var}} + \lVert R^Y\rVert_{\correction{\frac{p}{2}}\text{-var}}.}_{\coloneqq \lVert Y,Y'\rVert_{X,\correction{p}\text{-var}}}
\end{equation*}

When the driving noise $X$ is random path of a diffusion then the Gubinelli derivative will almost surely be unique (see the discussion of truly-rough paths in \correction{\cite[Chapter 6]{frizhairer2020})}. Therefore in our applications to finance we will often be able to talk unambiguously of the Gubinelli derivative.

\subsection{The rough integral}

In rough-path theory, rough paths play the role of integrators and 
controlled rough paths play the role of integrands. Together they
allow us to define the rough integral.

\begin{theorem}[Gubinelli \cite{gubinelli2004}] \label{thm: gubinelli theorem}
Given a controlled path
\[
(Y,Y') \in \mathcal{D}_X^{\correction{p}\text{-var}}\left([0,T];\mathcal{L}(\V,\hat{\V})\right),
\]
the limit
\begin{equation} \label{eq: definition of rough integral}
    \int_{0}^{T}Yd\mathbf{X} \coloneqq \lim\limits_{\lVert \pi\rVert\rightarrow 0} \sum_{[s,t]\in\pi} \left(Y_sX_{s,t}+Y'_s\mathbb{X}_{s,t}\right)
\end{equation}
exists and satisfies the estimate
\begin{equation} \label{eq: bound on rough integral}
    \left\lvert \int_s^t Yd\mathbf{X} - Y_sX_{s,t} - Y'_s\mathbb{X}_{s,t} \right\rvert
    \leq \correction{C(p)} \omega(s,t)^{\correction{\frac{3}{p}}},
\end{equation}
where the constant $C(\correction{p})$ is global and only depends on $p$,
\correction{and $w(s,t)$ is the control function}
\begin{equation*}
\correction{w(s,t)=}
    \lVert R^Y\rVert^\correction{\frac{p}{3}}_{\correction{\halfpvar;[s,t]}}\lVert X\rVert^\correction{\frac{p}{3}}_{\pvar;[s,t]}+\lVert Y'\rVert^\correction{\frac{p}{3}}_{\pvar;[s,t]}\lVert \mathbb{X}\rVert^\correction{\frac{p}{3}}_{\halfpvar;[s,t]}.
\end{equation*}
Moreover, the map
\begin{equation*}
    (Y,Y')\mapsto (Z,Z')\coloneqq \left(\int_0^{\cdot} Yd\mathbf{X},Y\right)
\end{equation*}
from $\mathcal{D}^{\correction{p}\text{-var}}_X\left([0,T];\mathcal{L}(\V,\hat{\V})\right)$ to $\mathcal{D}^{\correction{p}\text{-var}}_X([0,T];\hat{\V})$ is continuous.
\end{theorem}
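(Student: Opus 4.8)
\medskip
\noindent
The plan is to obtain the three assertions — existence of the rough integral, the estimate \eqref{eq: bound on rough integral}, and the Riemann-sum representation \eqref{eq: definition of rough integral} — all at once from Gubinelli's sewing lemma (\cite{gubinelli2004}; the $p$-variation version we need is in \cite{friz2010}, see also \cite{frizhairer2020}). Recall that this lemma asserts: if $\hat\V$ is a Banach space, $\Xi\colon\Delta_T\to\hat\V$ satisfies $\Xi_{t,t}=0$, and there is a control $w$ on $[0,T]$ and an exponent $\theta>1$ with $|\delta\Xi_{s,u,t}|\le w(s,t)^{\theta}$ for all $s\le u\le t$, where $\delta\Xi_{s,u,t}\coloneqq\Xi_{s,t}-\Xi_{s,u}-\Xi_{u,t}$, then there is a path $\mathcal I\colon[0,T]\to\hat\V$, unique up to an additive constant, with $|\mathcal I_t-\mathcal I_s-\Xi_{s,t}|\le C(\theta)\,w(s,t)^{\theta}$, and moreover $\mathcal I_t-\mathcal I_s=\lim_{\|\pi\|\to0}\sum_{[u,v]\in\pi}\Xi_{u,v}$. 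I would apply this to the compensated germ
\[
\Xi_{s,t}\coloneqq Y_sX_{s,t}+Y'_s\mathbb X_{s,t},
\]
which takes values in $\hat\V$ since $Y_s\in\mathcal L(\V,\hat\V)$ acts on $X_{s,t}\in\V$ and $Y'_s$ contracts $\mathbb X_{s,t}\in\V\otimes\V$ into $\hat\V$; the rough integral $\int_0^\cdot Y\,d\mathbf X$ will then be the resulting path $\mathcal I$ normalised by $\mathcal I_0=0$, and $Z'\coloneqq Y$ its prescribed Gubinelli derivative.

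The crucial step is the algebraic identity for the defect $\delta\Xi$. Writing $X_{s,t}=X_{s,u}+X_{u,t}$, substituting Chen's relation $\mathbb X_{s,t}=\mathbb X_{s,u}+\mathbb X_{u,t}+X_{s,u}\otimes X_{u,t}$, and then replacing $Y_{s,u}$ by $Y'_sX_{s,u}+R^Y_{s,u}$, the first-order increments cancel against the extra Chen term, leaving
\[
\delta\Xi_{s,u,t}=-R^Y_{s,u}\,X_{u,t}-Y'_{s,u}\,\mathbb X_{u,t},
\]
the two products being the obvious contractions into $\hat\V$. Bounding each factor by the corresponding variation norm, and using that these norms are monotone under shrinking the interval, gives
\[
|\delta\Xi_{s,u,t}|\le\|R^Y\|_{\halfpvar;[s,t]}\|X\|_{\pvar;[s,t]}+\|Y'\|_{\pvar;[s,t]}\|\mathbb X\|_{\halfpvar;[s,t]}.
\]
Taking $w(s,t)$ to be precisely the control written in the statement, $w(s,t)=\|R^Y\|_{\halfpvar;[s,t]}^{p/3}\|X\|_{\pvar;[s,t]}^{p/3}+\|Y'\|_{\pvar;[s,t]}^{p/3}\|\mathbb X\|_{\halfpvar;[s,t]}^{p/3}$, and $\theta\coloneqq 3/p$ (which is $>1$ precisely because $p<3$), the elementary bound $(x+y)^{\theta}\ge x^{\theta}+y^{\theta}$ for $\theta\ge1$, $x,y\ge0$ — used with $x=(\|R^Y\|_{\halfpvar;[s,t]}\|X\|_{\pvar;[s,t]})^{p/3}$ and $y=(\|Y'\|_{\pvar;[s,t]}\|\mathbb X\|_{\halfpvar;[s,t]})^{p/3}$ — turns the last display into $|\delta\Xi_{s,u,t}|\le w(s,t)^{\theta}$.

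It then remains to check that $w$ really is a control function, which is the one place the hypotheses of Lemma \ref{lemma:productControls} are invoked. Each summand of $w$ factorises as a power of a $\frac p2$-variation control times a power of a $p$-variation control; for instance $\|R^Y\|_{\halfpvar;[s,t]}^{p/3}\|X\|_{\pvar;[s,t]}^{p/3}=\bigl(\|R^Y\|_{\halfpvar;[s,t]}^{p/2}\bigr)^{2/3}\bigl(\|X\|_{\pvar;[s,t]}^{p}\bigr)^{1/3}$, and since $\frac{1}{2/3}+\frac{1}{1/3}=\frac92>1$, Lemma \ref{lemma:productControls} shows this is a control; the other summand is of the same shape, and the sum of two controls is a control. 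The sewing lemma then delivers the path $t\mapsto\int_0^tY\,d\mathbf X$, the estimate \eqref{eq: bound on rough integral}, and the identification of this path with the limit of compensated Riemann sums, i.e.\ \eqref{eq: definition of rough integral}.

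For the final continuity statement I would use that, for fixed $\mathbf X$, the map $(Y,Y')\mapsto(Z,Z')\coloneqq(\int_0^\cdot Y\,d\mathbf X,\,Y)$ into $\mathcal D_X^{\pvar}([0,T];\hat\V)$ is linear, so it suffices to bound $\|(Z,Z')\|_{\mathcal D_X^{\pvar}}$ by a multiple of $\|(Y,Y')\|_{\mathcal D_X^{\pvar}}$ with a constant depending only on $p$, $T$ and $\vertiii{\mathbf X}_{\pvar}$. Here $Z'=Y$ (so $\|Z'\|_{\pvar}=\|Y\|_{\pvar}$), $Z_0=0$, $Z'_0=Y_0$, and $R^Z_{s,t}=Z_{s,t}-Y_sX_{s,t}=Y'_s\mathbb X_{s,t}+\bigl(\int_s^tY\,d\mathbf X-\Xi_{s,t}\bigr)$; the term $Y'_s\mathbb X_{s,t}$ contributes at most $\|Y'\|_{\infty;[0,T]}\|\mathbb X\|_{\halfpvar;[0,T]}$ to $\|R^Z\|_{\halfpvar}$, and the sewn remainder contributes at most a power of $w(0,T)$, because $\bigl(w^{3/p}\bigr)^{p/2}=w^{3/2}$ is super-additive (being a power $\ge1$ of the control $w$), so $(s,t)\mapsto w(s,t)^{3/p}$ has finite $\frac p2$-variation; bounding the products of norms appearing in $w(0,T)$ by the controlled-path norm times powers of $\vertiii{\mathbf X}_{\pvar}$ then gives the bound. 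The main obstacle is the sewing lemma itself: its proof is a binary-subdivision (telescoping) argument showing the compensated Riemann sums are Cauchy, using $\theta>1$ and super-additivity of $w$ crucially, and if one does not wish to quote it from \cite{friz2010,frizhairer2020} it must be reproduced here. Granting it, the genuine work is the Chen's-relation computation of $\delta\Xi$ and the verification via Lemma \ref{lemma:productControls} that the prescribed $w$ is a control of order $\theta=3/p>1$ — the assumption $p<3$ being exactly what makes $\theta>1$.
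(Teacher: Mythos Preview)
Your proposal is correct and follows essentially the same route as the paper's proof: both set $\Xi_{s,t}=Y_sX_{s,t}+Y'_s\mathbb X_{s,t}$, compute $\delta\Xi_{s,u,t}=-R^Y_{s,u}X_{u,t}-Y'_{s,u}\mathbb X_{u,t}$ via Chen's relation, use the subadditivity of $x\mapsto x^{p/3}$ (equivalently your superadditivity of $x\mapsto x^{3/p}$) to get $|\delta\Xi|\le w^{3/p}$, verify via Lemma~\ref{lemma:productControls} that the stated $w$ is a control, and then invoke a sewing-type result. The paper simply cites Proposition~3.5 of \cite{armstrong2020option} for existence and the bound and Theorem~4.10 of \cite{frizhairer2020} for continuity, whereas you spell out the sewing lemma and the continuity estimate in more detail; otherwise the arguments coincide.
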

When we wish to highlight the role of the Gubinelli derivative we will use the notation $\left((Y,Y')\cdot(X,\mathbb{X})\right):=\int Yd\mathbf{X}$.

The rough integral allows one to define rough differential equations, and
prove existence of uniqueness of solutions subject to appropriate assumptions
on the coefficients, see \cite{frizhairer2020}. The solution map for rough differential equations is continuous in the driving noise in the rough-path
topology.

\section{Gamma Hedging}
\label{sec:gammaHedging}

In this section we will state and prove our main result, which is to define the gamma-hedging strategy and to estimate its error. For simplicity, we will consider a financial market with risk-free rate $0$.

\begin{theorem}
\label{thm:gammaHedging}
Let $(S,{\mathbb S}) \in \mathcal{C}^{p\text{-var}}\left([0,T],\V\right)$ be a rough path representing a pricing signal.

Assume that the market contains $n$ financial instruments
that we will use for hedging and that the price $V^i_t$ of instrument $i$ ($1\leq i \leq n$) at each time $t$ satisfies the rough-integral formula
\begin{equation}
\label{eq:representableAsIntegral}
V^i_t = V^i_0 - \correction{\int_0^t \Delta^i_u \cdot m_u \, du} + \left((\Delta^i_u, \Gamma^i_u) \cdot (S,{\mathbb S}) \right)_{0,t}
\end{equation}
where \correction{$m \in C^0([0,T];  \V)$, the first integral is a Riemann integral}, and $(\Delta^i, \Gamma^i) \in \mathcal{D}_X^{\correction{p}\text{-var}}\left([0,T];{\cal L}(\V,\mathbb{R})\right).$

Suppose that $V^0_t$ also satisfies equation \eqref{eq:representableAsIntegral} and that we can find bounded quantities
$q^i_t$ such that for all $t\in[0,T]$
\begin{equation} \label{eq: gamma hedging condition}
    \begin{cases}
        \sum_{i=1}^n q^i_t \Delta^{i}_t &= \Delta^{0}_t, \\
        \sum_{i=1}^n q^i_t \Gamma^{i}_t &= \Gamma^{0}_t.
     \end{cases}
\end{equation}

Let $\pi \subset [0,T]$ be a grid of time points. Suppose
that a trader has initial wealth $V^0_0$ and that they
ensure that they holds $q^i_t$ units
of each hedging instrument $i$ at each time $t \in \pi$, and placing the rest of
their wealth in the risk-free account. This is the discrete-time gamma-hedging strategy. If $\Pi^\correction{\pi}_t$ denotes the wealth of the
trader at each time then $|\Pi^{\correction{\pi}}_T - V^0_T| \to 0$ as $\|\pi\|\to 0$.
\end{theorem}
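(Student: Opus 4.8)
The plan is to expand the terminal wealth of the discrete-time strategy as a partition sum, insert the representation \eqref{eq:representableAsIntegral} for each traded instrument, replace each increment by its first-order expansion using the Gubinelli estimate \eqref{eq: bound on rough integral}, and then apply the hedging identities \eqref{eq: gamma hedging condition} pointwise so that the sum collapses onto the representation \eqref{eq:representableAsIntegral} of $V^0_T$. Writing $\pi=\{0=t_0<t_1<\dots<t_N=T\}$, the self-financing condition with zero interest rate gives, by an induction over the rebalancing times,
\[
\Pi^\pi_T \;=\; V^0_0 \;+\; \sum_{[t_j,t_{j+1}]\in\pi}\ \sum_{i=1}^n q^i_{t_j}\bigl(V^i_{t_{j+1}}-V^i_{t_j}\bigr).
\]

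Next I would treat each increment separately. Using \eqref{eq:representableAsIntegral} and writing $M_{s,t}:=\int_s^t m_u\,du$, I would freeze the coefficient of the Riemann part at the left endpoint, $\int_{t_j}^{t_{j+1}}\Delta^i_u\cdot m_u\,du=\Delta^i_{t_j}\cdot M_{t_j,t_{j+1}}+\rho^i_j$ with $\lvert\rho^i_j\rvert\le \lVert m\rVert_\infty(t_{j+1}-t_j)\sup_{u\in[t_j,t_{j+1}]}\lvert\Delta^i_u-\Delta^i_{t_j}\rvert$, and expand the rough part by \eqref{eq: bound on rough integral}, $\bigl((\Delta^i,\Gamma^i)\cdot(S,\mathbb S)\bigr)_{t_j,t_{j+1}}=\Delta^i_{t_j}S_{t_j,t_{j+1}}+\Gamma^i_{t_j}\mathbb S_{t_j,t_{j+1}}+E^i_j$ with $\lvert E^i_j\rvert\le C(p)\,w_i(t_j,t_{j+1})^{3/p}$ for the control $w_i$ appearing there. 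Substituting and applying \eqref{eq: gamma hedging condition} at the single time $t_j$, so that $\sum_i q^i_{t_j}\Delta^i_{t_j}=\Delta^0_{t_j}$ and $\sum_i q^i_{t_j}\Gamma^i_{t_j}=\Gamma^0_{t_j}$, the frozen first-order terms collapse onto the target instrument and the multipliers $q^i_{t_j}$ drop out of them, leaving
\[
\Pi^\pi_T-V^0_0 \;=\; \sum_{[t_j,t_{j+1}]\in\pi}\Bigl(-\Delta^0_{t_j}M_{t_j,t_{j+1}}+\Delta^0_{t_j}S_{t_j,t_{j+1}}+\Gamma^0_{t_j}\mathbb S_{t_j,t_{j+1}}\Bigr)\;+\;\sum_{[t_j,t_{j+1}]\in\pi}\sum_{i=1}^n q^i_{t_j}\bigl(E^i_j-\rho^i_j\bigr).
\]

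Then I would pass to the limit $\lVert\pi\rVert\to0$ in both sums. In the first sum, $\sum_j\Delta^0_{t_j}M_{t_j,t_{j+1}}\to\int_0^T\Delta^0_u\cdot m_u\,du$ (Riemann sums of the continuous function $\Delta^0\cdot m$, up to an error controlled by its oscillation over intervals of size $\lVert\pi\rVert$), while $\sum_j\bigl(\Delta^0_{t_j}S_{t_j,t_{j+1}}+\Gamma^0_{t_j}\mathbb S_{t_j,t_{j+1}}\bigr)\to\bigl((\Delta^0,\Gamma^0)\cdot(S,\mathbb S)\bigr)_{0,T}$ directly from the definition \eqref{eq: definition of rough integral} of the rough integral applied to the controlled path $(\Delta^0,\Gamma^0)$. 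For the second sum, boundedness of the $q^i$ bounds it by $\bigl(\max_i\sup_t\lvert q^i_t\rvert\bigr)$ times $\sum_{i=1}^n\bigl(C(p)\sum_{j} w_i(t_j,t_{j+1})^{3/p}+\lVert m\rVert_\infty T\,\varpi_i(\lVert\pi\rVert)\bigr)$, where $\varpi_i(\delta):=\sup\{\lvert\Delta^i_u-\Delta^i_s\rvert:\lvert u-s\rvert\le\delta\}\to0$ by uniform continuity of $\Delta^i$ on $[0,T]$. Each $w_i$ is a control function: by Lemma \ref{lemma:productControls} each of its two summands is a product of two control functions raised to the powers $\tfrac13$ and $\tfrac23$ (whose reciprocals sum to $\tfrac92>1$), and the sum of two controls is a control; since $p<3$ we have $3/p>1$, so $\sum_j w_i(t_j,t_{j+1})^{3/p}\to0$ by \eqref{eqn:controlLimit}. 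Hence the second sum vanishes in the limit, and comparing with \eqref{eq:representableAsIntegral} for $V^0$ gives $\Pi^\pi_T\to V^0_0-\int_0^T\Delta^0_u\cdot m_u\,du+\bigl((\Delta^0,\Gamma^0)\cdot(S,\mathbb S)\bigr)_{0,T}=V^0_T$.

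The one point I expect to need genuine care is that the $q^i$ are only assumed bounded — not continuous, nor of bounded variation — so that a sum of the form $\sum_j q^i_{t_j}(V^i_{t_{j+1}}-V^i_{t_j})$ does not on its own converge to any integral against $q^i$. The device that sidesteps this, and the real content of the proof, is to invoke \eqref{eq: gamma hedging condition} \emph{before} taking any limit: once the frozen first-order increments have been collapsed onto the target option, the $q^i$ survive only as bounded factors multiplying quantities that are $o(1)$ uniformly over the partition (the controlled-rough-integral remainders $E^i_j$ and the frozen-drift remainders $\rho^i_j$), so their boundedness alone suffices. Everything else is the routine bookkeeping of verifying that each $w_i$ is a control and that the surviving terms are genuine Riemann sums and rough-integral Riemann sums.
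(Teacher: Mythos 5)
Your proposal is correct and follows essentially the same route as the paper's proof: expand the terminal wealth as a partition sum, insert the representation \eqref{eq:representableAsIntegral}, apply the hedging identities \eqref{eq: gamma hedging condition} at the left endpoints so that the bounded $q^i$ survive only against the Gubinelli remainders and the frozen-drift remainders, and kill those via \eqref{eqn:controlLimit} and the oscillation of $\Delta^i\cdot m$. The only cosmetic difference is that the paper sets $q^0=-1$ and folds $V^0$ into the same sum so the first-order terms cancel identically, whereas you keep $V^0$ separate and show the collapsed sums converge to its integral representation — the two bookkeeping choices are equivalent.
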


\begin{proof}

We will allow $C$ to vary line by line our proof in the usual way.
\correction{We will also write $q^0=-1$.}
Let $[s,t]\in\pi$ be an interval of our discrete trading grid. The increment of our wealth from $s$ to $t$, call it $\Pi^\pi_{s,t}$ satisfies
	\begin{equation*}
	\Pi^\pi_{s,t} = \sum_{i=1}^n q^i_sV^{i}_{s,t} 
	\end{equation*}
	Hence
    \begin{align}
        \lvert \Pi^{\pi}_T-\correction{V^0_T}\rvert &= \left\lvert \Pi^\pi_{0,T}-(V^0_T-V^0_0)\right\rvert \nonumber \\ 
        &= \left\lvert \sum_{[s,t]\in\pi}\Pi^\pi_{s,t} + \correction{\int_0^T \Delta^{0} \cdot m\, d u} - \int_0^T \Delta^{0}\cdot d\mathbf{S}
        \right\rvert \nonumber \\
        &= \left\lvert \sum_{[s,t]\in\pi} \left( \sum_{i=1}^n q^i_s V^{i}_{s,t} + \correction{\int_s^t \Delta^{0} \cdot m \, du} - \int_s^t \Delta^{0} \cdot d\mathbf{S} \right) \right\rvert \nonumber \\
        &= \Bigg\lvert \correction{\sum_{[s,t]\in\pi}} \sum_{i=0}^{n} q^i_s \left(
        \correction{-\int_s^t \Delta^{i} \cdot m \, du} + 
        \int_s^t \Delta^{i} \cdot d\mathbf{S}
        \right)
        \Bigg\rvert, \label{eq:errorEstimate}
    \end{align}
    where we have used equation \eqref{eq:representableAsIntegral} in the last line. \correction{We first bound the rough-integral term in this expression.} Using (\ref{eq: gamma hedging condition}) and the triangle inequality:
    \begin{align*}
        \left\lvert \sum_{i=0}^{n} q^i_s \int_s^t \Delta^{i}\cdot d\mathbf{S} \right\rvert 
        &= \left\lvert \sum_{i=0}^{n} q^i_s \int_s^t \Delta^{i}\cdot d\mathbf{S} - \sum_{i=0}^{n} q^i_s\left(\left(\Delta^{i}_s\right){S}_{s,t}+\Gamma^{i}_s\mathbb{S}_{s,t}\right) \right\rvert \\
        &\leq \sum_{i=0}^{n}\lvert q^i_s \rvert\left\lvert \int_s^t \Delta^{i} \cdot d\mathbf{S} - \left(\left(\Delta^{i}_s\right){S}_{s,t}+\Gamma^{i}_s{\correction{\mathbb{S}}}_{s,t}\right) \right\rvert \\
        &\leq C\left(\max_{0\leq i\leq {n}}\lVert q^i\rVert_{\infty} \right) \sum_{i=0}^n \correction{w_i(s,t)^{\frac{3}{p}}} 
    \end{align*}
    where $w_i$ denotes the control function defined in Gubinelli's theorem for the integral of $\Delta^i$.
    \begin{correctionenv}    
    Since $p<3$ and $w_i$ is a control function,
    we may use the estimate
    \eqref{eqn:controlLimit}
    to see that the size of the rough-integral terms in \eqref{eq:errorEstimate} tends to 0 as $\|\pi\|\to 0$.
    
    We now bound the Riemann integral in equation \eqref{eq:errorEstimate} in a similar fashion. Using (\ref{eq: gamma hedging condition}) as before we find
    \begin{align*}
        \left\lvert \sum_{i=0}^{n} q^i_s \int_s^t \Delta^{i}\cdot m \, du \right\rvert    
&\leq \left\lvert \sum_{i=0}^{n} q^i_s  \int_s^t \Delta^{i} \cdot m \, du
-
\sum_{i=0}^{n} q^i_s \left(\Delta^{i}_s \cdot m_s \right)(t-s) \right\rvert  \\             
&\leq \sum_{i=0}^{n}\lvert q^i_s \rvert\left\lvert \int_s^t \Delta^{i} \cdot m \, du - \left(\Delta^{i}_s \cdot m_s \right)(t-s) \right\rvert         \\
&\leq C \sum_{i=0}^{n}\lvert q^i_s \rvert\, \osc( \Delta^i \cdot m , \pi ) |t-s|.
    \end{align*}
    We have used bounds on the Riemann integral to obtain the final line. Hence
    \begin{align*}
    \left\lvert \sum_{[s,t]\in\pi} \sum_{i=0}^{n} q^i_s \int_s^t \Delta^{i}\cdot m \, du \right\rvert
    &\leq  C T \max_{0\leq i\leq {n}}\lVert q^i\rVert_{\infty}\, \osc( \Delta^i \cdot m , \pi )    \\
    &\to 0 \text{ as } \|\pi\|\to 0.
    \end{align*}    
This completes the proof.
\end{correctionenv}
\end{proof}

\begin{correctionenv}
\begin{remark}
Note that we can weaken the second equation in \eqref{eq: gamma hedging condition}
by only requiring that there exists a lift ${\mathbb S}_{s,t}$ such that
\begin{equation*}
    \sum_{i=1}^n q^i_s \Gamma^{i}_s \, {\mathbb S}_{s,t}= \Gamma^{0}_s \, {\mathbb S}_{s,t} \qquad \forall\, 0<s<t<T.
\end{equation*}
This may make it possible to reduce the number of hedging instruments required.
\end{remark}
\end{correctionenv}

According to the terminology of Theorem \ref{thm:gammaHedging}, the gamma-hedging strategy is only defined when equation \eqref{eq:representableAsIntegral} holds for all the hedging assets, and this equation can be taken as defining the meaning of $\Delta$ and $\Gamma$. We will see how these can be related to the familiar Greeks of European options in Section \ref{sec:european}. This will justify our notation and choice of terminology. We will consider path-dependent derivatives in Section \ref{sec:pathDependent}.

In Theorem \ref{thm:gammaHedging} the signal $(S, {\mathbb S})$ need not be traded. We simply need that such a rough path exists. However, one specific application we have in mind is when some component of the trace $S$ represents the path of an underlying and the $V^i$ represent the price of derivatives on this underlying. If $e \in {\cal L}(\V, {\mathbb R})$ is a coordinate functional identifying such a component, then ${\mathbf e}:=(e,0)$ is an $S$-controlled rough path and satisfies equation \eqref{eq:representableAsIntegral} if we take $\Delta=e$,  $\Gamma=0$\correction{, and $m \in {\ker e}$.}

\section{European options}
\label{sec:european}

In this section we will consider the case when all the assets of
Theorem \ref{thm:gammaHedging} are intended to represent European derivatives
on $S$ with payoff $f^i(S_T)$. In this case we will be able to find conditions when equation \eqref{eq:representableAsIntegral} holds using the results of
\cite{armstrong2020option} which we now summarize.

Let $\sigma:\mathbb{R}^{d}\rightarrow \mathbb{R}^{d\times d}$ be a smooth function.
Suppose that the prices of the hedging instruments at time $t$ are given by smooth functions $V^i(S_t,t)$ satisfying the \correction{Feynman--Kac} PDE:
\begin{equation}
\frac{\partial V^i}{\partial t} + \correction{\sum_{\alpha=1}^d \mu_\alpha \frac{\partial V^i}{\partial S^\alpha}} + \frac{1}{2}\sum_{1\leq \correction{\alpha,\beta}\leq d} (\sigma \sigma^\top)_{\correction{\alpha,\beta}}\frac{\partial^2 V^i}{\partial S^\correction{\alpha} \partial S^\correction{\beta}} = 0
\label{eq:blackScholesPDE}
\end{equation}
with boundary condition $V^i(S_T,T)=f^i(S_T).$ Suppose also that $(S,{\mathbb S})$ is a rough path. We can combine this with the smooth path $t$ with enhancer $0$ to obtain a rough path $(X, {\mathbb X})$. Letting $D$ denote the gradient operator, $(DV(X), D^2 V(X)$ will then be an $X$-controlled rough path.
Applying the rough-path It\^o's Lemma \cite[Proposition 5.8]{frizhairer2020} to  ${\mathbf X}$ we find:
\begin{equation}
\label{eq:roughIto}
V^i(S_t,t)=V^i(S_0) + \int_0^t D V^i \cdot d S
+ \int_0^t \frac{\partial V^\correction{i}}{\partial t} \cdot d \correction{u}
+ \int_0^t \correction{\frac{1}{2}} D^2 V^i d [\mathbf S]_\correction{u}.
\end{equation}
Here $D$ denotes the gradient operator and the
{\em rough bracket} $[\mathbf S]_t$ is defined by
\[
[\mathbf{S}]_t = S_{0,t} \otimes S_{0,t}-\correction{2}\sym({\mathbb S}_{0,t}),
\]
where $\sym$ denotes tensor symmetrization. In equation \correction{\eqref{eq:roughIto}} the first integral is a rough-path integral, but the last two integrals are Young integrals. 

If we can assume that
\begin{equation}
\label{eq:bracketCondition}
[\mathbf{S}]_t = \int_0^t \sigma(S_u, \correction{u}) \sigma(S_u, \correction{u})^\top \, du
\end{equation}
equations \eqref{eq:blackScholesPDE} and \correction{\eqref{eq:roughIto}} will together imply
equation \eqref{eq:representableAsIntegral} when \correction{$m_t=\mu(S_t,t)$}, $\Delta^{\correction{i}}=D V^i$ and $\Gamma^{\correction{i}}=D^2 V^i$.

It is not obvious that, given a path $S$ of finite $p$-variation, we can
then find a
lift ${\mathbb S}$ such that equation \eqref{eq:bracketCondition} holds.
However, we note that in the rough-path \correction{It\^o's} lemma, equation \eqref{eq:roughIto},
the integrand in the \correction{final} integral is the symmetric tensor $D^2 V$.
As a result, we do not need to identify a full lift for $S$, only for the symmetric part \correction{of the lift}. This motivates the definition of reduced rough paths.

\begin{definition}[Reduced rough paths \cite{frizhairer2020}]
$\mathbf{X}=(X,\mathbb{X})\in\mathcal{C}^{p\text{-var}}([0,T];\V)$ is said to be a \emph{reduced rough path} if the lift $\mathbb{X}$ takes values in $\text{Sym}(\V\otimes \V)$ and satisfies the reduced Chen relation
\begin{equation*}
    \mathbb{X}_{s,t}-\mathbb{X}_{s,u}-\mathbb{X}_{u,t} = \text{Sym}\left(X_{s,u}\otimes X_{u,t}\right),\forall s,u,t\in[0,T].
\end{equation*}
\end{definition}

So long as one is willing to restrict attention  to integrands with symmetric
Gubinelli derivatives, one can then define a theory of rough integration for
reduced rough paths and \correction{It\^o's} lemma, equation \eqref{eq:roughIto}, will still
hold, as will a version of Theorem \eqref{thm:gammaHedging}.

The advantage of reduced rough paths is that their existence theory
is trivial: given $S \in C^{p-\text{var}}([0,T];\V)$ and
$\gamma \in C^{\frac{p}{2}-\text{var}}([0,T]; \text{Sym}(\V \otimes \V))$ we may
define a reduced rough-path with bracket $\gamma_t$ by
\begin{equation}
{\mathbb S}_{s,t}=\frac{1}{2}(S_{s,t}\otimes S_{s,t} + \gamma_{s,t}).
\label{eq:enhancerFromBracket}
\end{equation}
In particular, for any $S$ of finite $p$-variation for $p \in (2,3)$
we can find a reduced rough path $(S, {\mathbb S})$ satisfying equation
\eqref{eq:bracketCondition}, which then implies that equation
\eqref{eq:representableAsIntegral} holds.

Thus if the market prices for the hedging instruments all satisfy
\eqref{eq:blackScholesPDE}, then we can replicate the payoff $f^0_T$ by charging $V^0_0$ and following the gamma-hedging strategy, where $V^0_t$ is
the solution to \eqref{eq:blackScholesPDE} with the appropriate terminal condition.

Our treatment so far has not mentioned probability theory, but we
now note that the prices $V^i_t$ have a clear probabilistic interpretation.

Let
$\correction{\tilde{S}}_t$ follow a
$d$-dimensional It\^o diffusion given by \eqref{eq:diffusionSDE}. Then by the Feynman--Kac theorem, if
\[
V^i(S,t):=E_t(f^i(\correction{\tilde{S}}_T) \, \mid \, \correction{\tilde{S}}_t=S)
\]
then $V^i$ will satisfy \eqref{eq:blackScholesPDE}.

Combining these observations we have proved Theorem \ref{thm:europeanOptionsTheorem} from the introduction.

\subsection{Calibration to the Black--Scholes model}
\label{subsec:calibration}

\begin{correctionenv}
As an application of our results we will consider the options market as it existed
before the stock market crash of October 1987. Before the crash, there was no
pronounced volatility smile, but it has subsequently become a persistent feature of the market \cite{rubinstein1985nonparametric,rubinstein1994implied}.

Let us suppose that we have $n$ call options on a stock, $S$. All
the options have maturity $T$, and they have strikes $K^i$ ($i\leq 1 \leq n$).
We will suppose that at each time there
is a single implied volatility $\sigma_t$ such that the price of each option is given by 
\begin{equation}
V^i_t = \text{BS}(S_t, K^i, \sigma_t, T-t)
\label{eq:blackScholesPricingModel}
\end{equation}
where $\text{BS}$ is the Black--Scholes pricing formula for call options with $r=0$.
This is the simplest form of calibration of an option pricing model, as we use only one calibration parameter $\sigma$.

We will consider the pair $(S_t,\sigma_t)$ to be our signal. We note that the price
$\text{BS}(\hat{S}, K^i, \hat{\sigma}, T-\hat{t})$
of our call options is equal to the risk-neutral prices in the diffusion model
\begin{equation}
\begin{split}
d \tilde{S}_t &= \tilde{\sigma}_t \, d W^1_t \\
d\tilde{\sigma}_t&= 0
\label{eq:blackScholesModel}
\end{split}
\end{equation}
with initial condition $(\tilde{S}_{\hat{t}},\tilde{\sigma}_{\hat{t}})=(\hat{S},\hat{\sigma})$. Since the
Black--Scholes formula is smooth for all $t<T$,
Theorem \ref{thm:europeanOptionsTheorem} applies.

If $V^i$ represent the prices of any European derivatives in the Black--Scholes model then the
maximal rank of the matrix
\begin{equation}
\left(\frac{\partial V^i}{\partial S},
\frac{\partial V^i}{\partial \sigma},
\frac{\partial^2 V^i}{\partial S^2},
\frac{\partial^2 V^i}{\partial S \partial \sigma},
\frac{\partial^2 V^i}{\partial \sigma^2}\right)_{1\leq i\leq n}
\label{eq:greeksMatrix}
\end{equation}
is $4$. To see this, let $q_{S_0,\sigma,\tau}(u)$ denote the probability density at time $\tau$ for a geometric Brownian motion with drift 0, constant volatility $\sigma$ and initial condition $S_0$. The Black--Scholes price at time $t$ of a derivative with payoff $f(S_T)$ is
\begin{equation}
\mathrm{BS}^f(S_t, \sigma_t, \tau) :=
\int_0^\infty f(u) q_{S_t,\sigma,\tau}(u) \, du.
\label{eq:priceAsIntegral}
\end{equation}
where $\tau=T-t$ is the time to maturity.
The scaling behaviour of geometric Brownian motion under time dilations ensures that $q(S_0,\frac{1}{\alpha} \sigma, \alpha^2 \tau)=q(S_0,\sigma,\tau)$ for all $\alpha>0$. Taking partial derivatives of this equation with respect to $\alpha$ then setting $\alpha=1$ yields the identity
\[
2 \tau \frac{\partial q}{\partial \tau} - \sigma \frac{\partial q}{\partial \sigma} = 0.
\]
The Fokker--Planck equation for geometric Brownian motion gives
\[
\frac{\partial q}{\partial \tau}=\frac{1}{2}S^2 \sigma^2 \frac{\partial^2 q}{\partial S^2}.
\]
We deduce that
\[
\sigma \tau S^2 \frac{\partial^2 q}{\partial S^2} = \frac{\partial q}{\partial \sigma}.
\]
Hence we obtain from the integral formula \eqref{eq:priceAsIntegral} that
\begin{equation}
\sigma \tau S^2 \frac{\partial^2 V^i}{\partial S^2} = \frac{\partial V^i}{\partial \sigma}.
\label{eq:vegaProportionalToGamma}
\end{equation}

Therefore, given $t<T$, assuming the signal $(S,\sigma)$ has finite $p$-variation for $p<3$ and
that the matrix \eqref{eq:greeksMatrix} has rank 4 for all times up to and including $t$, we can
replicate the derivative payoff 
$BS^f(S_t,\sigma_t,T-t)$
on our signal, by charging the Black--Scholes price for the European option with payoff $f$
and maturity $T$ at time 0 and then pursuing the full gamma-hedging strategy of matching all the first and second order derivatives of the price with respect to $S$ and $\sigma$.

We call this the full gamma-hedging
strategy to contrast with the classical gamma-hedging strategy where only the partial derivatives with respect to $S$ are matched. If we are willing to assume that $\sigma$ is sufficiently smooth, the classical gamma-hedging strategy will suffice, as the next result shows.

\begin{theorem}
\label{thm:blackScholesGamma}
Suppose that a European option on an underlying $S$ with maturity $T$ and convex, non-linear, payoff $f_1(S_T)$ is traded at all times for the price $\mathrm{BS}^{f_1}(S_t,\sigma_t,T-t)$ where $S_t \in C^{\pvar}([0,T],{\mathbb R}^{+})$, $\sigma_t \in C^{\qvar}([0,T],{\mathbb R}^{+})$
with $p<3$, $q<2$ and $\frac{1}{p}+\frac{1}{q}>1$. Given a smooth function $f_0$ and $t \in [0,T)$,  the payoff $BS^{f_0}(S_t,\sigma_t,T-t)$
can be replicated at the Black--Scholes price in the sense that the profit or loss of the discrete-time classical gamma-hedging strategy tends to zero as the mesh tends to zero. Here the classical gamma-hedging strategy is to be understood as matching only the derivatives $\frac{\partial V}{\partial S}$ and $\frac{\partial^2 V}{\partial S^2}$ at each time.
If $f_1^{\prime\prime}(S_T)>0$ for all $S_T>0$ then it is possible
to replicate the payoff $f_0$ at maturity.
\end{theorem}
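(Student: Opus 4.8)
The plan is to run a gamma-hedging strategy with exactly two hedging instruments --- the stock $S$ and the traded option with payoff $f_1$ --- and to use the vega--gamma identity \eqref{eq:vegaProportionalToGamma} to show that matching only the first two $S$-derivatives already annihilates every dangerous term in the error estimate. Fix $t_0\in[0,T)$, write $v^i:=\mathrm{BS}^{f_i}$ and $V^i_t:=v^i(S_t,\sigma_t,T-t)$ for $i=0,1$, and note that, $S$ and $\sigma$ being continuous, $(S_t,\sigma_t,T-t)$ stays in a compact set $\mathcal K\subset(0,\infty)^2\times[T-t_0,T]$ for $t\in[0,t_0]$. First I would fix the holdings
\[
q^1_t:=\frac{\partial_S^2 v^0(S_t,\sigma_t,T-t)}{\partial_S^2 v^1(S_t,\sigma_t,T-t)},\qquad q^S_t:=\partial_S v^0(S_t,\sigma_t,T-t)-q^1_t\,\partial_S v^1(S_t,\sigma_t,T-t),
\]
so the classical gamma-hedge relations $q^S_t+q^1_t\partial_S v^1_t=\partial_S v^0_t$ and $q^1_t\partial_S^2 v^1_t=\partial_S^2 v^0_t$ hold. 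I would then check boundedness: writing $v^1(S_0,\sigma,\tau)=\mathbb E[f_1(S_0 M_\tau)]$ with $M_\tau$ a lognormal of full support, convexity and non-affineness of $f_1$ give $\partial_S^2 v^1(S_0,\sigma,\tau)=\mathbb E[M_\tau^2 f_1''(S_0 M_\tau)]>0$ for $\tau>0$; as $\partial_S^2 v^1$ is continuous on $\mathcal K$ it is bounded below there by a positive constant, while $\partial_S v^i,\partial_S^2 v^i$ are bounded on $\mathcal K$, so $q^1,q^S$ are bounded.

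Next I would put a rough structure on the signal. As in Section~\ref{sec:european} I would enhance the joint path $Z=(S,\sigma)$ to a rough path $\mathbf Z=(Z,\mathbb Z)$: the $SS$-block is a reduced lift of $S$ chosen (via \eqref{eq:enhancerFromBracket}--\eqref{eq:bracketCondition}) so that the bracket $[\mathbf Z]^{SS}_t=\int_0^t\sigma_u^2 S_u^2\,du$; the $\sigma\sigma$-block is $\mathbb Z^{\sigma\sigma}_{s,t}=\tfrac12\sigma_{s,t}^2$; and the cross blocks are the Young integrals $\mathbb Z^{S\sigma}_{s,t}=\int_s^t S_{s,u}\,d\sigma_u$, $\mathbb Z^{\sigma S}_{s,t}=\int_s^t\sigma_{s,u}\,dS_u$, which exist since $\tfrac1p+\tfrac1q>1$. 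That same condition together with Lemma~\ref{lemma:productControls} is exactly what makes the cross blocks of finite $\tfrac p2$-variation, so $\mathbf Z$ is a genuine $p$-variation rough path over $\mathbb R^2$, and $[\mathbf Z]$ has only the $SS$-entry nonzero. Combining $\mathbf Z$ with the calendar-time path under the zero lift and applying the $p$-variation rough It\^o formula \cite[Proposition 5.8]{frizhairer2020} to $v^i$, the term $\int_0^t\partial_t v^i\,du$ cancels $\tfrac12\int_0^t\partial_S^2 v^i\,d[\mathbf Z]^{SS}$ by the Black--Scholes PDE \eqref{eq:blackScholesPDE} with zero drift, giving
\[
V^i_t=V^i_0+\int_0^t Dv^i\,d\mathbf Z,\qquad Dv^i=(\partial_S v^i,\partial_\sigma v^i),
\]
an instance of \eqref{eq:representableAsIntegral} with $m\equiv0$, $\Delta^i=Dv^i$, $\Gamma^i=D^2 v^i$; the stock itself satisfies \eqref{eq:representableAsIntegral} with $\Delta=(1,0)$, $\Gamma=0$.

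Then I would run the error estimate as in the proof of Theorem~\ref{thm:gammaHedging} (this is not a literal application, since only part of $\Delta$ and $\Gamma$ are matched). Telescoping the self-financing wealth over a grid $\pi\subset[0,t_0]$ and expanding each rough integral locally as $\int_s^t Dv^i\,d\mathbf Z=Dv^i_s Z_{s,t}+D^2 v^i_s\mathbb Z_{s,t}+O(w_i(s,t)^{3/p})$, the coefficient of $Z_{s,t}$ vanishes: its $S$-part is $q^S_s+q^1_s\partial_S v^1_s-\partial_S v^0_s=0$ and its $\sigma$-part is $q^1_s\partial_\sigma v^1_s-\partial_\sigma v^0_s=\sigma_s(T-s)S_s^2\bigl(q^1_s\partial_S^2 v^1_s-\partial_S^2 v^0_s\bigr)=0$ by \eqref{eq:vegaProportionalToGamma} and gamma matching. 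In the coefficient of $\mathbb Z_{s,t}$, the $\mathbb Z^{SS}$-part cancels by gamma matching, leaving only the $\partial^2_{\sigma\sigma}$ and $\partial^2_{S\sigma}$ Hessian entries against $\mathbb Z^{\sigma\sigma}_{s,t}=\tfrac12\sigma_{s,t}^2$ and $\mathbb Z^{S\sigma}_{s,t}+\mathbb Z^{\sigma S}_{s,t}=S_{s,t}\sigma_{s,t}$; these have bounded coefficients and, as $\lVert\pi\rVert\to0$,
\[
\sum_{[s,t]\in\pi}|\sigma_{s,t}|^2\le\Bigl(\sup_{[s,t]\in\pi}|\sigma_{s,t}|\Bigr)^{2-q}\lVert\sigma\rVert_{\qvar;[0,T]}^q\to0,\qquad \sum_{[s,t]\in\pi}|S_{s,t}\sigma_{s,t}|\le\lVert S\rVert_{\pvar;[0,T]}\Bigl(\sum_{[s,t]\in\pi}|\sigma_{s,t}|^{p'}\Bigr)^{1/p'}\to0,
\]
the second by H\"older with $p'=\tfrac p{p-1}$, which satisfies $p'>q$ precisely because $\tfrac1p+\tfrac1q>1$. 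The remainders $O(w_i(s,t)^{3/p})$ sum to $0$ by \eqref{eqn:controlLimit} since $p<3$. Hence $\lvert\Pi^\pi_{t_0}-V^0_{t_0}\rvert\to0$, which is the first assertion.

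For the last assertion I would take $t_0=T$. The only new issue is keeping $q^1$ bounded up to $\tau=0$, and this is exactly where $f_1''>0$ on $(0,\infty)$ is used: then $\partial_S^2 v^1\to f_1''>0$ uniformly on the compact range of $S$ as $\tau\to 0$, so together with positivity for $\tau>0$, $\partial_S^2 v^1$ is bounded below on all of $\mathcal K$; since $v^0,v^1$ remain $C^3$ up to $\tau=0$ for smooth $f_0,f_1$, the representation and estimate above are valid on $[0,T]$, and $V^0_T=f_0(S_T)$. I expect the main obstacle to be the mixed rough/Young bookkeeping of the second step --- assembling $\mathbf Z$ and, above all, controlling the residual term $\sum|S_{s,t}\sigma_{s,t}|$, which is what forces the hypothesis $\tfrac1p+\tfrac1q>1$. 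The conceptual heart, by contrast, is the identity \eqref{eq:vegaProportionalToGamma}: it makes the $\sigma$-delta matching automatic once the $S$-gamma is matched, so no trade in a vega instrument is ever required.
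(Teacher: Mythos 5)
Your argument is correct and rests on the same two pillars as the paper's proof --- the vega--gamma identity \eqref{eq:vegaProportionalToGamma} and the condition $\tfrac1p+\tfrac1q>1$ controlling the cross-variation of $S$ and $\sigma$ --- but its technical organization is genuinely different. The paper also builds the two-dimensional lift of $(S,\sigma)$ and applies the rough It\^o formula, but it then converts the representation into a rough integral in $S$ alone plus a separate Young integral in $\sigma$ (equations \eqref{eqn:roughIntegral2dB}--\eqref{eq:representableAsIntegral2}); to make sense of the $S$-integral it has to introduce controlled paths of $(p,r)$-variation regularity and the corresponding version of Gubinelli's theorem, and the final error estimate combines the Theorem \ref{thm:gammaHedging} argument with a Young--Lo\`eve bound on the $\sigma$-term. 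You instead keep everything as a single two-dimensional rough integral against $\mathbf Z$ --- noting correctly that $(Dv^i,D^2v^i)$ is $Z$-controlled in the ordinary $p$-variation sense because $|\sigma_{s,t}|^2$ and $|S_{s,t}\sigma_{s,t}|$ already have finite $\tfrac p2$-variation --- and perform the cancellation of the unwanted $\partial^2_{\sigma\sigma}$ and $\partial^2_{S\sigma}$ terms directly in the telescoped discrete error, via $\sum|\sigma_{s,t}|^2\to0$ and the H\"older estimate on $\sum|S_{s,t}\sigma_{s,t}|$. (Note that by Young integration by parts your cross blocks satisfy $\mathbb Z^{S\sigma}_{s,t}+\mathbb Z^{\sigma S}_{s,t}=S_{s,t}\sigma_{s,t}$, so the symmetric part of your lift coincides with the paper's reduced lift; the two constructions are interchangeable.) What your route buys is the avoidance of the $(p,r)$-variation machinery entirely; what it costs is that you must redo the telescoping by hand rather than quote Theorem \ref{thm:gammaHedging}, which you acknowledge and execute correctly.

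One small gap: to bound $q^1$ below you write $\partial_S^2 v^1(S_0,\sigma,\tau)=\mathbb E[M_\tau^2 f_1''(S_0M_\tau)]>0$, which presumes $f_1\in C^2$, whereas the first assertion of the theorem only assumes $f_1$ convex and non-linear. The conclusion (strict positivity of the Black--Scholes gamma for $\tau>0$) is still true, but it needs the argument of the paper's Lemma \ref{lemma:convexity}: the price is a convex combination of the convex functions $S\mapsto f_1(Sv)$, hence convex and smooth for $\tau>0$, and a semigroup argument rules out the gamma vanishing anywhere unless $f_1$ is affine. With that substitution your proof is complete; your treatment of the $t_0=T$ case matches the paper's.
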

\begin{proof}
We will label the option with payoff $f_1$ as instrument $1$ and the stock as instrument $2$ in order to apply the notation from the rest of the paper.

The purpose of our convexity assumption is to ensure that the Black--Scholes gamma of the option $f_1$ is positive at times $t<T$, see Lemma \ref{lemma:convexity} in the appendix for a proof.
Because the gamma of the stock is always zero and its delta is always one, the matrix
\[
\left(
\begin{array}{cc}
\frac{\partial V^1}{\partial S} & \frac{\partial^2 V^1}{\partial S^2} \\
\frac{\partial V^2}{\partial S} & \frac{\partial^2 V^2}{\partial S^2}
\end{array}
\right)
\]
has rank $2$. Hence we we can find bounded quantities $q^1_t$, $q^2_t$ such that
\[
\sum_{i=0}^2 q^i_t \frac{\partial V}{\partial S}^i=0, \qquad \sum_{i=0}^2 q^i_t \frac{\partial^2 V}{\partial S^2}^i=0,
\]
with $q^0_t=-1$ for all times $t\in[0,T)$. With the additional assumption that $f_1^{\prime\prime}(S_T)>0$ for all $S_T>0$, we can find such $q^i_t$ for all $t \in [0,T]$.

Our first aim will be to write an integral formula
for the price using a rough-path integral in $S_t$ and a Young integral in $\sigma_t$.
We achieve this in equation \eqref{eqn:roughIntegral2dB} below. However,
to understand this equation we need to
introduce
a slightly more flexible definition of a controlled rough path to ensure that the
integrand is controlled.

Let $(X,{\mathbb X})$ be a $p$-rough path. Let $r$ be chosen such that $\frac{1}{r} + \frac{2}{p}>1$. Suppose that $(Y,Y^\prime) \in C^{\pvar}([0,T];\hat{\V}) \times C^{r\text{-var}}([0,T];\mathcal{L}(\V;\hat{\V}))$ and that 
$R^Y_{s,t}\coloneqq Y_{s,t}-Y'_sX_{s,t}$ has finite $\left(\frac{1}{p}+\frac{1}{r}\right)^{-1}$-variation
then we will say that $(Y,Y^\prime)$ is $X$-controlled with $(p,r)$-variation regularity. The case $r=p$ corresponds to the notion of controlled
rough path we have used previously. We can define a rough integral for such paths and Gubinelli's theorem still holds with an appropriate choice for the control function \cite{armstrong2020option}.

For our application we want to view the price
process as a controlled rough-path with Gubinelli derivative given by the delta. We
know from the theory when $r=p$ that the remainder
\[
R^{V^i}_{s,t}:=V^{i}_{s,t}-\frac{\partial V^i}{\partial S}\Big|_s S_{s,t} -\frac{\partial V^i}{\partial \sigma}\Big|_s \sigma_{s,t}
\]
has finite $\frac{p}{2}$-variation regularity. The third term has finite $q$-variation regularity. Hence 
$(V^i, \frac{\partial V^i}{\partial S})$ will be an $S$-controlled rough
path of $(p,r)$-variation regularity so long as:
\begin{enumerate}[label=(\roman*)]
\item $\frac{1}{r}+\frac{2}{p}>1$ (from the definition of $(p,r)$-variation regularity);
\item $\frac{1}{p} \geq \frac{1}{r}$ (to ensure $\frac{\partial V^i}{\partial S} \in C^{r\text{-var}}$);
\item $\frac{1}{q} \geq \frac{1}{p}+\frac{1}{r}$ and $\frac{2}{p} \geq \frac{1}{p}+\frac{1}{r}$ (to ensure $R$ is sufficiently regular).
\end{enumerate}
These conditions are equivalent to $\frac{1}{p}\geq \frac{1}{r}>1-\frac{2}{p}$ and $\frac{1}{q}-\frac{1}{p} \geq \frac{1}{r}$. So we can find such an $r$ if $\frac{1}{q}-\frac{1}{p}>1-\frac{2}{p}$, equivalently if $\frac{1}{p}+\frac{1}{q}>1$. Hence we may choose $r$ to ensure that
$(V^i, \frac{\partial V^i}{\partial S})$ is an $S$-controlled rough path.

Take $X_t:=(S_t,\sigma_t)$ and choose the reduced-rough-path lift ${\mathbb{X}_t}$ so that
$\mathbf{X}:=(X, \mathbb{X})$ satisfies
\[
[\mathbf{X}_t] = \left(
\begin{array}{cc}
[\mathbf{S}_t] & 0 \\
0 & 0 
\end{array}
\right)
\]
where $\mathbf{S}=(S, \mathbb{S})$ is the
reduced-rough-path lift of $S$ given by equation
\eqref{eq:bracketCondition}.

By the rough-path It\^o formula we have
\begin{equation}
\begin{split}
V^i&(S_t,\sigma_t,t)=V^i(S_0,\sigma_0,0)
+ \int_0^t \frac{\partial V^\correction{i}}{\partial t} \Big|_u d \correction{u}
\\
&+ 
\left(
\left(
\left(\begin{array}{c}
\frac{\partial V^i}{\partial S} \\
\frac{\partial V^i}{\partial \sigma}
\end{array}
\right),
\left(\begin{array}{c c}
\frac{\partial^2 V^i}{\partial S^2} & \frac{\partial^2 V^i}{\partial S \partial \sigma} 
\\
\frac{\partial^2 V^i}{\partial S \partial \sigma} &
\frac{\partial^2 V^i}{\partial \sigma^2}
\end{array}
\right)
\right)
\cdot
(X, {\mathbb X})
\right)_{0,t}
+ \int_0^t \correction{\frac{1}{2}} \frac{\partial^2 V^i}{\partial S^2} \Big|_u d [\mathbf S]_\correction{u}
\label{eqn:roughIntegral2dA}
\end{split}
\end{equation}
with the final integral being a Young integral. Using equation \eqref{eq:enhancerFromBracket} we may write down the lift ${\mathbb{X}}$.
\begin{equation*}
{\mathbb X}_{s,t}=\frac{1}{2}(S_{s,t}\odot S_{s,t}
+ 2 S_{s,t}\odot \sigma_{s,t} + \sigma_{s,t}\odot \sigma_{s,t}) + [\mathbf X]_{s,t}).
\end{equation*}
where $\odot$ denotes the symmetric tensor product.
Our hypothesis that $\frac{1}{p}+\frac{1}{q}>1$ together with Lemma \ref{lemma:productControls} allows us to write
\[
|S_{s,t}\odot \sigma_{s,t}|
\leq \| S \|_{\pvar;[s,t]} \| \, \sigma \|_{\qvar;[s,t]} \leq w(s,t)^{1+\epsilon}
\]
for some control function $w$ and $\epsilon>0$. Since we also assume $q<2$, We can bound the square of $\sigma_{s,t}$ in the same way. Hence, if we use the definition of the rough integral to write equation
\eqref{eqn:roughIntegral2dA}
as the limit of compensated Riemann sums,
all the second order derivative terms involving differentiation by $\sigma$ will vanish as the mesh tends to 0 (by equation \eqref{eqn:controlLimit}).

It therefore follows from equation \eqref{eqn:roughIntegral2dA} that
\begin{equation}
\begin{split}
V^i(S_t,\sigma_t,t)&=V^i(S_0,\sigma_0,0) + \int_0^t \frac{\partial V^\correction{i}}{\partial t}\Big|_u  d \correction{u} 
\\
&\quad + \int_0^t \frac{\partial V^i}{\partial S} \Big|_u \, d {\mathbf S}_u
+ \int_0^t \frac{ \partial V^i}{\partial S}\Big|_u  \, d \sigma_u
+ \int_0^t \correction{\frac{1}{2}} \frac{\partial^2 V^i}{\partial S^2}\Big|_u  d [\mathbf S]_\correction{u}
\label{eqn:roughIntegral2dB}
\end{split}
\end{equation}
where the second integral is a rough integral but the next two integrals are Young integrals. This is because if one writes
out the integrals in equations \eqref{eqn:roughIntegral2dA}
and \eqref{eqn:roughIntegral2dB} in terms
of their standard expressions as limits of compensated Riemann sums (equation \eqref{eq: definition of rough integral} and \cite[Exercise 6.9]{friz2010}), they
only differ by the vanishing terms.
\medskip 

Since the $V^i$ satisfy
the Black--Scholes PDE we obtain the formula
\begin{equation}
V^i_t = V^i_0 + \int_0^t \frac{\partial V^i}{\partial \sigma}\Big|_u d \sigma_u + \left((\Delta^i, \Gamma^i) \cdot (S,{\mathbb S}) \right)_{0,t}
\label{eq:representableAsIntegral2}
\end{equation}
with the first integral being a Young integral. By equation \eqref{eq:vegaProportionalToGamma} we know that
\[
\sum_{i=0}^2 q^i_t \frac{\partial V^i}{\partial \sigma} = 0
\]
at all times. Hence we will be able to apply the same argument
used to prove Theorem \ref{thm:gammaHedging}
to prove our result.
Since we have equation \eqref{eq:representableAsIntegral2} in place of equation
\eqref{eq:representableAsIntegral},
two modifications are needed. First,
when estimating the rough-path terms, we will need to use the
control function arising from the version of Gubinelli's theorem for controlled rough paths of $(p,r)$-variation regularity.

Second, we will need to bound
a Young-integral term where we previously bounded a Riemann
integral. The additional estimate needed for the Young integral terms is
    \begin{align*}
        \left\lvert \sum_{i=0}^{n} q^i_s \int_s^t \frac{\partial V^i}{\partial \sigma}\Big|_u  \, d\sigma_u \right\rvert    
&\leq \left\lvert \sum_{i=0}^{n} q^i_s  \int_s^t \frac{\partial V^{i}}{\partial \sigma} \Big|_u  \, d\sigma_u
-
\sum_{i=0}^{n} q^i_s \frac{\partial V^{i}}{\partial \sigma} 
\Big|_s \sigma_{s,t} \right\rvert  \\             
&\leq \sum_{i=0}^{n}\lvert q^i_s \rvert\left\lvert \int_s^t \frac{\partial V^{i}}{\partial \sigma} \Big|_u \, d \sigma_u - 
\frac{\partial V^{i}}{\partial \sigma} 
\Big|_s \sigma_{s,t}
\right\rvert         \\
&\leq C \sum_{i=0}^{n}\lvert q^i_s| \, \left\| \frac{\partial V}{\partial \sigma}\right\|_{\pvar;[s,t]} \, \left\| \sigma \right\|_{\qvar;[s,t]}
    \end{align*}
where the last line uses the Young--Lo\`eve estimate \cite[Theorem 6.8]{friz2010}. Applying
Lemma \ref{lemma:productControls} to the $p$- and $q$-variation controls together with the condition $\frac{1}{p}+\frac{1}{q}>1$ allows us to write
\begin{equation*}
    \left\lvert \sum_{i=0}^{n} q^i_s \int_s^t \frac{\partial V^i}{\partial \sigma}\Big|_u  \, d\sigma_u \right\rvert   
    < C w(s,t)^{1+\epsilon}
\end{equation*}
for some control $w(s,t)$ and $\epsilon>0$. Hence
we may apply the estimate  \eqref{eqn:controlLimit} to find
    \begin{equation*}
        \sum_{[s,t]\in\pi}
        \left\lvert \sum_{i=0}^{n} q^i_s \int_s^t \frac{\partial V^i}{\partial \sigma}\Big|_u  \, d\sigma_u \right\rvert   \to 0 \text{ as } \|\pi\|\to 0.
    \end{equation*}
This allows us to prove the appropriate analogue of 
Theorem \ref{thm:gammaHedging}
and hence we
obtain the desired result.
\end{proof}

We remark that a general pattern has emerged:
the same estimates used to show the convergence of integrals written as Riemann sums allow us to prove replication results.
\medskip

Our results above show that in a market with no volatility smile, we can replicate an infinite-dimensional family of payoff functions using only a finite number of derivative instruments. It is impossible to do this using a static replication strategy.

Our strategy does not require the trader to use any information about the paths $\sigma_t$ and $S_t$ beyond what is known at time $t$ and the assumed regularity. The trader's strategy computes the Greeks using the classical constant-volatility Black--Scholes pricing formulae with the instantaneous implied volatility $\sigma_t$ as input.

In this example, replication is possible even though the probability model is not known, even up to equivalence of measures. There is an infinite-dimensional family of inequivalent arbitrage-free ${\mathbb Q}$-measure models compatible with the pricing model \eqref{eq:blackScholesPricingModel}: one can use
any time-dependent volatility model with root-mean-square volatility equal to $\sigma_t$.

We also find it interesting to note that this
hedging methodology also works even if the realised signal cannot arise from any classical arbitrage-free pricing model. It is true that if one knows the signal will come from
a specific ${\mathbb P}$-measure model which contains arbitrage, one can exploit this to replicate anything. However, in reality, the ``true'' dynamics are unknown and this may prevent such an arbitrage
being exploited. Even if the dynamics were known it might be infeasible to compute an arbitrage strategy. Moreover, the notion that there really are ``true'' dynamics underpinning the market is highly debatable. The gamma-hedging strategy remains effective despite these issues.

The discussion above explains in part why we believe it is interesting to give a rough-path theory
proof of the effectiveness of our strategy rather than a probabilistic proof using semi-martingales.
Another reason is that classical arbitrage-free pricing models do not typically consider market impacts.
The gamma-hedging strategy will remain effective even in the extreme case that
an adversary can select the  path taken by the signal in full knowledge of the strategy that we pursue. It also remains robust to more realistic forms of price impact.

\medskip

Let us conclude this section with a few remarks on the economic interpretation of these results and their comparison with the Black--Scholes model.

The classical replication result of Black--Scholes shows the effectiveness of the sell-side strategy of delta hedging. In the Black--Scholes model, a trader can replicate options that buy-side investors wish to purchase at the Black--Scholes price. If the trader charges a little more than the Black--Scholes price, they will then be able to make a risk-free profit, subject to the assumption that the stock price follows that of the Black--Scholes model.

Our result is similar, except that, rather than assuming properties of the dynamics of the stock price, the trader assumes certain non-probabilistic properties of exchange-traded option prices. The validity of either assumption is an empirical question. In the case of the pre-October 1987 market,
there is good historical evidence that the volatility smile was approximately flat. Our approach seems to apply well to this historical market. On the other hand, as early as 1963, Mandelbrot reported evidence that return distributions deviated significantly from the predictions of the Black--Scholes model \cite{mandelbrot1963variation}. 

In both replication approaches, for the result to be financially useful, it is important that the replication error remains small under small perturbations of the modelling assumptions. In the case of the classical Black--Scholes approach, this is established by the Fundamental Theorem of Derivative Trading. In our approach one can appeal to the good continuity properties of  rough differential equations to obtain a similar result, see \cite{andreiThesis} for details.

Girsanov's theorem shows that the physical-measure model of Black--Scholes theory is equivalent to a martingale measure and hence the model admits no arbitrage.
In our model, the formal definition of arbitrage does not apply as our model does not assume the existence of a physical probability measure and this
features in the definition of arbitrage. Nevertheless, clearly one should require our pricing model to be consistent in some sense: for example there should be no violation of put-call parity. Our pricing model
will be consistent in this sense as there will always exist at least one classical arbitrage-free physical measure ${\mathbb P}$, namely ${\mathbb P}={\mathbb Q}$, which gives the same prices.

There are many other issues one should consider when implementing a sell-side strategy. These include the practicalities of calibrating pricing models vs.\ statistical models, the impact of transaction costs and market jumps. Our approach should be viewed as a
simple theoretical model which is compatible with the practices of calibrating to market data and gamma-hedging.

\medskip

We have considered the historical options market as an example because of its simplicity,
and because of the straightforward empirical evidence on the shape of the volatility smile
at this time. In contemporary markets one might instead assume that a more sophisticated ${\mathbb Q}$-measure model with a curved volatility smile can be used to price the chosen hedging instruments
to obtain similar results.

\end{correctionenv}

\section{Path-dependent instruments}
\label{sec:pathDependent}

\subsection{Controllable payoffs}
\label{sec:pathDependentTheory}

To apply Theorem \ref{thm:gammaHedging} to path-dependent derivatives
we will need to identify circumstances under which the risk-neutral
price of a derivative in the diffusion model \eqref{eq:diffusionSDE}
obeys the rough-path integral formula given by equation \eqref{eq:representableAsIntegral}. \correction{We will assume from this point forward that the matrix $\sigma(\hat{S},\correction{t})$ is invertible for all $\hat{S} \in U$ and all $t \in [0,T]$. We will also assume
that the drift $\mu(\hat{S},t)$ is zero.}

If $G\in L^2(\Omega)$ represents the payoff of a derivative, and $V^G_t$
is its price at time $t$, then by the \correction{m}artingale representation theorem \correction{and our assumption on the invertibility of $\sigma$}, we may define $\Delta^G$ to be the $\mathcal{L}({\mathbb R}^d,{\mathbb R})$-valued predictable process such that 
\begin{equation}
V^G_t = V^G_0 + \int_0^t \Delta_u^G \, d \correction{\tilde{S}}_u
\label{eq:delta}
\end{equation}
where the integral on the right is an It\^o integral.

We would like to consider when this It\^o integral may be written as a
rough-path integral. Let us summarize the relevant rough-path theory
(see \cite{friz2010,frizhairer2020}).

\begin{definition}
Given a local martingale $M$ we define its {\em It\^o enhancement} by the It\^o integral
$$
\mathbb{M}^{\Ito}_{s,t} = \int_s^t M_{s,u} \,\correction{\otimes}\, d M_u.
$$
\end{definition}

\begin{proposition}[It\^o and rough-path integrals coincide] \label{prop: ito and rough path integrals coincide}
Let $M$ be a local martingale with $M_t = M_0 + \int_0^t \phi_udW_u$ for each $t\in[0,T]$. Assume that, for almost all $\omega\in\Omega$, $\left(M(\omega),\phi(\omega)\right)\in\mathcal{D}^{\correction{p}\text{-var}}_{W(\omega)}$. Let $\left(Y(\omega),Y'(\omega)\right)\in\mathcal{D}^{\correction{p}\text{-var}}_{M(\omega)}$ for almost all $\omega\in\Omega$. If $Y$, $Y'$ are adapted, then almost surely, 
\begin{equation*}
    \int_0^t Y_udM_u = \left((Y,Y')\cdot (M,\mathbb{M}^{\Ito})\right)_{0,t}.
\end{equation*}
where the left integral is an It\^o integral.
\end{proposition}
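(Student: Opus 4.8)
The plan is to realise both sides as limits, along any sequence of partitions with vanishing mesh, of Riemann-type sums, and to identify the difference between those sums as a single It\^o integral that vanishes. Since both sides are continuous in $t$ (the rough integral by Theorem~\ref{thm: gubinelli theorem}, the It\^o integral by the usual properties of stochastic integrals), it suffices to fix $t$ and then argue over a countable dense set of times. One first records that $\mathbf{M}=(M,\mathbb{M}^{\Ito})$ is a genuine $p$-rough path: Chen's relation is immediate from additivity of the It\^o integral and the identity $M_{s,v}-M_{u,v}=M_{s,u}$, while the a.s.\ finiteness of $\|M\|_{\pvar}$ (already contained in the hypothesis $(M,\phi)\in\mathcal{D}_W^{\pvar}$) and of $\|\mathbb{M}^{\Ito}\|_{\halfpvar}$ are standard consequences of the Burkholder--Davis--Gundy inequalities together with a $p$-variation moment bound for continuous martingales (see \cite{friz2010}); hence $\bigl((Y,Y')\cdot\mathbf{M}\bigr)$ is well defined by Theorem~\ref{thm: gubinelli theorem}. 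By a stopping-time argument we may assume $|M|$, $\int_0^\cdot|\phi_u|^2\,du$, $|Y|$, $|Y'|$ and $\|(Y,Y')\|_{\mathcal{D}_M^{\pvar}}$ are all uniformly bounded.

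Fix partitions $\pi_n$ of $[0,t]$ with $\|\pi_n\|\to0$. Pulling the $\mathcal{F}_s$-measurable factors $Y'_s$ and $M_s$ inside the It\^o integral over $[s,u]$ that defines $\mathbb{M}^{\Ito}_{s,u}$ and reassembling, the defining sum of the rough integral becomes
\[
\sum_{[s,u]\in\pi_n}\bigl(Y_sM_{s,u}+Y'_s\mathbb{M}^{\Ito}_{s,u}\bigr)=\sum_{[s,u]\in\pi_n}Y_sM_{s,u}+\int_0^t\Psi^n_v\otimes dM_v,
\]
where $\Psi^n_v:=Y'_sM_{s,v}$ for $v\in[s,u)$, $[s,u]\in\pi_n$, is an adapted, bounded, piecewise process. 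By Theorem~\ref{thm: gubinelli theorem} the left-hand side converges a.s.\ to $\bigl((Y,Y')\cdot\mathbf{M}\bigr)_{0,t}$. The first term on the right is the left-point Riemann sum of a continuous, bounded, adapted integrand, so it converges in $L^2$ to the It\^o integral $\int_0^tY_v\,dM_v$. For the second term, It\^o's isometry bounds its second moment by $C\,\mathbb{E}\int_0^t|M_{s(v),v}|^2|\phi_v|^2\,dv$, where $s(v)$ is the left endpoint of the partition interval containing $v$; the integrand tends to $0$ pointwise by continuity of $M$ and is dominated by $(2\sup_{[0,t]}|M|)^2|\phi_v|^2$, which is integrable under the localisation, so this term tends to $0$ in $L^2$.

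Comparing the three convergences — one a.s., two in $L^2$ and hence in probability — and using that an a.s.\ limit and an in-probability limit must coincide, we conclude $\bigl((Y,Y')\cdot\mathbf{M}\bigr)_{0,t}=\int_0^tY_v\,dM_v$ a.s.\ for each fixed $t$, and therefore a.s.\ as processes by continuity. The crux of the argument — and the only step that is not bookkeeping — is the vanishing of the correction term $\int_0^t\Psi^n_v\otimes dM_v$: this is precisely where using the It\^o (rather than the Stratonovich) enhancement matters, since it is the absence of a bracket-type drift in $\mathbb{M}^{\Ito}$ that makes the two integrals agree with no correction. The one genuinely technical point is checking that the localisation is compatible with the $p$-variation controlled-path norms, which is slightly more delicate than in the H\"older regime; alternatively one may reduce to the Brownian case $M=W$ by observing that $(Y,Y'\phi)\in\mathcal{D}_W^{\pvar}$ and that $\int Y\,dM=\int Y\phi\,dW$ holds for both the It\^o and the rough integral, and then invoke the classical rough-path/It\^o correspondence for $\mathbf{W}^{\Ito}$ from \cite{frizhairer2020}.
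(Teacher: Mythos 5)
Your proof is correct and follows essentially the same route as the paper's: the compensated Riemann sums converge pathwise to the rough integral by Gubinelli's theorem, the left-point sums converge in probability to the It\^o integral after localisation, and the second-order correction term is shown to vanish in $L^2$ by exploiting the martingale structure of the It\^o enhancement. The only (immaterial) difference is that the paper first rewrites the correction term against $\mathbb{W}^{\Ito}$ and bounds the resulting sum as a discrete martingale via orthogonality of increments, whereas you keep $\mathbb{M}^{\Ito}$, absorb the whole correction sum into a single It\^o integral $\int_0^t \Psi^n_v\,dM_v$, and kill it with the isometry and dominated convergence.
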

The result is well known. We give a proof in Appendix \ref{proofs} for
completeness.

This motivates the following definition.

\begin{definition}[Controllable payoffs]
   $G \in L^0(\Omega;{\mathbb R})$ is said to be an \correction{$\tilde{S}$}\emph{-controllable payoff of $\correction{p}$-variation regularity} if $\Delta^G$ almost surely admits a Gubinelli derivative $\Gamma$ with respect to \correction{$\tilde{S}$ equipped with its It\^o enhancement}. That is, for almost all $\omega\in\Omega$, we have $\left(\Delta^G(\omega),\Gamma^G(\omega)\right)\in\mathcal{D}^{\correction{p}\text{-var}}_{S(\omega)}$ \correction{where $\Delta^G(\omega)(t):=\Delta^G_t(\omega)$ and
   $\Gamma^G(\omega)(t):=\Gamma^G_t(\omega)$}.
   
\correction{Suppose} in addition, $G$ is continuous in the sup norm \correction{when we regard $\Omega$ as the space of continuous paths of Brownian motion starting at $0$}. \correction{Suppose also that there is a continuous map}
\begin{align*}
         \zeta:C^{p\text{-var}}(\R^d) &\rightarrow C^{p\text{-var}}([0,T],\R^{d}) \oplus C^{p\text{-var}}([0,T],\R^{d^2}) \oplus  C^{\frac{p}{2}\text{-var}}([0,T]^2,{\mathbb R}^{d})
\end{align*}
\correction{
\noindent satisfying $\zeta(\omega)=(\Delta^G(\omega), 
        \Gamma^G(\omega), R^G(\omega))$
almost surely, where $R^G(\omega)(s,t)=\Delta^G_t-\Delta^G_s - \Gamma^G_{s,t} \tilde{S}_s$.} In these
circumstances we will say $G$ is {\em continuously controlled}.
\end{definition}

\correction{We now note the following lemma.}
\begin{lemma}
\label{lem:sControllableImpliesWControllabel}
\correction{Under our assumption} that $\sigma$ is invertible, $G$ is \correction{$\tilde{S}$}-controllable if and only
if it is $W$-controllable.
\end{lemma}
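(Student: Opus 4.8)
The plan is to exploit the deterministic, bijective correspondence between the paths of $\tilde S$ and $W$ under the assumption that $\sigma$ is invertible, and to transfer the controlled-rough-path structure back and forth along this correspondence. Recall that $\tilde S$ solves the SDE $d\tilde S_t = \sigma(\tilde S_t,t)\,dW_t$ (the drift is assumed zero), so pathwise we have the relation $dW_t = \sigma(\tilde S_t,t)^{-1}\,d\tilde S_t$ in the It\^o sense; equivalently, on the level of rough paths, $\tilde S$ is the image of $W$ under the rough differential equation driven by the smooth vector fields $\sigma(\cdot,\cdot)$, and this solution map is a homeomorphism onto its image in the rough-path topology (the inverse being the RDE with vector fields $\sigma^{-1}$). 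In particular, almost surely $(\tilde S(\omega),\sigma(\tilde S(\omega),\cdot)(\omega))\in\mathcal{D}^{p\text{-var}}_{W(\omega)}$ and, conversely, $(W(\omega),\sigma(\tilde S(\omega),\cdot)^{-1}(\omega))\in\mathcal{D}^{p\text{-var}}_{\tilde S(\omega)}$, so Proposition \ref{prop: ito and rough path integrals coincide} applies with $M=\tilde S$ or $M=W$ as needed, and the It\^o enhancements $\mathbb{\tilde S}^{\Ito}$ and $\mathbb{W}^{\Ito}$ are compatible under the change of integrator.

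First I would make precise the passage from a $W$-controlled structure on $\Delta^G$ to a $\tilde S$-controlled one. Suppose $G$ is $W$-controllable, so $\Delta^G$ (viewed as the integrand such that $V^G_t = V^G_0 + \int_0^t \Delta^G_u\,d\tilde S_u$, equivalently $V^G_t = V^G_0 + \int_0^t \Delta^G_u\sigma(\tilde S_u,u)\,dW_u$) admits a Gubinelli derivative with respect to $W$. The key algebraic step is the chain rule for Gubinelli derivatives (stable composition of controlled paths, \cite[Chapter 7]{frizhairer2020}): if $(\Delta^G,(\Delta^G)')\in\mathcal{D}^{p\text{-var}}_W$ and $\tilde S$ is controlled by $W$ with Gubinelli derivative $\sigma(\tilde S,\cdot)$, then $\Delta^G$ is also controlled by $\tilde S$, with Gubinelli derivative obtained by "dividing" the $W$-Gubinelli derivative by $\sigma(\tilde S,\cdot)$ — concretely, $\Gamma^G := (\Delta^G)'\,\sigma(\tilde S,\cdot)^{-1}$ works, because the remainder $R^{\Delta^G,\tilde S}_{s,t} = \Delta^G_{s,t} - \Gamma^G_s \tilde S_{s,t}$ can be rewritten, using $\tilde S_{s,t} = \sigma(\tilde S_s,s)W_{s,t} + R^{\tilde S}_{s,t}$ and $\Delta^G_{s,t} = (\Delta^G)'_s W_{s,t} + R^{\Delta^G,W}_{s,t}$, as a sum of a product of a $p$-variation path with a $p/2$-variation remainder and the $p/2$-variation remainder $R^{\Delta^G,W}$, hence has finite $p/2$-variation. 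The reverse implication is symmetric, using that $\sigma^{-1}$ is again smooth and that $W$ is controlled by $\tilde S$ with Gubinelli derivative $\sigma(\tilde S,\cdot)^{-1}$.

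Second I would check that the extra "continuously controlled" data transfers as well, i.e.\ that if the map $\zeta$ exists and is continuous for one integrator then it does for the other: one simply post-composes $\zeta$ with the (continuous, by Theorem \ref{thm: gubinelli theorem} and the RDE solution map) operation of reparametrising the Gubinelli derivative and remainder via $\sigma(\cdot,\cdot)^{\pm 1}$ applied along the solution path, and pre-composes with the homeomorphism between the relevant path spaces; continuity in the sup norm of $G$ itself is unaffected since that is a property of $G$ as a functional of the Brownian path and the two path spaces are homeomorphic. The main obstacle I anticipate is purely bookkeeping rather than conceptual: verifying that the remainder terms retain exactly $p/2$-variation regularity under the nonlinear substitution by $\sigma$ (and that $\sigma(\tilde S,\cdot)$, as a path, has the right $p$-variation regularity so that Lemma \ref{lemma:productControls} can be invoked on the cross terms), together with being careful that the It\^o enhancement used in the definition of "controllable" for $\tilde S$ is the one compatible with $\mathbb{W}^{\Ito}$ under Proposition \ref{prop: ito and rough path integrals coincide} — this is where one must be slightly attentive to symmetric-versus-full lifts, but since $\Gamma^G$ appears paired only with the (symmetric part of the) enhancement in all downstream uses, no difficulty arises.
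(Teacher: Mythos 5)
Your argument is correct and follows essentially the same route as the paper's proof (Lemmas \ref{lemma: W is S controllable} and \ref{lemma: being W controlled is equivalent to being diffusion controlled} in the appendix): one transfers the controlled structure between the integrators $W$ and $\tilde S$ using that $(\tilde S,\sigma(\tilde S,\cdot))\in\mathcal{D}^{p\text{-var}}_W$ and $(W,\sigma(\tilde S,\cdot)^{-1})\in\mathcal{D}^{p\text{-var}}_{\tilde S}$, with the new Gubinelli derivative given by right-composition with $\sigma^{\mp1}$ and the remainder handled exactly by the computation you sketch (the paper cites the change-of-signal formula of \cite[Proposition 7.1]{frizhairer2020} where you verify it by hand, and uses Proposition \ref{prop: regularity of ito integral} where you invoke RDE theory). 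The one point to state explicitly is that ``$W$-controllable'' refers to the $W$-integrand $\Delta^G\sigma(\tilde S,\cdot)$ rather than to $\Delta^G$ itself; these are interchangeable by the Leibniz rule because $\sigma(\tilde S,\cdot)^{\pm1}$ is itself $W$-controlled, which is exactly the extra term in the paper's explicit formula $Y^\top D\sigma(\tilde S)\sigma(\tilde S,t)+\sigma(\tilde S,t)(Y')^\top$.
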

A proof is given in Appendix \ref{proofs} as Lemma \ref{lemma: being W controlled is equivalent to being diffusion controlled}.
Our proof contains an explicit formula
for the relationship between the Greeks in this situation.

\correction{As a result of this lemma, we may focus our attention on $W$-controllable payoffs henceforth.}

\medskip

If all the hedging instruments and the instrument being replicated are $W$-controllable, we know that \eqref{eq:representableAsIntegral} will hold almost surely. Hence, by Theorem \ref{thm:gammaHedging}, the error of discrete-time gamma-hedging on a grid $\pi$ will almost surely tend to 0 as $\|\pi\|\to 0$
so long as all the payoffs are $W$-controllable.

This, in itself, is a rather weak result: there are already known results
showing that the delta-hedging strategy converges almost surely for appropriately chosen sequences of grids $\pi_n$ with mesh tending to zero, albeit not for all possible choices \cite{bick1994dynamic}. However, we will now consider how to use the continuity of the rough-path integral in its integrand and integrator in the appropriate rough-path topologies to obtain a sure result for continuously controlled payoffs.

\medskip

A smooth path $X_t \in C^{\infty}\left([0,T];\mathbb{R}^d\right)$
has a canonical lift $\mathbf X=(X,{\mathbb X}^G)$ given by
\[
{\mathbb X}^G_{s,t}=\int_s^t X_{s,r} \otimes d X_r.
\]
We will call this the geometric lift of $X_t$. However this
is not the only choice of lift: we may define its Brownian lift 
$X^B=(X,{\mathbb X}^B)$ by
\[{\mathbb X}^B_{s,t}={\mathbb X}^G_{s,t}\correction{- \frac{1}{2}} (t-s)I\]
where $I$ is the identity matrix. We call this the Brownian lift
because its rough bracket $[\mathbf X]_t$ is equal to that of \correction{It\^o}-enhanced Brownian motion.

\begin{definition}
For $p \in (2,3)$, we define ${\cal B}^p([0,T]; {\mathbb R}^d)$, \correction{the space of pseudo-Brownian paths}, to be the closure of
\[
\{ (X_t, {\mathbb X}^B_t) \mid X_t \in C^{\infty}([0,T],{\mathbb R}^d), \; X_0=0 \}
\]
in the \correction{rough-path} $p$-variation norm.
\end{definition}

\correction{By \cite[Exercise 2.9]{frizhairer2020} piecewise-linear paths equipped with the Brownian lift lie in ${\cal B}_p$. By \cite[Proposition 3.6]{frizhairer2020}, dyadic piecewise-linear approximations to Brownian motion equipped with the geometric lift converge to Brownian motion with the Stratonovich lift. Hence Brownian motion with its It\^o lift almost surely lies in ${\cal B}_p$.}

The continuity properties of the rough integral will allow us to prove sure results because, as the next proposition asserts, the set of paths
of the form $(W, {\correction{\mathbb W}}^{\Ito})$ is dense in \correction{${\cal B}^p([0,T]; {\mathbb R}^d)$}. 

\begin{proposition}[Density of Brownian paths] \label{prop: paths from diffusions are dense in space of diffusive paths}
Suppose $\tilde{\mathbf{W}}_t$ in ${\cal B}^p\left([0,T];\mathbb{R}^d\right)$
then for any $\epsilon>0$, 
\begin{equation*}
    \correction{\mathbb{Q}}\left(\vertiii{\mathbf{W}^\Ito-\mathbf{\tilde{W}}}_{p\text{-var};[0,T]} < \epsilon \right) > 0 
\end{equation*}
\end{proposition}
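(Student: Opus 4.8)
The plan is to show that the law of Brownian motion with its Itô lift, viewed as a measure on the rough-path space $\mathcal{C}^{p\text{-var}}([0,T];\mathbb{R}^d)$, has full topological support inside $\mathcal{B}^p([0,T];\mathbb{R}^d)$. Equivalently, fixing $\tilde{\mathbf W}\in\mathcal{B}^p$ and $\epsilon>0$, I must produce a positive-probability event on which $\mathbf W^\Ito$ is $\epsilon$-close to $\tilde{\mathbf W}$ in the homogeneous rough-path norm. The natural route is a two-step approximation: first approximate $\tilde{\mathbf W}$ by a \emph{smooth} path with its Brownian lift (possible, by the very definition of $\mathcal{B}^p$ as the closure of such paths), and then show that Brownian motion can, with positive probability, be made rough-path-close to that fixed smooth Brownian-lifted path.

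First I would use the definition of $\mathcal{B}^p$: choose a smooth $\phi\in C^\infty([0,T];\mathbb{R}^d)$ with $\phi_0=0$ such that $\vertiii{(\phi,\mathbb{\phi}^B)-\tilde{\mathbf W}}_{p\text{-var}}<\epsilon/2$, where $\phi^B$ is the Brownian lift of $\phi$. By the triangle inequality it then suffices to show $\mathbb{Q}\bigl(\vertiii{\mathbf W^\Ito-(\phi,\phi^B)}_{p\text{-var}}<\epsilon/2\bigr)>0$. This is where the probabilistic input enters. The key observation is that the Brownian lift of $\phi$ has exactly the ``second-order structure'' of Itô Brownian motion — its rough bracket is $t\mapsto tI$, matching $[\mathbf W^\Ito]$ — so one is really asking whether a Brownian path can shadow a given smooth curve. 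I would invoke the Cameron--Martin theorem: the smooth path $\phi$ lies in the Cameron--Martin space (it is absolutely continuous with $L^2$ derivative), so the law of $W$ is equivalent to the law of $W-\phi$, hence the shifted process $W+\phi$... more precisely, I want $W$ itself near $\phi$. Since $\phi\in H$, the measures $\mathbb{Q}$ and $\mathbb{Q}\circ(\,\cdot-\phi)^{-1}$ are mutually absolutely continuous, so any event that has positive probability under one has positive probability under the other. Thus it is enough to show that $\mathbb{Q}\bigl(\vertiii{(W+\phi)^\Ito - (\phi,\phi^B)}_{p\text{-var}}<\epsilon/2\bigr)>0$, i.e. that $W$ can be made rough-path-small with positive probability after recentring. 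By the standard small-ball / support property of Brownian motion in $p$-variation rough-path topology — for every $\delta>0$, $\mathbb{Q}(\vertiii{\mathbf W^\Ito}_{p\text{-var}}<\delta)>0$ (this is the support theorem for enhanced Brownian motion, or can be derived from the small-ball estimates in $p$-variation norm) — together with a continuity estimate controlling how $(\,\cdot\,)\mapsto(\,\cdot+\phi)^\Ito$ distorts the rough-path norm when one of the arguments is smooth, the claim follows.

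The step I expect to be the main obstacle is the continuity/stability estimate linking $(W+\phi)^\Ito$ to $\mathbf W^\Ito$ and $(\phi,\phi^B)$: one needs that adding a smooth (in particular finite-$1$-variation) perturbation to a rough path is a continuous operation in the $p$-variation rough-path metric, and that when $\delta=\vertiii{\mathbf W^\Ito}_{p\text{-var}}$ is small the ``cross terms'' $\int W\otimes d\phi + \int\phi\otimes dW$ appearing in the lift of the sum are controlled by $\delta$ times a constant depending only on $\phi$. This is a Young-integration / translation-of-rough-paths argument (the operation $T_\phi$ of translating a rough path by a smooth path), and the required bound is that $d_{p\text{-var}}(T_\phi\mathbf X, T_\phi\mathbf Y)\le C(\|\phi\|,\ldots)\,d_{p\text{-var}}(\mathbf X,\mathbf Y)$ with $T_\phi$ of the trivial rough path over the zero path equal to $(\phi,\phi^B)$ up to the chosen convention; checking that the Brownian lift is exactly the image of the zero rough path under $T_\phi$ is a bookkeeping point about which lift ($-\tfrac12(t-s)I$ correction) one uses. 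Once that estimate is in hand, choosing $\delta$ small enough relative to $\epsilon$ and $\|\phi\|_{1\text{-var}}$, and combining with the positivity of the small ball probability and Cameron--Martin equivalence, completes the proof. I would relegate the translation estimate to a citation of the standard rough-path literature (e.g. Friz--Hairer or Friz--Victoir) rather than reprove it.
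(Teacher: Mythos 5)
Your architecture is essentially the paper's: reduce to a smooth-trace, Brownian-lifted approximant via the definition of ${\cal B}^p$; invoke a small-ball/support result for enhanced Brownian motion; transfer it to the target path by a translation-type continuity estimate; and close with a change of measure. The paper packages the translation step as the continuity of the solution map of the RDE $dY_t = \dot{\tilde{W}}_t\,dt + d\mathbf{W}^B_t$ (whose solution driven by $\mathbf{0}^B$ is exactly $(\tilde{W},I)$ with canonical enhancement $\tilde{W}^B$), and the measure change as Girsanov; with a deterministic smooth drift these are the same devices as your $T_\phi$ and Cameron--Martin. So the routes coincide in substance.

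One step as you have literally written it is false and needs the repair you only half-flag. The small-ball statement $\mathbb{Q}\left(\vertiii{\mathbf{W}^{\Ito}}_{p\text{-var}}<\delta\right)>0$ fails for the It\^o lift: Chen's relation gives $\mathbb{W}^{\Ito}_{0,T}-\mathbb{W}^{\Ito}_{0,T/2}-\mathbb{W}^{\Ito}_{T/2,T}=W_{0,T/2}\otimes W_{T/2,T}$, while the symmetric part of $\mathbb{W}^{\Ito}_{s,t}$ is $\tfrac{1}{2}\left(W_{s,t}\otimes W_{s,t}-(t-s)I\right)$; if all three lift increments had norm below $\delta$ then $|W_{s,t}|^2$ would be within $O(\delta)$ of $t-s$ on each half-interval, forcing $|W_{0,T/2}\otimes W_{T/2,T}|\approx T/2$, a contradiction for $\delta$ small. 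In other words, the It\^o bracket term $-\tfrac{1}{2}(t-s)I$ has $\tfrac{p}{2}$-variation of order $T$ and cannot be made small. The correct input — and the one the paper uses, citing Ledoux--Qian--Zhang for the Stratonovich lift — is $\mathbb{Q}\left(\vertiii{\mathbf{W}^{\Ito}-\mathbf{0}^B}_{p\text{-var}}<\delta\right)>0$, where $\mathbf{0}^B$ is the zero path with its nontrivial Brownian lift; equivalently, smallness of $\vertiii{\mathbf{W}^{\Strat}}$. Since you already observe that $T_\phi$ must be applied to the Brownian-lifted (not trivial) zero path, substituting this corrected small-ball statement throughout makes your argument go through and it then matches the paper's proof.
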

See Appendix \ref{proofs} \correction{for a proof}.

While conditional expectations are usually only defined almost surely, we will now explain how to  define expectations conditioned on ${\cal F}_t$ surely. Given a continuous path $\tilde{W}_t$ starting at 0 and defined on an interval $[0,t_0]$,
we may define a probability space $\correction{(}\tilde{\Omega}\correction{,\tilde{{\cal F}},\tilde{\mathbb Q})}$  \correction{by taking $\tilde{\Omega}$} to be the set of continuous
paths on $[t_0, T]$ starting at $W_{t_0}$, and \correction{choosing ${\tilde{\cal F}}$ and ${\tilde{\mathbb{Q}}}$} such that if $\tilde{W} \in \tilde{\Omega}$ then $(\tilde{W}_{t\correction{+}t_0}-\tilde{W}_{t_0})$ is a Brownian motion on $[0,T-t_0]$
\correction{generating ${\tilde{\cal F}}$}. Given a measurable function $G  :\Omega \to {\mathbb R}$ we define
\begin{equation}
{\mathbb E}(G \mid (\tilde{W}_t)_{\correction{t \in}[0,t_0]} ):={\mathbb E}_{\correction{\tilde{\mathbb{Q}}}}(G(\tilde{W}_{\cdot})).
\label{eq:pathDepdendentExpectation}
\end{equation}
This is defined surely and is almost surely equal to the conditional expectation in the sense that
\[
{\mathbb E}(G \mid {\cal F}_t )(W_{\cdot} ) = {\mathbb E}(G \mid (W_t)_{\correction{t \in}[0,t_0]} ) \quad \text{a.s.}
\]

\begin{theorem}[Rough-path Clark--Ocone formula]\label{thm: sure clark ocone formula diffusive rough paths}
Suppose that $G$ is continuously $W$-controllable then for any ${\mathbf W} \in B^p\left([0,T];\mathbb{R}^d\right)$ \correction{and $t_0 \in [0,T]$}
\begin{equation*}
    \mathbb{E}(G  \mid  (\tilde{W}_t)_{\correction{t \in}[0,t_0]} ) = \mathbb{E}[G] + \left((\Delta^G(W),\Gamma^G(W))\cdot\left(W,\mathbb{W}\right)\right)_{0,t_{\correction{0}}}.
\end{equation*}
\end{theorem}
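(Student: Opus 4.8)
The plan is to prove the identity first for a dense set of driving rough paths — namely the It\^o-enhanced Brownian paths $(W, \mathbb{W}^{\Ito})$ — and then extend to all of ${\cal B}^p$ by a continuity and density argument. First I would fix $t_0$ and work on the probability space $(\tilde\Omega, \tilde{\cal F}, \tilde{\mathbb Q})$ described just before the statement, on which the conditional expectation ${\mathbb E}(G \mid (\tilde W_t)_{t\in[0,t_0]})$ is defined surely. On this space, the martingale representation theorem gives $V^G_t = {\mathbb E}[G] + \int_0^t \Delta^G_u\, dW_u$ as an It\^o integral; in particular, taking $t=t_0$, ${\mathbb E}(G \mid (W_t)_{t\in[0,t_0]}) = {\mathbb E}[G] + \int_0^{t_0}\Delta^G_u\, dW_u$ almost surely. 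Since $G$ is (continuously) $W$-controllable, $(\Delta^G(\omega), \Gamma^G(\omega)) \in {\cal D}^{p\text{-var}}_{W(\omega)}$ for a.e.\ $\omega$, so Proposition~\ref{prop: ito and rough path integrals coincide} applies and yields
\[
\int_0^{t_0}\Delta^G_u\, dW_u = \bigl((\Delta^G(W),\Gamma^G(W))\cdot(W,\mathbb{W}^{\Ito})\bigr)_{0,t_0} \qquad \text{a.s.}
\]
Combining these gives the desired formula for $\mathbf{W} = (W,\mathbb{W}^{\Ito})$, almost surely.

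Next I would upgrade ``almost surely for the Brownian lift'' to ``surely for every $\mathbf{W}\in{\cal B}^p$''. The key observation is that both sides of the claimed identity are continuous functions of $\mathbf{W}$ in the $p$-variation rough-path metric. For the right-hand side this is immediate from the continuity of the map $(Y,Y')\mapsto \int Y\,d\mathbf{X}$ in Theorem~\ref{thm: gubinelli theorem} together with the hypothesis that $G$ is \emph{continuously} controlled — that is, the map $\zeta: \omega \mapsto (\Delta^G(\omega),\Gamma^G(\omega),R^G(\omega))$ is continuous in the $p$-variation topology — so that $\mathbf{W}\mapsto (\Delta^G(W),\Gamma^G(W))$ is a continuous map into ${\cal D}^{p\text{-var}}_{W}$ and composing with the integral map is continuous. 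For the left-hand side I would argue that $\mathbf{W}\mapsto {\mathbb E}(G\mid(\tilde W_t)_{t\in[0,t_0]})$ is continuous: $G$ is continuous in the sup norm on path space, the sup norm is dominated by the $p$-variation norm, and the conditional expectation defined by \eqref{eq:pathDepdendentExpectation} depends on $\mathbf W$ only through the restriction of the trace $W$ to $[0,t_0]$, on which the map $W_{\cdot}\mapsto {\mathbb E}_{\tilde{\mathbb Q}}(G)$ inherits the required continuity (one can also bound the difference using the Lipschitz-type estimates available for $G$). Having established that both sides are continuous in $\mathbf{W}$ and agree on the set $\{(W,\mathbb{W}^{\Ito})\}$ which, by Proposition~\ref{prop: paths from diffusions are dense in space of diffusive paths}, is dense in ${\cal B}^p$, they agree everywhere on ${\cal B}^p$.

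The main obstacle I expect is the continuity of the left-hand side as a function of $\mathbf W$ — in particular making precise the claim that ${\mathbb E}(G\mid(\tilde W_t)_{t\in[0,t_0]})$ depends continuously (in $p$-variation) on the path, since the conditional expectation is an integral over the \emph{future} randomness and only the past path $(W_t)_{t\le t_0}$ is being perturbed. One has to check that the Gaussian-shift / Girsanov structure used to define $\tilde{\mathbb Q}$ behaves well enough, or alternatively exploit the continuous-controlledness hypothesis directly: since $\zeta$ is continuous and $V^G_{t_0} = {\mathbb E}[G] + \int_0^{t_0}\Delta^G\,dW$ can itself be read off as (a coordinate of) the rough integral, the continuity of the left-hand side can be reduced to continuity of the right-hand side, closing the loop. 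A secondary technical point is ensuring that the almost-sure identity from Step~1 holds on a set of $\mathbf W$ that is genuinely dense in ${\cal B}^p$ (not merely of full measure in Wiener space), which is exactly what Proposition~\ref{prop: paths from diffusions are dense in space of diffusive paths} is designed to supply.
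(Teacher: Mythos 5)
Your proposal is correct and takes essentially the same route as the paper: define $\Delta^G$ via the martingale representation theorem, convert the It\^o integral to a rough integral almost surely using Proposition~\ref{prop: ito and rough path integrals coincide}, and then upgrade to a sure statement on all of $\mathcal{B}^p\left([0,T];\mathbb{R}^d\right)$ by combining the continuity of every term (continuous controllability plus Theorem~\ref{thm: gubinelli theorem}) with the density statement of Proposition~\ref{prop: paths from diffusions are dense in space of diffusive paths}. The only difference is presentational: the paper argues via the backward identity $G(W_\cdot) = \mathbb{E}(G \mid (W_t)_{t\in[0,t_0]}) + \left((\Delta^G,\Gamma^G)\cdot(W,\mathbb{W})\right)_{t_0,T}$ and subtracts the $t_0=0$ case, whereas you work with the forward identity directly; both versions ultimately rest on the same continuity of the surely-defined conditional expectation in the path, which you correctly identify as the one point requiring care.
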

\begin{proof}
By definition of $\Delta^G$ We know that 
\begin{equation*}
    \mathbb{E}(G  \mid  {\cal F}_{t_{\correction{0}}} ) = \mathbb{E}[G] + 
    \int_0^{\correction{t_0}} \Delta^G_u \, d W_u.
\end{equation*}
for all $t_0$, and so we also have that
\begin{equation*}
    G = \mathbb{E}(G \mid {\cal F}_{t_\correction{0}}) + 
    \int_{\correction{t_0}}^T \Delta^G_u \, d W_u.
\end{equation*}
Hence
\begin{equation}
G( W_{\cdot} ) = \mathbb{E}(G \mid {(W_{\correction{t}})}_{\correction{t} \in [0,t_{\correction{0}}]})
+ \left((\Delta^G(W),\Gamma^G(W))\cdot\left(W,\mathbb{W}\right)\right)_{t_{\correction{0}},T}
\label{eq:sureCOBackward}
\end{equation}
almost surely by the equivalence of It\^o and rough integrals. The continuity of $R$ ensures that $\Gamma$ is surely a Gubinelli derivative for $\Delta$ and hence the rough integral surely exists if ${\mathbf W} \in B^p\left([0,T];\mathbb{R}^d\right)$. 
Then by the continuity of all the terms and Proposition \ref{prop: paths from diffusions are dense in space of diffusive paths}, equation \eqref{eq:sureCOBackward} holds surely
for ${\mathbf W} \in B^p\left([0,T];\mathbb{R}^d\right)$.
\correction{By equating equation \eqref{eq:sureCOBackward} with the same equation when $t_0=0$ and simplifying, the result follows.}
\end{proof}

\begin{corollary}
Suppose $W_t$ is a path starting at $0$ with finite $p$-variation, $2<p<3$.
If $G^i$ for $0 \leq i \leq n$ are continuously W-controllable payoffs,
and if a derivative with payoff $G^i$ can be purchased at each time $t_\correction{0}$ 
for the price
\[
{\mathbb E}(G^i \mid (\tilde{W}_t)_{\correction{t \in }[0,t_0]} )
\]
then the profit or loss of the discrete-time gamma-hedging strategy to replicate $G^0$ using these derivatives tends to 0 as the mesh of the partition tends to 0.
\end{corollary}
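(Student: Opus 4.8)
The plan is to read the corollary off as a direct composition of the rough-path Clark--Ocone formula (Theorem~\ref{thm: sure clark ocone formula diffusive rough paths}) with the abstract gamma-hedging estimate (Theorem~\ref{thm:gammaHedging}): once the quoted prices are put into the form \eqref{eq:representableAsIntegral}, the conclusion is immediate and no fresh estimate is required. The first step is to regard the realised trace $W$ as equipped with a rough-path lift $\mathbf{W}=(W,\mathbb{W})\in\mathcal{B}^p([0,T];\mathbb{R}^d)$ --- for a smooth or piecewise-linear $W$ one takes its Brownian lift, for a realised Brownian path one takes its It\^o lift (which is almost surely in $\mathcal{B}^p$), and when the relevant Gubinelli derivatives $\Gamma^{G^i}$ happen to be symmetric the rough integral sees only the symmetric part of $\mathbb{W}$, so the reduced rough path with bracket $tI$ furnished by \eqref{eq:enhancerFromBracket} already suffices, the reduced analogues of Theorems~\ref{thm:gammaHedging} and~\ref{thm: sure clark ocone formula diffusive rough paths} and of the density statement \ref{prop: paths from diffusions are dense in space of diffusive paths} applying unchanged.

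Next, for each $i$ with $0\le i\le n$ write $V^i_{t_0}:=\mathbb{E}(G^i\mid (\tilde W_t)_{t\in[0,t_0]})$ for the quoted price, which is defined surely by \eqref{eq:pathDepdendentExpectation}, and put $V^i_0=\mathbb{E}[G^i]$. Because each $G^i$ is continuously $W$-controllable, Theorem~\ref{thm: sure clark ocone formula diffusive rough paths} applied to the fixed $\mathbf{W}$ gives, for every $t_0\in[0,T]$,
\[
V^i_{t_0}=V^i_0+\bigl((\Delta^{G^i}(W),\Gamma^{G^i}(W))\cdot(W,\mathbb{W})\bigr)_{0,t_0}.
\]
Here the continuity of the map $\zeta$ --- equivalently, of the remainder $R^{G^i}$ --- is exactly what upgrades $\Gamma^{G^i}(W)$ from an almost-sure to a genuine Gubinelli derivative of $\Delta^{G^i}(W)$ for the particular path $\mathbf{W}$, so that the rough integral exists surely. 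This is precisely the representation \eqref{eq:representableAsIntegral} for the common signal $(W,\mathbb{W})$, with $\Delta^i=\Delta^{G^i}(W)$ and $\Gamma^i=\Gamma^{G^i}(W)$ in $\mathcal{D}^{p\text{-var}}_W$, and with the Riemann drift term absent since $\mu\equiv 0$ in this section (so $m\equiv 0$); moreover at $t_0=T$ the continuation space in \eqref{eq:pathDepdendentExpectation} degenerates to a single point, so $V^i_T=G^i(W)$, i.e.\ replicating the terminal price is the same as replicating the payoff.

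Finally I would invoke Theorem~\ref{thm:gammaHedging}: all $n+1$ price processes $V^i$ satisfy \eqref{eq:representableAsIntegral} for the signal $(W,\mathbb{W})$, and --- as part of the gamma-hedging strategy being well defined --- the linear system \eqref{eq: gamma hedging condition} is assumed solvable with uniformly bounded weights $q^i_t$; the theorem then yields $|\Pi^\pi_T-V^0_T|\to 0$ as $\|\pi\|\to 0$, which, since $V^0_T=G^0(W)$, is the assertion. The only step that genuinely requires care is the first one: exhibiting a lift of the given finite-$p$-variation path $W$ that lies in (the reduced analogue of) $\mathcal{B}^p$, so that Theorem~\ref{thm: sure clark ocone formula diffusive rough paths} may legitimately be applied; everything downstream is a mechanical substitution of the Clark--Ocone representation into the gamma-hedging theorem, mirroring precisely how Theorem~\ref{thm:europeanOptionsTheorem} was deduced from Theorem~\ref{thm:gammaHedging} in the European case. (One could also note, via Lemma~\ref{lem:sControllableImpliesWControllabel}, that the same argument would go through if the hedging instruments were quoted as risk-neutral prices under the diffusion \eqref{eq:diffusionSDE} rather than directly in terms of $W$.)
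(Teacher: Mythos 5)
Your overall route is the paper's: represent each quoted price via the rough-path Clark--Ocone formula (Theorem~\ref{thm: sure clark ocone formula diffusive rough paths}) so that equation \eqref{eq:representableAsIntegral} holds with $m\equiv 0$, observe that $V^i_T=G^i(W)$, and feed the resulting representations into Theorem~\ref{thm:gammaHedging}. That part of the composition is correct, and your remarks about the boundedness of the $q^i$ being part of the strategy's definition and about the conditional expectation degenerating at $t_0=T$ are fine.

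The genuine gap is the one you yourself flag and then leave open: the corollary is stated for an \emph{arbitrary} path of finite $p$-variation, whereas Theorem~\ref{thm: sure clark ocone formula diffusive rough paths} requires a rough path in $\mathcal{B}^p$, i.e.\ a limit of smooth traces equipped with their Brownian lifts. Your proposed lifts cover only special cases (smooth or piecewise-linear traces, realised Brownian paths), and the fallback via the reduced lift \eqref{eq:enhancerFromBracket} with bracket $tI$ both imposes a symmetry hypothesis on the $\Gamma^{G^i}$ that the corollary does not make (and which fails for, e.g., signature payoffs carrying L\'evy-area terms) and would still require a reduced analogue of Proposition~\ref{prop: paths from diffusions are dense in space of diffusive paths}, which you do not supply. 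The paper closes the gap by a variation-scale shift rather than by constructing a lift directly at level $p$: writing $C^{0,p'\text{-}\mathrm{var}}_0$ for the closure of the smooth paths started at $0$ in $p'$-variation, one has that every path of finite $p$-variation lies in $C^{0,p'\text{-}\mathrm{var}}_0$ for any $p'\in(p,3)$ \cite[Corollary 5.35]{friz2010}; since continuous $W$-controllability at exponent $p$ implies it at the larger exponent $p'$ and $p'<3$ is preserved, the entire argument can be re-run at level $p'$, where membership in $\mathcal{B}^{p'}$ is available by construction. Without this (or an equivalent) approximation step, your argument establishes the corollary only for the restricted classes of paths you list, not for the class stated.
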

\begin{proof}
Let $C^{0,\pvar}_{0}$ denote the closure of the set of smooth paths starting at $0$ in the $p$-variation norm and let $C^{\pvar}_{0}$ denote the set of all paths of finite $p$-variation starting at 0. The result  for $W_t \in C^{0,\pvar}_{0}$ is an immediate consequence of Theorem \ref{thm:gammaHedging} and Theorem \ref{thm: sure clark ocone formula diffusive rough paths}. If $2<q<p$ then $C^{0,\qvar}_{0} \subseteq C^{\pvar}_{0}$ \correction{\cite[Corollary 5.35]{friz2010}}. This gives the
result.
\end{proof}

\medskip

In order to identify the Gubinelli derivatives of It\^o integrals
we will make frequent use of the following proposition.

\begin{proposition}[It\^o integral is regularizing]\label{prop: regularity of ito integral}
Let $W$ be a $m$-dimensional Brownian motion. Let $\phi:\Omega\times [0,T]\rightarrow \mathbb{R}^{d\times m}$ be a continuous adapted process with $\mathbb{E}\left[\int_0^T (\phi_s\phi_s^\top)_{ij}ds\right]<\infty$, for all $1\leq i,j\leq d$. Set $X_t = \int_0^t \phi_u dW_u$, 
for all $t\in [0,T]$. Suppose that almost-all sample paths of $\phi$ are of finite $p$-variation. Then, with probability one:
    \begin{enumerate}[label=(\roman*)]
        \item $X\in C^{p\text{-var}}([0,T];\mathbb{R}^{d})$;
        \item  $(X,\phi)\in\mathcal{D}^{\correction{p}\text{-var}}_{W}$, that is to say, the remainder $R^{X}$, defined as $R^X_{s,t}=X_{s,t}-\phi_s W_{s,t} = \int_s^t \phi_{s,u}dW_u$, is of finite $\correction{\frac{p}{2}}$-variation;
        \item for any $f\in \mathcal{C}^{3}_b(\mathbb{R}^d,\mathbb{R}^{d\times m})$,  $(f(X),Df(X)\phi)$ belongs to $\mathcal{D}_W^{\correction{p}\text{-var}}$. 
    \end{enumerate}
\end{proposition}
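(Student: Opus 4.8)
The plan is to combine the Burkholder–Davis–Gundy (BDG) inequality with the Kolmogorov-type continuity criterion for $p$-variation (e.g.\ \cite[Theorem 3.1 or Corollary A.2 in the rough-path literature]{friz2010,frizhairer2020}) to upgrade moment bounds on increments into almost-sure finite $p$-variation statements. For part (i), I would fix an interval $[s,t]$ and apply BDG to the martingale $u \mapsto X_{s,u} = \int_s^u \phi_r\,dW_r$ to get, for large $m$,
\[
\mathbb{E}\left[ |X_{s,t}|^m \right] \leq C_m\, \mathbb{E}\left[ \left( \int_s^t |\phi_r|^2\,dr \right)^{m/2} \right] \leq C_m\, \mathbb{E}\left[ \sup_{r \in [0,T]} |\phi_r|^m \right] |t-s|^{m/2}.
\]
Since almost all paths of $\phi$ have finite $p$-variation, they are in particular bounded, and with an appropriate integrability hypothesis on $\sup_r|\phi_r|$ (which follows from the finite-$p$-variation assumption together with the $L^2$ bound, after a localisation argument) the right-hand side is $\leq C |t-s|^{m/2}$. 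Taking $m$ large enough that $\frac{m}{2} - 1 > \frac{m}{p}$, i.e.\ $m > \frac{2p}{p-2}$, the Kolmogorov $p$-variation criterion gives $X \in C^{p\text{-var}}([0,T];\mathbb{R}^d)$ almost surely.

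For part (ii), the key observation is that the remainder is itself an Itô integral: $R^X_{s,t} = \int_s^t \phi_{s,u}\,dW_u = \int_s^t (\phi_u - \phi_s)\,dW_u$. Applying BDG again,
\[
\mathbb{E}\left[ |R^X_{s,t}|^m \right] \leq C_m\, \mathbb{E}\left[ \left( \int_s^t |\phi_{s,u}|^2\,du \right)^{m/2} \right] \leq C_m\, \mathbb{E}\left[ \left( \sup_{u \in [s,t]} |\phi_{s,u}|^2 \right)^{m/2} \right] |t-s|^{m/2}.
\]
Now $\sup_{u \in [s,t]}|\phi_{s,u}| \leq \|\phi\|_{p\text{-var};[s,t]}$, and since $w(s,t) := \|\phi\|_{p\text{-var};[s,t]}^p$ is a control function, one obtains a bound of the shape $\mathbb{E}[|R^X_{s,t}|^m] \leq C\, \mathbb{E}[w(s,t)^{m/p}]\,|t-s|^{m/2}$; after controlling the $\phi$-factor (again by boundedness/localisation) this yields $|R^X_{s,t}|^m \lesssim |t-s|^{m/2}$ in expectation, and the Kolmogorov criterion for the two-variable function $R^X$ — choosing $m$ so that $\frac{m}{2} - 1 > \frac{2m}{p}$, i.e.\ $m > \frac{2p}{p-4}$ (valid once $p<3$ after noting $R^X$ needs only $\tfrac{p}{2}$-variation control) — gives $R^X \in C^{p/2\text{-var}}([0,T]^2;\mathbb{R}^d)$ almost surely, which is exactly $(X,\phi) \in \mathcal{D}_W^{p\text{-var}}$.

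For part (iii), I would invoke the stability of controlled rough paths under composition with smooth functions: given $(X,\phi) \in \mathcal{D}_W^{p\text{-var}}$ and $f \in \mathcal{C}^3_b$, the pair $(f(X), Df(X)\phi)$ lies in $\mathcal{D}_W^{p\text{-var}}$ by the standard computation $f(X)_{s,t} - Df(X_s)\phi_s W_{s,t} = Df(X_s)R^X_{s,t} + [f(X)_{s,t} - Df(X_s)X_{s,t}]$, where the bracketed term is $O(|X_{s,t}|^2)$ by Taylor's theorem and hence has finite $\tfrac{p}{2}$-variation since $X$ has finite $p$-variation, and $Df(X_s)R^X_{s,t}$ has finite $\tfrac{p}{2}$-variation by part (ii) together with boundedness of $Df$ (and $Df(X)$ itself has finite $p$-variation since $X$ does and $D^2 f$ is bounded). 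This is \cite[Lemma 7.3]{frizhairer2020} adapted to $p$-variation and requires no probability beyond parts (i)–(ii).

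The main obstacle I anticipate is the integrability/localisation bookkeeping in parts (i) and (ii): BDG produces $\mathbb{E}[(\sup|\phi|)^m]$-type factors, but the hypotheses only give $\mathbb{E}[\int_0^T(\phi\phi^\top)_{ij}\,ds] < \infty$ and almost-sure finite $p$-variation, not high-moment control on $\sup|\phi|$. The clean way around this is to run the argument on stopping times $\tau_N = \inf\{t : \|\phi\|_{p\text{-var};[0,t]} > N\} \wedge T$: on $[0,\tau_N]$ all the factors are bounded by deterministic constants, the moment bounds hold, the Kolmogorov criterion applies, and one concludes the desired path regularity on $[0,\tau_N]$; since $\tau_N \uparrow T$ almost surely (finite $p$-variation of $\phi$), the conclusions hold on all of $[0,T]$ almost surely. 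I would make this localisation explicit once and then treat (i), (ii), (iii) as above.
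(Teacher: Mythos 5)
Parts (i) and (iii) of your proposal are sound. For (i), BDG plus a Kolmogorov-type $p$-variation criterion (with your stopping-time localisation) is a standard alternative to the paper's route, which simply invokes the known fact that continuous semimartingales lift to $p$-rough paths for $p\in(2,3)$. For (iii), your Taylor-expansion argument is exactly the chain rule for controlled paths that the paper cites.

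The genuine gap is in part (ii). You correctly obtain the BDG bound involving $\|\phi\|^m_{p\text{-var};[s,t]}\,|t-s|^{m/2}$, but you then ``control the $\phi$-factor by boundedness/localisation'', which collapses the estimate to $C|t-s|^{m/2}$. From a moment bound of order $|t-s|^{m/2}$ no Kolmogorov criterion can yield finite $\tfrac{p}{2}$-variation when $p<3$: that would require an effective H\"older exponent exceeding $\tfrac{2}{p}>\tfrac{2}{3}>\tfrac12$, whereas $|t-s|^{1/2}$ is the generic martingale scaling already present at the first level. Your own exponent condition confirms the failure: $\tfrac{m}{2}-1>\tfrac{2m}{p}$ rearranges to $m\left(\tfrac12-\tfrac{2}{p}\right)>1$ with a negative coefficient for $p\in(2,3)$, so it has no solution (and ``$m>\tfrac{2p}{p-4}$'' with $p-4<0$ is vacuous once the inequality is flipped). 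The entire content of (ii) is that the remainder gains over the first level precisely because $\phi_{s,u}$ is small for $u$ near $s$ --- that is, because of the factor $\|\phi\|_{p\text{-var};[s,t]}$ that you discard. To repair the argument you must retain it: either apply a Kolmogorov/Garsia--Rodemich--Rumsey criterion relative to the control $w(s,t)=\|\phi\|^p_{p\text{-var};[s,t]}+(t-s)$ rather than to $|t-s|$, or argue pathwise as the paper does. The paper's route is Dubins--Schwarz: for fixed $s$ the process $R^X_{s,\cdot}$ is a time-changed Brownian motion run for time $\int_s^t\phi_{s,u}^2\,du\leq\|\phi\|^2_{p\text{-var};[s,t]}(t-s)$, so $\alpha$-H\"older continuity of Brownian motion with $\tfrac13<\alpha<\tfrac12$ gives $\lvert R^X_{s,t}\rvert^{p/2}\lesssim w_\phi(s,t)^{\alpha}(t-s)^{p\alpha/2}$ with $w_\phi$ the $p$-variation control of $\phi$, and the right-hand side is a control by Lemma \ref{lemma:productControls}, which is exactly what finite $\tfrac{p}{2}$-variation requires.
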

\correction{We do not claim the result is new, but as we could not find a proof of the second part in the literature, we give a proof in Appendix \ref{proofs}.}

\subsection{Examples}
\label{sec:examples}

\subsubsection{Simple payoffs}

Let $C^{\infty}_{\text{poly}}(\R^{n}; \R)$ denotes the space of smooth functions with all derivatives of polynomial growth. 
\begin{definition}[Simple payoffs]
    For each $h\in C^{\infty}\left([0,T];\mathbb{R}^d\right)$, set
    \begin{align*}
    W(h)&\coloneqq h_T W_T - h_0 W_0 - \int_{0}^{T} W_t \, dh_t \\
        &= \int h_t \, d W_t \quad a.s.
    \end{align*}    
    The set of all payoffs of the form
    \begin{equation*}
        G = f\left(W(h_1),\dots,W(h_n)\right),
    \end{equation*}
    where $h_1,\dots,h_n\in C^{\infty}$ and $f\in C^{\infty}_{\text{poly}}\left(\mathbb{R}^{n};\mathbb{R}\right)$, is the space of \emph{simple payoffs with maturity $T$}. We denote this space $\mathcal{S}^{\infty}$.
\end{definition}
We have defined $W(h)$ above as a Young integral with continuous integrand, but have noted that it is almost surely equal to an It\^o integral. The Young-integral formulation makes it clear that $G$ is defined surely on $\Omega$ and is continuous in the sup norm.

\begin{theorem} \label{theorem: simple payoffs are rough hedgeable}
    Each $G\in\mathcal{S}^{\infty}$ is continuously $W$-controllable with \begin{equation*}
    \Delta_u^G = \mathbb{E}_u[D_uG] \text{ and } \Gamma^G_u = \mathbb{E}_u[D_u\Delta^G_u],~u\in[0,T]
    \end{equation*}
    \correction{when the delta and gamma are computed in the Bachelier model $S=W$ and $D_u$ denotes the Malliavin derivative evaluated at time $u$.}
\end{theorem}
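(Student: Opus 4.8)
The plan is to use the Clark--Ocone formula together with the Gaussian structure of simple payoffs to write $\Delta^G$ and $\Gamma^G$ explicitly as smooth functions of a controlled path, and then to invoke Proposition~\ref{prop: regularity of ito integral} and the continuity of the rough integral. Since $f$ and its derivatives have polynomial growth and $(W(h_1),\dots,W(h_n))$ is jointly Gaussian, $G\in L^2(\Omega)$; as $D_uW(h_i)=h_i(u)$, the Malliavin chain rule gives $D_uG=\sum_i\partial_i f(W(h))\,h_i(u)$, and \eqref{eq:delta} with the Clark--Ocone formula yields $\Delta^G_u=\mathbb{E}_u[D_uG]=\sum_i\mathbb{E}_u[\partial_i f(W(h))]\,h_i(u)$. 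Now set $a_i(u):=\int_0^u h_i(r)\cdot dW_r$, so $a=(a_1,\dots,a_n)=\int_0^\cdot H_r\,dW_r$ with $H_r\in\R^{n\times d}$ the smooth deterministic matrix with rows $h_i(r)^\top$; integration by parts gives $a_i(u)=h_i(u)\cdot W_u-\int_0^u W_r\cdot h_i'(r)\,dr$, a surely defined, sup-norm-continuous functional of the path. Conditioning on $\mathcal F_u$, the vector $W(h)-a(u)$ is independent of $\mathcal F_u$ and centred Gaussian with covariance $\Sigma(u)=\big(\int_u^T h_i\cdot h_j\,dr\big)_{i,j}$, so $\mathbb{E}_u[\phi(W(h))]=F_\phi(u,a(u))$ for every smooth $\phi$ of polynomial growth, where $F_\phi(u,a):=\int_{\R^n}\phi(a+y)\,\nu_{\Sigma(u)}(dy)$. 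Since $\Sigma$ is smooth on $[0,T]$ with $\Sigma(T)=0$, the function $F_\phi$ is smooth on $[0,T]\times\R^n$ (Gaussian smoothing by a covariance degenerating smoothly to $0$ is the identity), with $F_\phi(T,\cdot)=\phi$, all $a$-derivatives of polynomial growth, and $\partial_{a_k}F_\phi=F_{\partial_k\phi}$.

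\emph{$W$-controllability.} Hence $\Delta^G_u=\sum_i F_{\partial_i f}(u,a(u))\,h_i(u)$. Applying Proposition~\ref{prop: regularity of ito integral} to the smooth integrand $H$ shows that, almost surely, $a\in C^{\pvar}([0,T];\R^n)$ and $(a,H)\in\mathcal{D}_W^{\pvar}$, with $R^a_{s,t}=\int_s^t(H_r-H_s)\,dW_r$ of finite $\halfpvar$. Adjoining the finite-variation path $u$ with Gubinelli derivative $0$, composing with the maps $F_{\partial_i f}$ (restricted to the compact range of $(u,a(u))$, where they are $C^3_b$), and multiplying by the smooth $h_i$, gives $(\Delta^G,\Gamma^G)\in\mathcal{D}_W^{\pvar}$ almost surely with $\Gamma^G_u=\sum_{i,k}\partial_{a_k}F_{\partial_i f}(u,a(u))\,h_i(u)\otimes h_k(u)=\sum_{i,k}\mathbb{E}_u[\partial_i\partial_k f(W(h))]\,h_i(u)\otimes h_k(u)$. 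Using $\partial_{a_k}F_{\partial_i f}=F_{\partial_i\partial_k f}$, the commutation $D_u\mathbb{E}_u[\cdot]=\mathbb{E}_u[D_u\,\cdot]$, and $D_uW(h_j)=h_j(u)$, one checks this equals $\mathbb{E}_u[D_u\Delta^G_u]$, the claimed formula; it is symmetric, as fits the reduced-rough-path description of the Bachelier model $S=W$. Concretely, writing $R^{\Delta^G}_{s,t}=\Delta^G_{s,t}-\Gamma^G_sW_{s,t}$ and Taylor expanding each $F_{\partial_i f}(t,a(t))h_i(t)$ about $(s,a(s))$ with $a_{s,t}=H_sW_{s,t}+R^a_{s,t}$, one sees $R^{\Delta^G}$ is a finite sum of products of bounded factors with one of the controls $\|R^a\|_{\halfpvar;[s,t]}$, $\|a\|_{\pvar;[s,t]}^2$, $\|W\|_{\pvar;[s,t]}^2$, $|t-s|$ (using Lemma~\ref{lemma:productControls} and the elementary bound $\|X\,Y_{\cdot,\cdot}\|_{r\text{-var}}\le\|X\|_\infty\|Y\|_{r\text{-var}}$), hence has finite $\halfpvar$; so $G$ is $W$-controllable.

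\emph{Continuous controllability.} $G$ is sup-norm continuous by the Young-integral formula above and continuity of $f$. For $\zeta$, observe that every object produced above is an explicit continuous image of the path: $W\mapsto a(W)$ is a Young integral with smooth integrand, hence continuous $C^{\pvar}(\R^d)\to C^{\pvar}(\R^n)$; $W\mapsto\mathbb{W}$ with $\mathbb{W}_{s,t}:=\tfrac12\big(W_{s,t}\otimes W_{s,t}-(t-s)I\big)$ — the reduced Bachelier lift, which on the full-measure set where $W$ is Brownian agrees with the symmetrisation of the It\^o enhancement — is continuous into $C^{\halfpvar}([0,T]^2)$; composition with the smooth-up-to-$T$ functions $F_{\partial_i f}$, $F_{\partial_i\partial_k f}$ is continuous on $p$-variation spaces; and the pieces of $R^{\Delta^G}$ identified above depend continuously on $(a,W)$ in the $\halfpvar$ topology, the only non-classical ingredient being the finite-$\halfpvar$ gain, supplied continuously by Proposition~\ref{prop: regularity of ito integral} for $R^a$ (or, if one prefers to represent $F_{\partial_i f}(\cdot,a(\cdot))$ as a rough integral against $(W,\mathbb{W})$, by the continuity statement of Theorem~\ref{thm: gubinelli theorem}). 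Composing these continuous maps defines $\zeta$; since each piece agrees with the corresponding probabilistic object on the full-measure set where $W$ is Brownian motion with its It\^o lift, $G$ is continuously controlled.

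\emph{Main obstacle.} The delicate point is the last step: the regularity \emph{gain} from $p$ to $\tfrac p2$ in the remainder $R^{\Delta^G}$ must be realised by a map continuous on $p$-variation spaces, not merely hold pathwise. The pricing-function representation resolves this by exhibiting $\Delta^G$ as a smooth function of the controlled path $(a,H)$, so that the gain comes --- continuously --- from Proposition~\ref{prop: regularity of ito integral} and Gubinelli's theorem; the remaining care is to check that the dependence of $F_\phi$ on $u$ through the degenerating covariance $\Sigma(u)$ causes no blow-up as $u\to T$, which holds because $\Sigma$ is smooth on the closed interval with $\Sigma(T)=0$.
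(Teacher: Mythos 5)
Your proof is correct, but it takes a genuinely different route from the paper's. The paper iterates the Clark--Ocone formula: having written $\Delta^G_u=\sum_i h_i(u)\mathbb{E}_u[\partial_i f(X)]$ and $\Gamma^G_u=\sum_{i,j}h_i(u)\otimes h_j(u)\mathbb{E}_u[\partial_i\partial_j f(X)]$, it applies Clark--Ocone \emph{again} to each $\partial_i f(X)$ and $\partial_i\partial_j f(X)$ to exhibit these conditional expectations as It\^o integrals, and then gets the $p$-variation of $\Gamma^G$ and the $\tfrac p2$-variation of the remainder directly from Proposition~\ref{prop: regularity of ito integral}; the remainder is expressed as an It\^o integral of an increment plus a smooth term. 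You instead exploit the Gaussian structure: conditioning gives the Markovian pricing representation $\mathbb{E}_u[\phi(W(h))]=F_\phi(u,a(u))$ with $F_\phi$ smooth up to $u=T$, and the controlled-path structure then follows from the chain rule (Proposition~\ref{prop: regularity of ito integral}(iii) with time adjoined) applied to the explicitly controlled path $a$. Both arrive at the same formulae for $\Delta^G$ and $\Gamma^G$. What each buys: the paper's argument stays inside the Malliavin/Clark--Ocone machinery that the rest of Section~5 is built on and does not rely on $X$ being Gaussian; your argument is more elementary once the representation is in hand and, importantly, delivers the \emph{continuously} controlled part of the statement more convincingly — the paper's proof establishes the pathwise (almost-sure) controllability and the formulae but does not explicitly construct the continuous map $\zeta$, whereas your pricing-function representation exhibits $\Delta^G$, $\Gamma^G$ and $R^{\Delta^G}$ as explicit continuous functionals of the trace. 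Two minor points: invoking Proposition~\ref{prop: regularity of ito integral} for $R^a$ is unnecessary overkill, since $H$ is smooth and deterministic so $R^a_{s,t}=(H_t-H_s)W_t-\int_s^tW_rH_r'\,dr$ is $O(|t-s|)$ pathwise and depends linearly and sup-norm-continuously on $W$ (this actually simplifies your continuity argument); and the lift $\mathbb{W}$ you introduce is not required by the definition of $\zeta$, which only asks for $(\Delta^G,\Gamma^G,R^G)$.
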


\begin{proof}
	For brevity, we denote $X=\left(W(h_1),\dots,W(h_n)\right)$. By 
    definition of the Malliavin derivative,
	\begin{equation*}
	    D_uG = \sum_{i=1}^n \frac{\partial f(X)}{\partial x_i}h_i(u).
	\end{equation*}
	By the Clark--Ocone formula
	\begin{equation*}
	  \Delta^G_u = \sum_{i=1}^n h_i(u)\mathbb{E}_u\left[\frac{\partial f(X)}{\partial x_i}\right].
   \end{equation*}
   Hence
   \begin{equation}
	    \mathbb{E}_u\left[D_u\Delta^G_u\right] = \sum_{1\leq i,j\leq n} h_i(u) \otimes h_j(u)\mathbb{E}_u\left[\frac{\partial^2 f(X)}{\partial x_i\partial x_j}\right]. \label{eq: gamma of simple payoffs}  
	\end{equation}
	Applying Clark--Ocone to each $\frac{\partial^2 f(X)}{\partial x_i\partial x_j}$, $1\leq i,j\leq n$, we have
	\begin{equation*}
	    \mathbb{E}_u\left[\frac{\partial^2 f(X)}{\partial x_i\partial x_j}\right] = \mathbb{E}\left[\frac{\partial^2 f(X)}{\partial x_i\partial x_j}\right] + \int_0^u \left(\sum_{k=1}^n h_k(s)\mathbb{E}_s\left[\frac{\partial^3 f(X)}{\partial x_i\partial x_j \partial x_k}\right)\right]dW_s,
	\end{equation*}
	for all $u\in[0,T]$. By Proposition \ref{prop: regularity of ito integral}, $\left\{\mathbb{E}_u\left[\frac{\partial^2 f(X)}{\partial x_i \partial x_j}\right]:u\in[0,T]\right\}\in C^{p\text{-var}}$ almost-surely. Also, $h_ih_j\in C^\infty$, $\forall i,j$. Again, by Proposition \ref{prop: regularity of ito integral} and \eqref{eq: gamma of simple payoffs}, we deduce that $\left\{\mathbb{E}_u\left[D_u\Delta^G_u\right]:u\in[0,T]\right\}$ belongs to $C^{p\text{-var}}$ almost surely. For the remainder term, let $(u,v)\in\Delta_T$. We have that
	\begin{align*}
	    R_{u,v} &\coloneqq \Delta_v^G - \Delta_u^G - W_{u,v}\mathbb{E}_u\left[D_u \Delta^G_u\right] \\
	    &= \sum_{i=1}^n \left\{ h_i(v)\mathbb{E}_v\left[\frac{\partial f(X)}{\partial x_i}\right]-h_i(u)\mathbb{E}_u\left[\frac{\partial f(X)}{\partial x_i} \right] \right\}
        \\     
        &\quad-\sum_{i=1}^n \sum_{j=1}^n h_i(u) \otimes h_j(u) \cdot W_{u,v} \mathbb{E}_u\left[\frac{\partial^2 f(X)}{\partial x_i \partial x_j}\right] \\
	    &= \sum_{i=1}^n h_i(u)\left\{\mathbb{E}_v\left[\frac{\partial f(X)}{\partial x_i}\right]-\mathbb{E}_u\left[\frac{\partial f(X)}{\partial x_i}\right] \right\} \\
        &\quad
     -\sum_{i=1}^n \sum_{j=1}^n h_i(u) \otimes h_j(u) \cdot W_{u,v} \mathbb{E}_u\left[\frac{\partial^2 f(X)}{\partial x_i\partial x_j}\right]
	     + \sum_{i=1}^n\left(h_i(v)-h_i(u)\right)\mathbb{E}_v\left[\frac{\partial f(X)}{\partial x_i}\right] \\
	    &= \sum_{i=1}^n h_i(u)\int_u^v \sum_{j=1}^n \left\{h_j(s)\mathbb{E}_s\left[\frac{\partial^2 f(X)}{\partial x_i\partial x_j}\right]-h_j(u)\mathbb{E}_u\left[\frac{\partial^2 f(X)}{\partial x_i\partial x_j}\right]\right\}dW_s \\
	    &\quad + \sum_{i=1}^n\left(h_i(v)-h_i(u)\right)\mathbb{E}_v\left[\frac{\partial f(X)}{\partial x_i}\right],
	\end{align*}
	where we have used the Clark--Ocone formula for $\frac{\partial f(X)}{\partial x_i}$, $1\leq i\leq n$, in the last line. By Proposition \ref{prop: regularity of ito integral}, we deduce that the process $R\in C^{\frac{p}{2}\text{-var}}$ and thus that $\Gamma^G_u = \mathbb{E}_u\left[D_u\Delta^G_u\right]$ as claimed.
\end{proof}

Note that the simple payoffs are dense in $L^2(\Omega)$.

Recall that the Malliavin derivative \cite[Chapter 1]{nualart2006malliavinCalculus} is defined by
first defining its action on simple payoffs. This gives an unbounded, but closeable, operator. One uses the $L^{\tilde{p}}$ closure of its graph to obtain
the space $\mathbb{D}^{1,\tilde{p}}$ for any $\tilde{p}\geq 1$ of Malliavin differentiable functions.  Thus by relating the delta and gamma of
simple payoffs to the Malliavin derivative, we have obtained a fundamental
connection between Malliavin derivatives and the delta and gamma
for any payoffs that can be well approximated in both Malliavin and rough-path topologies. This idea is pursued further in \cite{andreiThesis}.

Note that the Gubinelli derivative $\Gamma$ is a symmetric tensor for simple payoffs. As a result it is possible to replicate
simple payoffs by gamma-hedging strategies in European options.

\subsubsection{Asian options \correction{in the Bachelier model}}
Consider payoffs of the form $G=f(A_T)$ where $A_T = \frac{1}{T}\int_0^T W_udu$ is the arithmetic average of the stock price over the lifetime of the option and $f$ is smooth. By It\^o's lemma applied to $\{tW_t\}_{t\in[0,T]}$, we see that
\[\int_0^T W_u\, du = -\int_0^T u\, dW_u + TW_T = \int_0^T (T-u)\, dW_u,
\]
so that $A_T = \int_0^T(1-\frac{u}{T})dW_u$. Using Theorem \ref{theorem: simple payoffs are rough hedgeable} with $h(u) = 1-\frac{u}{T}$, we conclude that $G=f(A_T)$ is smoothly $W$-controllable. Moreover, the Greeks are given by $\Delta^G_u = \left(1-\frac{u}{T}\right)\mathbb{E}_u[f'(A_T)]$ and $\Gamma^G_u = \left(1-\frac{u}{T}\right)^2\mathbb{E}_u[f''(A_T)]$.

\subsubsection{Signature payoffs}

We call a row vector $\alpha = (j_1,\dots,j_l)$, where $j_i\in\{0,1\correction{,\ldots, d}\}$ for $i\in\{1,2,\dots,l\}$, a \emph{multi-index} of length $l\coloneqq l(\alpha)\in\mathbb{N}$. Denote by $n(\alpha)$ the number of components of $\alpha$ which are equal to zero and by $\mathcal{M}$ the set of all multi-indices. That is
\begin{equation*}
    \mathcal{M} \coloneqq \left\{ (j_1,\dots,j_l):j_i\in\{0,1,2,\ldots, d\}, i\in \{1,\dots,l\}, l\in\mathbb{N} \right\}\cup \{v\},
\end{equation*}
where $v$ denotes the multi-index of length zero. Given $\alpha$ and $\beta$, we write $\alpha \star \beta$ for the concatenation of the indices. For instance, $(0,1)\star (1,0,0)=(0,1,1,0,0)$. Lastly, we write $\alpha-$ for the multi-index in $\mathcal{M}$ obtained by deleting the last component of $\alpha$. For instance, $(1,0,0)- = (1,0)$.

\begin{definition}[Iterated time-augmented Brownian motion]
Write $X_t = (t,W_t)$ for time-augmented Brownian motion. For any $\alpha\in\mathcal{M}$, define the iterated integral
\begin{equation*}
    I_{\alpha;t} \coloneqq \begin{cases}
			1, & \text{if $l(\alpha)=0$,} \\
            \int_0^t I_{\alpha-;s}ds, & \text{if $l\geq 1$ and $j_l=0$,} \\
            \int_0^t I_{\alpha-;s}dW_s, & \text{if $l\geq 1$ and $j_l\correction{\neq 0}$.}
		 \end{cases}
\end{equation*}
In contrast to simple payoffs, signature payoffs are only defined almost surely on $C^0([0,T])$. One can define a signature payoff surely if the lift ${\mathbb W}_t$ is given in addition to the trace.

\end{definition}
\begin{lemma} \label{lemma: iterated integral is an ito integral}
    Suppose that $\alpha \in\mathcal{M}$. If $l(\alpha)>n(\alpha)$, then we can uniquely write $\alpha = \beta \star (i)\star (0)^k$ for some unique $k\in\{0,1,2,\dots\}$ and $\beta \in \mathcal{M}$ and $1 \leq i \leq d$. Furthermore, 
    \begin{equation} \label{eq: iterated integral is an ito integral}
        I_{\alpha;T} = \int_0^T \frac{(T-t)^k}{k!}I_{\beta;t}dW^i_t.
    \end{equation}
    This holds almost surely if $W_t$ is taken to be a Brownian motion and the integrals are It\^o integrals, and surely if $W_t$ is taken to be a rough path with a given trace ${\mathbb W}_t$ and all integrals are taken to be rough-path integrals.
\end{lemma}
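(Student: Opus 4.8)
The plan is to strip the trailing zeros from $\alpha$ using the Cauchy formula for repeated integration, and then collapse the resulting iterated time-integral into a single stochastic integral by one integration by parts. Exactly the same two steps work for It\^o integrals and for rough integrals; only the justification of the integration by parts changes.

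First, the decomposition. Since $l(\alpha)>n(\alpha)$, not every component of $\alpha$ vanishes, so $\alpha$ has exactly $k\geq 0$ trailing zeros for some $k<l(\alpha)$; then $i:=j_{l(\alpha)-k}\neq 0$, so $1\leq i\leq d$, and with $\beta:=(j_1,\dots,j_{l(\alpha)-k-1})\in\mathcal{M}$ we get $\alpha=\beta\star(i)\star(0)^k$. This is unique, since in any such decomposition $k$ is forced to be the number of trailing zeros of $\alpha$, and this then pins down $i$ and $\beta$.

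Next, the computation. Put $A_t:=I_{\beta\star(i);t}=\int_0^t I_{\beta;s}\,dW^i_s$; unwinding the recursion, the block $(0)^k$ applies $k$ successive time-integrations to $A$, so $I_{\alpha;t}$ is the $k$-fold iterated integral of the continuous path $A$. For $k=0$ this is $A_T$, which is \eqref{eq: iterated integral is an ito integral}; for $k\geq1$ Cauchy's formula for repeated integration gives $I_{\alpha;T}=\frac{1}{(k-1)!}\int_0^T(T-s)^{k-1}A_s\,ds$. Integrating by parts against the finite-variation weight $s\mapsto-\frac1k(T-s)^k$, whose differential is $(T-s)^{k-1}\,ds$, and noting that the boundary terms vanish (because $A_0=0$ and $(T-s)^k=0$ at $s=T$), we obtain $I_{\alpha;T}=\frac{1}{k!}\int_0^T(T-s)^k\,dA_s=\frac{1}{k!}\int_0^T(T-s)^k I_{\beta;s}\,dW^i_s$, which is \eqref{eq: iterated integral is an ito integral}; the last step uses that $A$, being the integral of $I_\beta$ against $W^i$, has differential $I_{\beta;s}\,dW^i_s$.

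It remains to legitimise these manipulations in each setting. For Brownian $W$ and It\^o integrals, $A$ is a continuous semimartingale and the weight has finite variation, so the integration by parts is the usual product rule, the covariation term vanishing because one factor has finite variation; that $I_\beta$, $A$ and $I_\alpha$ are well-defined continuous processes follows by induction on the length of the multi-index from Proposition \ref{prop: regularity of ito integral}. For a rough path $\mathbf{W}=(W,\mathbb{W})$ and rough integrals, one checks by induction on $l(\beta)$ --- using that rough integration maps controlled rough paths to controlled rough paths (Theorem \ref{thm: gubinelli theorem}) --- that $(I_\beta,I_{\beta-})$ is a controlled rough path, so $A=\int I_\beta\,d\mathbf{W}$ is defined, and the integration by parts of a smooth weight against the controlled rough path $A$ is just the elementary product rule for controlled rough paths (equivalently, Young integration by parts). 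The two statements are compatible: by Proposition \ref{prop: ito and rough path integrals coincide}, when $\mathbf{W}$ is It\^o-enhanced Brownian motion every $I_\alpha$ has rough-integral value almost surely equal to its It\^o-integral value, and both sides of \eqref{eq: iterated integral is an ito integral} are assembled from such integrals. I expect the only real obstacle to be this bookkeeping --- verifying, before each rough integral is formed, that the integrand is a genuine controlled rough path with the stated Gubinelli derivative --- since the analytic content reduces to Cauchy's formula and a single integration by parts.
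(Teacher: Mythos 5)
Your proposal is correct, but it takes a genuinely different route from the paper. The paper proves \eqref{eq: iterated integral is an ito integral} by induction on $k$: the inductive step writes $I_{\beta\star(i)\star(0)^{k+1};T}=\int_0^T\int_0^t\frac{(t-s)^k}{k!}I_{\beta;s}\,dW^i_s\,dt$ and interchanges the $dt$ and $dW^i_s$ integrals, invoking the stochastic Fubini theorem in the It\^o setting and the rough-path Fubini theorem of Gerasimovi\v{c}--Hairer in the rough setting. You instead collapse the $k$ trailing time-integrations of the continuous path $A_t=\int_0^t I_{\beta;s}\,dW^i_s$ deterministically via Cauchy's repeated-integration formula, and then perform a single integration by parts against the $C^1$ weight $s\mapsto-\tfrac1k(T-s)^k$, finishing with associativity of integration ($dA_s=I_{\beta;s}\,dW^i_s$). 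What your route buys is that the only stochastic (resp.\ rough) ingredients are the elementary product rule for a semimartingale against a finite-variation function (resp.\ Young/controlled-path integration by parts) and associativity, so you avoid both Fubini theorems entirely; the paper's route is shorter to write but leans on heavier machinery, particularly in the rough-path case. One minor imprecision in your bookkeeping: $(I_\beta,I_{\beta-})$ is a controlled rough path only when $\beta$ ends in a nonzero index; if $\beta=\gamma\star(0)$ then $I_{\beta;s,t}=O(|t-s|)$ and the correct Gubinelli derivative is $0$, not $I_{\beta-}$ (pairing it with $I_{\beta-}$ would spoil the $\tfrac p2$-variation of the remainder). This does not affect your argument, since all you need is that $I_\beta$ is controlled with \emph{some} Gubinelli derivative so that $A=\int I_\beta\,d\mathbf{W}$ is well defined.
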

\begin{proof}
    We will proceed by induction on $k$. The case $k=0$ is true by definition of the iterated integrals.  From the induction hypothesis and either the stochastic Fubini theorem \cite[p.\ 210]{protter2005stochasticintegration} or the rough-path Fubini theorem \cite{gerasimovicHairer} we may compute:
    \begin{align*}
        I_{\beta \star (i) \star (0)^{k+1}; t}&=\int_0^T I_{\beta \star (i) \star (0)^k; t} dt \\
        &=\int_0^T \int_0^t \frac{(t-s)^k}{k!} I_{\beta ; s} \, dW^i_s \, dt \\
        &=\int_0^T \int_s^T \frac{(t-s)^k}{k!} I_{\beta ; s} \, dt \, dW^i_s \\
        &=\int_0^T \frac{(T-s)^{k+1}}{(k+1)!} I_{\beta ; s} \, dW^i_s.
    \end{align*}   
\end{proof}
\begin{theorem}
    For any $\alpha\in\mathcal{M}$, the payoff $I_{\alpha;T}$ is $W$-controllable. Furthermore, we have the following formula to compute the \correction{delta and gamma in the case of the Bachelier model $S=W$}:
    \begin{equation*}
    (\Delta^{I_{\alpha;T}})^{i_1} = 
    \begin{cases*}
     0, & \text{if $l(\alpha)=n(\alpha)$}\\
     \frac{(T-t)^k}{k!}I_{\gamma\star(0);t}, & \text{if $\alpha = \gamma\star(0,i_2)\star (0)^k$}\\
    \frac{(T-t)^k}{k!}I_{\gamma\star(i_1);t}, , & \text{if $\alpha = \gamma\star (i_1,i_2) \star (0)^k$}
    \end{cases*}
    \end{equation*}
    and
    \begin{equation*}
    (\Gamma^{I_{\alpha;T}})^{i_1.i_2} = 
    \begin{cases*}
    0, & \text{if $l(\alpha)=n(\alpha)$}\\
    0  & \text{if $\alpha = \gamma\star(0,i_2)\star (0)^k$}\\
    \frac{(T-t)^k}{k!}I_{\gamma;t}, & \text{if $\alpha = \gamma\star (i_1,i_2) \star (0)^k$}
    \end{cases*}
    \end{equation*}    
    where $1 \leq i_1, i_2 \leq d$.
\end{theorem}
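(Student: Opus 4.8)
The plan is to push everything through the single It\^o representation supplied by Lemma~\ref{lemma: iterated integral is an ito integral} and read off the delta and gamma from the shape of that representation, checking $W$-controllability with Proposition~\ref{prop: regularity of ito integral} along the way. First I would dispose of the degenerate case $l(\alpha)=n(\alpha)$: then $\alpha$ is a string of zeros, $I_{\alpha;T}=T^{l}/l!$ is a deterministic constant, the price process is constant, and $(\Delta,\Gamma)=(0,0)$, which is trivially a controlled path. For $l(\alpha)>n(\alpha)$, Lemma~\ref{lemma: iterated integral is an ito integral} writes $\alpha=\beta\star(i_2)\star(0)^k$ with $i_2\neq 0$ and, almost surely, $I_{\alpha;T}=\int_0^T \tfrac{(T-t)^k}{k!}I_{\beta;t}\,dW^{i_2}_t$ as an It\^o integral. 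The integrand $\phi_t:=\tfrac{(T-t)^k}{k!}I_{\beta;t}$ is continuous, adapted and square integrable on $\Omega\times[0,T]$ (iterated Wiener integrals have polynomial second moments and $(T-t)^k/k!$ is bounded), so conditioning on $\mathcal F_t$ annihilates the tail martingale increment $\int_t^T\phi\,dW^{i_2}$ and yields the price process $V^{I_{\alpha;T}}_t=\int_0^t\phi_s\,dW^{i_2}_s$. Comparing with \eqref{eq:delta} and using uniqueness in the martingale representation theorem identifies $\Delta^{I_{\alpha;T}}$ as the covector whose nonzero component is the one indexed by the last non-zero entry $i_2$ of $\alpha$, equal to $\phi_t$, with all other components zero; specialising $\beta=\gamma\star(0)$ and $\beta=\gamma\star(i_1)$ gives the two displayed formulae for the delta.

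It then remains to compute the Gubinelli derivative of $\Delta^{I_{\alpha;T}}$ with respect to $W$ and to confirm $W$-controllability, which I would do by an induction on the length of the multi-index that establishes simultaneously that every $I_\gamma$ has almost surely finite $p$-variation sample paths and satisfies the hypotheses of Proposition~\ref{prop: regularity of ito integral} (base case a constant; the inductive step either integrates in time, preserving finite variation, or is an It\^o integral of a lower-order iterated integral, where part~(i) of the Proposition applies). If $\beta=\gamma\star(0)$ then $I_{\beta;t}=\int_0^t I_{\gamma;s}\,ds$ has finite $1$-variation, hence so does $\Delta^{I_{\alpha;T}}$; since $p<3$ gives $p/2\geq 1$, the remainder $R^\Delta_{s,t}=\Delta_{s,t}$ has finite $p/2$-variation, so the Gubinelli derivative is $0$ and $\Gamma=0$. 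If $\beta=\gamma\star(i_1)$ with $i_1\neq0$ then $I_{\beta;t}=\int_0^t I_{\gamma;s}\,dW^{i_1}_s$ and Proposition~\ref{prop: regularity of ito integral}(ii) gives $(I_{\beta},I_{\gamma}e_{i_1})\in\mathcal D^{p\text{-var}}_W$; multiplying by the smooth scalar $\tfrac{(T-t)^k}{k!}$ preserves the controlled-path structure with Gubinelli derivative rescaled by the same factor (the cross term produced by the increments of $(T-t)^k/k!$ is $O(|t-s|)$ and is absorbed into the $p/2$-variation remainder). Hence $\Delta^{I_{\alpha;T}}$ has Gubinelli derivative $\tfrac{(T-t)^k}{k!}I_{\gamma;t}$ in the $(i_1,i_2)$ slot and zero elsewhere, which is the asserted $\Gamma$, and by definition this shows $I_{\alpha;T}$ is $W$-controllable.

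The routine but slightly fiddly part is the inductive bookkeeping for Proposition~\ref{prop: regularity of ito integral}: verifying the $L^2$ integrability and almost-sure finite-$p$-variation hypotheses at each level of iteration, and checking that multiplication by the deterministic $C^1$ prefactor does not disturb the controlled-path structure. One also has to keep the indices straight --- the nonzero component of $\Delta$ is the one labelled by the last non-zero entry of $\alpha$, and the single nonzero entry of the matrix $\Gamma$ is determined by the last two non-zero entries --- but once the decomposition of Lemma~\ref{lemma: iterated integral is an ito integral} is in hand these assignments are forced and nothing deeper is required.
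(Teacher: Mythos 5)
Your proposal is correct and follows essentially the same route as the paper: decompose $\alpha=\beta\star(i_2)\star(0)^k$, use Lemma~\ref{lemma: iterated integral is an ito integral} to read off the delta, and then split into the cases $\beta=\gamma\star(0)$ (Young integral, vanishing Gubinelli derivative) and $\beta=\gamma\star(i_1)$ (Gubinelli derivative equals the integrand, combined with the Leibniz rule for the prefactor). Your added inductive verification of the hypotheses of Proposition~\ref{prop: regularity of ito integral} is a more careful spelling-out of details the paper leaves implicit, but not a different argument.
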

\begin{proof}
    Whenever $l(\alpha)=n(\alpha)$, $I_{\alpha;T}$ is a deterministic function of $T$ (or a constant) so that $\left(\Delta^{I_{\alpha;T}}, \Gamma^{I_{\alpha;T}}\right) \equiv (0,0)$. Suppose now that $l(\alpha)>n(\alpha)$ and that $\alpha = \beta\star(i_2)\star(0)^k$ for some $k\in\{0,1,2,\dots\}$,  $\beta\in\mathcal{M}$ and $1\leq i_2 \leq d$. Then by the definition of the delta and Lemma \ref{lemma: iterated integral is an ito integral}, we immediately deduce that $\Delta^{I_{\alpha;T}}_t = \frac{(T-t)^k}{k!}I_{\beta;t}$. For the gamma, if $\beta= \gamma\star(0)$, then $I_{\beta;t}$ is a Young integral and thus has vanishing Gubinelli derivative. If $\beta = \gamma\star(i_1)$, then by definition of $I_{\beta;t}$ and properties of the rough-path integral, we have that 
    \begin{equation*}
        \left(\int_0^tI_{\gamma\star(i_1);u}dW_u\right)' = I_{\gamma;t}.
    \end{equation*}
    By the Leibniz rule of Gubinelli derivatives, we conclude the result.
\end{proof}

The so-called Universal Approximation Theorem \cite[Proposition 4.5]{lyons2020} shows that any sup-norm-continuous payoff can be approximated with arbitrary accuracy in the sup norm by a finite linear combination
of these iterated integrals on any compact set of rough paths. If one has a probability model on the space of paths, one can choose the compact set to ensure that the probability of a path not lying in this set is arbitrarily small.  Such a linear combination of iterated integrals is called a linear signature payoff function. We have shown that all linear signature payoff functions are $W$-controllable.

However, this is not an entirely satisfactory result from a financial point
of view as the signature payoffs are only defined almost surely, as are their
delta and gamma. As a result signature payoffs are not continuously $W$-controllable. However, if one supposes that the iterated integrals defining the
delta and gamma are known at all times, then Lemma \ref{lemma: iterated integral is an ito integral} shows that the equation \eqref{eq:representableAsIntegral} will still hold surely.

In contrast to simple payoffs, the gamma of a signature payoff may not be
a symmetric tensor. This implies that general signature payoffs cannot be replicated using the gamma-hedging
strategy using just European options.

\subsection{Barrier options}

\begin{example} \label{example: price of a single barrier one touch option}
	The time $0$ price of the no-touch option with maturity $T$ and barrier $B>0=W_0$ \correction{in the Bachelier model with S=W} is $X^B_0 = 2\Phi\left(\frac{B}{\sqrt{T}}\right)-1$. Consequently, the price, delta and gamma of the option at some time $t\in[0,T]$ are respectively given by
	\begin{align*}
	    X^B_t &= \mathds{1}_{M_t < B}\left(2\Phi\left(\frac{B-W_t}{\sqrt{T-t}}\right)-1\right), \\
	    \Delta^B_t &= \frac{2\mathds{1}_{M_t<B}}{\sqrt{T-t}}\phi\left(\frac{B-W_t}{\sqrt{T-t}}\right), \text{ and } \\
	    \Gamma^B_t &= \frac{2(B-W_t)\mathds{1}_{M_t<B}}{(T-t)^{\frac{3}{2}}}\phi\left(\frac{B-W_t}{\sqrt{T-t}}\right).
	\end{align*}
    Here $\Phi$ is the distribution function of the standard normal, $\phi$ its density \correction{and $M_t$ is the supremum process of $W$.}
	In particular, this option is not $W$-controllable because both the delta and gamma may have discontinuous sample paths.
\end{example}

In order to obtain a controllable payoff we will need to smooth the barrier
option in some way. The benefits of smoothing payoffs are well known to practitioners: for example smoothing of the payoff function is often done to obtain better convergence of numerical schemes, even for vanilla options, see for example \cite{heston2000, wade2007}.

\begin{definition}[Mollifier] \label{definition: mollifier}
A smooth function $\rho:\mathbb{R}^n\rightarrow \mathbb{R}$ is said to be a \emph{mollifier} \correction{\cite[1.14]{giusti1984}} if 
\begin{enumerate}
    \item $\rho$ is compactly supported,
    \item \correction{$\rho$ is zero outside of a compact subset of the open unit ball,}
    \item $\int_{\mathbb{R}^n}\rho(x)dx=1$.
\end{enumerate}
\end{definition}

\begin{definition}[Drifted Brownian motion and running maximum/minimum]
Let $g:[0,T]\rightarrow \mathbb{R}$ and $W$ be a Brownian motion. We denote by $W^{-g}$ the Brownian motion $W$ drifted by the function $g$. That is $W^{-g}_t \coloneqq W_t - g(t)$, for each $t\in[0,T]$. We denote the corresponding running maximum and minimum by $M^{-g}$ and $m^{-g}$ respectively. That is $M^{-g}_t \coloneqq \sup_{s\leq t} W^{-g}_s$ and $m^{-g}_t \coloneqq \inf_{s\leq t} W^{-g}_s$, for each $t\in[0,T]$.
\end{definition}

\begin{definition}[Smoothed no-touch] \label{definition: proxy double no-touch}
Given $\epsilon_1<0<\epsilon_2$, smooth functions $g_1,g_2:[0,T]\rightarrow \mathbb{R}$ and another smooth function $f:\mathbb{R}\rightarrow [0,1]$ such that
\begin{equation*} f(x) = 
    \begin{cases}
            1, &\text{ if } g_1(T) < x < g_2(T), \\
            0, &\text{ if }  x \leq g_1(T) + \correction{\tfrac{1}{2}} \epsilon_1 \text{ or } g_2(T) + \correction{\tfrac{1}{2}} \epsilon_2\leq x,
    \end{cases}
\end{equation*}
we define a payoff function by
\begin{equation*}
    F^{\epsilon_1,\epsilon_2}(g_1,g_2;f)(W_\cdot) \coloneqq \mathds{1}_{\epsilon_1< m^{-g_1}_T, M^{-g_2}_T < \epsilon_2} f(W_T).
\end{equation*}

Given in addition a mollifier $\rho$, we define the payoff of the {\em smoothed no-touch} to be
\begin{equation*}
    G^{b_1,b_2,\rho}(\epsilon_1, \epsilon_2, g_1,g_2;f) \coloneqq \int_{\mathbb{\R}}\int_{\mathbb{\R}}
    \correction{\frac{1}{b_1 b_2}}\, 
    \rho\left(\correction{\frac{x_1}{b_1}},\correction{\frac{x_2}{b_2}}\right) F^{\epsilon_1,\epsilon_2}(g_1-x_1,g_2-x_2;f) d\correction{x}_2 d\correction{x}_1.
\end{equation*}
\end{definition}

\begin{theorem} \label{theorem: smoothed no-touch is controllable}
The smoothed no-touch payoff $G = G^{b_1,b_2,\rho}(\epsilon_1, \epsilon_2, g_1,g_2;f)$ is continuously $W$-controllable \correction{for sufficiently small positive values of $b_1$ and $b_2$}.
\end{theorem}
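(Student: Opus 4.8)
The plan is to reduce $G$ to a smooth bounded function of finitely many path functionals, to identify its Greeks through the associated value function, and then to read off the regularity and continuity from the finite-variation nature of the running extrema together with a reflection-principle estimate on their densities. \emph{Step 1 (reduction to a smooth functional).} Shifting a barrier by $x_i$ shifts the corresponding drifted running extremum by the same amount: $m_T^{-(g_1-x_1)}=m_T^{-g_1}+x_1$ and $M_T^{-(g_2-x_2)}=M_T^{-g_2}+x_2$. Substituting $u_i=x_i/b_i$ in the definition of $G$, and using that $f\in C_c^\infty(\R)$, one obtains
\[
G = H\big(m_T^{-g_1},M_T^{-g_2},W_T\big),\qquad H(a,b,w)=f(w)\,\Theta(a,b),
\]
where $\Theta(a,b)=\int_{\R}\int_{\R}\rho(u_1,u_2)\,\mathds{1}_{\{b_1u_1>\epsilon_1-a\}}\,\mathds{1}_{\{b_2u_2<\epsilon_2-b\}}\,du_1\,du_2$. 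Since $\rho$ is smooth and compactly supported, $\Theta$ has bounded derivatives of every order, so $H\in C_b^\infty(\R^3)$; the hypothesis that $b_1,b_2$ be small is used only to keep the shifted barriers $g_1-x_1<g_2-x_2$ from crossing, so that each $F^{\epsilon_1,\epsilon_2}$ remains a genuine payoff. Sup-norm continuity of $G$ on path space is immediate, as $\omega\mapsto(m_T^{-g_1}(\omega),M_T^{-g_2}(\omega),\omega_T)$ is $1$-Lipschitz for the sup norm.

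\emph{Step 2 (Greeks via the value function).} Let $V(t,w,a,b)$ denote the risk-neutral price at time $t$ given $W_t=w$, $m_t^{-g_1}=a$, $M_t^{-g_2}=b$, so that $V^G_t=V(t,W_t,m_t^{-g_1},M_t^{-g_2})$ is a $\mathbb{Q}$-martingale. Conditioning on the future, $V$ is an integral of $H(a\wedge\cdot,\,b\vee\cdot,\,\cdot)$ against the joint law of the running minimum, running maximum, and terminal value of a drifted Brownian motion on $[t,T]$; by the reflection principle this law has a density that is smooth in its arguments (and in $t$) for $t<T$, so differentiating under the integral gives $V\in C^\infty$ in $(w,a,b)$ and $C^1$ in $t$ on $\{t<T\}$, together with the Neumann conditions $\partial_aV=0$ on $\{a=w-g_1(t)\}$ and $\partial_bV=0$ on $\{b=w-g_2(t)\}$ (the running extremum moves only at the current extremum, where the price does not feel it). Applying the pathwise It\^o/Young formula to $V(t,W_t,m_t^{-g_1},M_t^{-g_2})$ — legitimate since $m^{-g_1}$ and $M^{-g_2}$ are monotone, hence of finite $1$-variation — and using that the $dt$, $dm^{-g_1}$, $dM^{-g_2}$ contributions must cancel because $V^G$ is a martingale, we obtain
\[
\Delta^G_t=\partial_wV(t,W_t,m_t^{-g_1},M_t^{-g_2}),\qquad \Gamma^G_t=\partial_{ww}V(t,W_t,m_t^{-g_1},M_t^{-g_2}),
\]
with the remainder $R^G$ collecting the Young and drift terms. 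By Clark--Ocone this matches $\mathbb{E}_u[D_uG]$ and $\mathbb{E}_u[D_u\Delta^G_u]$, using Malliavin differentiability of the running extrema ($D_sM_T^{-g_2}=\mathds{1}_{[0,\tau_2]}(s)$, $D_sm_T^{-g_1}=\mathds{1}_{[0,\tau_1]}(s)$), and by Proposition \ref{prop: ito and rough path integrals coincide} the It\^o representation \eqref{eq:delta} becomes the rough representation with $(\Delta^G,\Gamma^G)$ an $S$-controlled path.

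\emph{Step 3 (regularity and continuity of $\zeta$).} For any $\omega\in C^{\pvar}(\R^d)$ the maps $\omega\mapsto m^{-g_1}(\omega),M^{-g_2}(\omega)$ are monotone, hence of finite $1$-variation, and $1$-Lipschitz for the sup norm, so they are continuous into $C^{\pvar}$; composing with the smooth $\partial_wV$ and $\partial_{ww}V$ yields $\Delta^G(\omega),\Gamma^G(\omega)\in C^{\pvar}$, and a Taylor expansion of $\partial_wV$ along the path gives $R^G(\omega)\in C^{\halfpvar}([0,T]^2)$, all depending continuously on $\omega$. This is the required continuous $\zeta$; it agrees with the Greeks almost surely by Step 2, so $G$ is continuously $W$-controllable (the diffusion and Brownian pictures coinciding by Lemma \ref{lem:sControllableImpliesWControllabel}).

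\emph{Main obstacle.} The real difficulty is the behaviour as $t\to T$: the density in Step 2 forces $\partial_{ww}V$, $\partial_{wa}V$, $\partial_{wb}V$ and $\partial_tV$ to blow up like $(T-t)^{-1/2}$ on the diagonal faces $\{a=w-g_1(t)\}$, $\{b=w-g_2(t)\}$ — exactly where the increments $dm^{-g_1}$, $dM^{-g_2}$ are active. One controls this by combining three facts: the Neumann conditions, which make the combination governing the Gubinelli derivative along a diagonal path ($\partial_{ww}V+\partial_{wa}V$, resp. $\partial_{ww}V+\partial_{wb}V$) bounded up to $T$; the comparability $m^{-g_1}_{s,t}=W_{s,t}+O(t-s)$ of the running-extremum increment on the set where the path sits at that extremum, which lets one absorb the singular $\partial_{wa}V$- and $\partial_{wb}V$-terms into the $\Gamma_sW_{s,t}$ term; and the integrability of $(T-t)^{-1/2}$, which keeps the leftover $\partial_tV$-contribution of finite $\halfpvar$. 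Smallness of $b_1,b_2$ confines all of this to a narrow band around the barrier levels $\epsilon_1,\epsilon_2$, where $\Theta_a$ and $\Theta_b$ are supported.
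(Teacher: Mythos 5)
Your Step 1 is a genuinely nice observation and is correct: shifting the barriers shifts the drifted running extrema, so the mollification over barrier positions turns $G$ into $f(W_T)\,\Theta(m_T^{-g_1},M_T^{-g_2})$ with $\Theta\in C_b^\infty$, and sup-norm continuity is immediate. The paper instead keeps the mollification integral and recovers smoothness in $\underline{m},\overline{m}$ by integrating by parts against $\rho$; your reduction is arguably cleaner at this stage. The difficulties begin in Step 2. First, the smoothness of $V(t,w,a,b)$ is asserted ``by the reflection principle,'' but the reflection principle does not apply to the joint law of $\bigl(\inf_{s}(W_s-g_1(s)),\,\sup_s(W_s-g_2(s)),\,W_T\bigr)$ for general smooth, time-dependent, distinct $g_1,g_2$: there is no closed-form (or obviously smooth) density here, and the triple $(W_t,m_t^{-g_1},M_t^{-g_2})$ is a degenerate Markov process, so smoothness of $V$ in all variables — in particular the mixed derivatives $\varphi_{xxa},\varphi_{xxb}$ that the paper's Lemma \ref{lemma: payoff is controllable if enough derivatives exist} requires to be continuous on the \emph{closed} time interval — is precisely the hard analytic content, not something you can differentiate under the integral to obtain.

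Second, your ``Main obstacle'' paragraph is a list of hopes rather than an argument, and it is fighting a singularity that the construction was designed to remove. The pathwise claim $m^{-g_1}_{s,t}=W_{s,t}+O(t-s)$ on the contact set is false for a path of finite $p$-variation, $p$ close to $3$ (the path can leave its running minimum and return to a new one within $[s,t]$), and the asserted boundedness of $\partial_{ww}V+\partial_{wa}V$ up to $T$ is unjustified. The key fact you have not used is the support hypothesis on $f$: since $f$ vanishes for $x\leq g_1(T)+\tfrac12\epsilon_1$ and $x\geq g_2(T)+\tfrac12\epsilon_2$, while $\Theta_a,\Theta_b$ are supported within $b_i$ of $\epsilon_i$, for $b_1,b_2$ sufficiently small the terminal data is compatible with the knock-out (Dirichlet) boundary condition, and the value function is then smooth on the closed space-time rectangle including $t=T$ — there is no blow-up at all. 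This is exactly the mechanism the paper exploits: it straightens the barriers by a diffeomorphism, invokes parabolic regularity up to the boundary (Theorem 6.5.2 of \cite{lorenzi2021}) for the resulting Dirichlet problem with compatible data, and only then feeds the resulting continuous third derivatives into Lemma \ref{lemma: payoff is controllable if enough derivatives exist}. Without an argument of this type your Steps 2 and 3 do not close, and the ``sufficiently small $b_1,b_2$'' hypothesis plays a more substantive role than merely preventing the shifted barriers from crossing.
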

See Appendix \ref{proofs} \correction{for a proof.}

\subsection{Superhedging}
\label{sec:superhedging}

Simple payoffs are dense in $L^2$, and the Universal Approximation Theorem
allows to approximate derivatives using signature payoffs outside a set of low
probability. Neither result is particularly satisfactory as we would like to know which derivatives
can be approximated surely. In addition, we would like to know which options can be approximated by replication strategies using only European options or other realistic financial derivatives. 

For simplicity, we will only consider
the case of $1$-dimensional noise.

We write $C^0$ for $C^0\left([0,T];\mathbb{R}\right)$, the space of continuous $\mathbb{R}$-valued functions on $[0,T]$ which start at $0$, equipped with the supremum norm.

\begin{definition}[Approximately controllable payoff]
	An option with continuous payoff $F:C^0\rightarrow \mathbb{R}$ is said to be \emph{approximately controllable} if there exists a family of continuously $W$-controllable options with payoffs $\{G^\epsilon\}_{\epsilon>0}$ such that
	\begin{enumerate}
		\item $G^\epsilon(\omega) \geq F(\omega)$, $\forall \epsilon>0$ and $\forall\omega \in C^0$,
		\item $\lim\limits_{\epsilon\rightarrow 0} \mathbb{E}[G^\epsilon] = \mathbb{E}[F]$. 
	\end{enumerate}
	Such a family of payoffs $\{G^\epsilon\}_{\epsilon>0}$ is said to be a \emph{sequence of controllable payoffs approximating} $F$.
\end{definition}

\begin{theorem}
	Let $F:C^0\rightarrow \mathbb{R}$ be continuous. Suppose there exists a continuously $W$-controllable $G:C^0\rightarrow \mathbb{R}$ with $G(\omega)\geq F(\omega)$ for all $\omega\in C^0$ and with $\mathbb{E}^\mathds{Q}[G]<\infty$. Then $F$ is approximately controllable.
\end{theorem}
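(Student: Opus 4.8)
The plan is to interpolate, in path space, between the given super-solution $G$ and a controllable payoff that approximates $F$ closely on a set of almost-full Wiener measure, gluing the two together with a continuously $W$-controllable ``cut-off'' that equals $1$ on the good set (so that $G^\epsilon$ is essentially the approximation of $F$ there) and $0$ on the small region where the approximation is poor (so that $G^\epsilon$ reverts to $G$, which dominates $F$ unconditionally). Since $G\ge F$ everywhere and $\mathbb E^{\mathbb Q}[G]<\infty$, the contribution of the bad region will be negligible by uniform integrability.

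Concretely, fix $\epsilon>0$. First I would produce a continuously $W$-controllable $H$ that is close to $F$: by inner regularity of Wiener measure choose a compact $K\subseteq C^0$ with $\mathbb Q(K^c)$ small, on which $F$ is bounded and uniformly continuous, and use the Universal Approximation Theorem (or density of simple payoffs) to get $H$ — a linear combination of signature/simple payoffs — with $\sup_K|F-H|<\tfrac\epsilon3$, so that $\{H+\tfrac\epsilon3<F\}\subseteq K^c$; it is convenient to arrange $K$ to be of a geometric form such as $\{\lVert\omega\rVert_\infty\le B\}\cap(\text{a slab in }\omega_T)$ (or a sublevel set of a controllable functional). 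Second I would take for $\theta$ a smoothed no-touch payoff, which is continuously $W$-controllable by Theorem~\ref{theorem: smoothed no-touch is controllable}, with barriers tuned so that $0\le\theta\le1$, $\{\theta>0\}\subseteq K$, and $\mathbb Q(\theta<1)$ is small. Then set
\[
G^\epsilon \;:=\; \theta\Bigl(H+\tfrac\epsilon3\Bigr)+(1-\theta)\,G \;=\; G-\theta\Bigl(G-H-\tfrac\epsilon3\Bigr).
\]
That $G^\epsilon$ is again continuously $W$-controllable uses that this class is a vector space and, more substantially, that it is stable under multiplication by the smoothed-indicator controllable payoffs built in the previous step; the regularity of the Malliavin derivatives and of the path-to-$(\Delta,\Gamma,R^G)$ selection map $\zeta$ needed for this is supplied by Proposition~\ref{prop: regularity of ito integral}, exactly as in the proofs of Theorems~\ref{theorem: simple payoffs are rough hedgeable} and~\ref{theorem: smoothed no-touch is controllable}.

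It then remains to verify the two clauses of approximate controllability. For super-replication: off $\operatorname{supp}\theta$ we have $G^\epsilon=G\ge F$; where $\theta\equiv1$ we have $G^\epsilon=H+\tfrac\epsilon3\ge F$ since that region lies in $K$; and on the transition zone $G^\epsilon$ is a convex combination of $G\ge F$ and $H+\tfrac\epsilon3\ge F$ (again using $\operatorname{supp}\theta\subseteq K$), so $G^\epsilon\ge F$ throughout, whence also $\mathbb E[G^\epsilon]\ge\mathbb E[F]$. For the expectations, write
\[
\mathbb E[G^\epsilon]-\mathbb E[F]=\mathbb E\bigl[\theta\,(H+\tfrac\epsilon3-F)\bigr]+\mathbb E\bigl[(1-\theta)(G-F)\bigr];
\]
the first term is at most $\tfrac\epsilon3+\mathbb E[|H-F|]$ (using $0\le\theta\le1$), and the second is $\le\mathbb E\bigl[(G-F)\,\mathds{1}_{\{\theta<1\}}\bigr]$, which is small because $G-F\ge0$ is $\mathbb Q$-integrable (as $\mathbb E^{\mathbb Q}[G]<\infty$ and $F\le G$) while $\{\theta<1\}$ has small measure, so uniform integrability applies. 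Taking $\epsilon\downarrow0$ produces the required family $\{G^\epsilon\}$. (If $\mathbb E[F]=-\infty$ one reads the convergence in the extended sense and, in the estimates above, replaces $H$ where relevant by a monotone sequence of simple payoffs dominated by $G$ and decreasing to $F$, applying monotone convergence.)

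The main obstacle is the construction in the middle paragraph: arranging a genuinely continuously $W$-controllable cut-off localised to a region on which $F$ is bounded and uniformly approximable (note $F$ need not be bounded on sup-norm balls, so one cannot simply localise to $\{\lVert\omega\rVert_\infty\le B\}$ — the localising set must be both ``small enough'' that $F$ is tame on it and ``simple enough'' to be detected by a controllable functional), and then checking that the interpolated payoff $G^\epsilon$ inherits every clause of the definition of continuously $W$-controllable, in particular the existence of a continuous selection $\zeta$ for the triple $(\Delta^{G^\epsilon},\Gamma^{G^\epsilon},R^{G^\epsilon})$. The remaining ingredients — density of simple and signature payoffs, inner regularity of Wiener measure, and the uniform-integrability estimate — are soft.
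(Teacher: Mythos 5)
Your proposal has genuine gaps, and they sit exactly at the points you flag as "the main obstacle" — unfortunately the paper's tools do not fill them, and the paper's own proof takes a different route precisely to avoid them.

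First, the approximant $H$. You need a \emph{continuously $W$-controllable} $H$ with $\sup_K|F-H|<\tfrac{\epsilon}{3}$, so that $H+\tfrac{\epsilon}{3}\geq F$ pointwise on $K$. The paper offers two density results: simple payoffs are dense in $L^2(\Omega)$, and signature payoffs approximate sup-norm-continuous payoffs uniformly on compacts. Neither delivers what you need. $L^2$-closeness gives no pointwise domination, and pointwise domination (surely, on all of $C^0$) is the entire content of the superhedging statement. Signature payoffs do give uniform approximation on compacts, but the paper explicitly notes they are only defined almost surely and are \emph{not} continuously $W$-controllable. Second, the product $\theta\cdot H$ and $\theta\cdot G$: the delta of a payoff is defined as the integrand in the martingale representation of its conditional-expectation process, and $\mathbb{E}_t[\theta G]$ does not factor through $\mathbb{E}_t[\theta]$ and $\mathbb{E}_t[G]$, so $\Delta^{\theta G}$ is not an algebraic combination of $\Delta^\theta,\Delta^G$ and the Leibniz rule for Gubinelli derivatives does not apply. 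Closure of the continuously controllable class under products is nowhere established in the paper and is not a soft step. Third, the localisation itself is impossible as stated: the "alive" region of a (finite combination of) smoothed no-touch option(s) contains a sup-norm open ball $\{\|\omega-h\|_\infty<\delta\}$, and no such set is contained in a compact subset of $C^0$, so one cannot arrange $\{\theta>0\}\subseteq K$.

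The paper's proof avoids all three issues by never multiplying payoffs and never approximating $F$ itself uniformly. It first reduces to $G\equiv 0$ (replacing $F$ by $F-G$; adding the controllable $G$ back at the end only uses linearity of the delta). It then layers $F$ from above by the step function $F_N=\tfrac1N-\tfrac1N\sum_n \mathds{1}_{F^{-1}(-\infty,-n/N)}$, writes each open sublevel set as a countable union of sup-norm balls around smooth centres, truncates to finitely many balls losing only a small amount of measure, and expresses the indicator of the finite union via inclusion--exclusion as a combination of no-touch payoffs with continuous time-dependent barriers. These barriers are approximated by smooth ones from the correct side and then mollified, so that every building block is a smoothed no-touch option covered by Theorem \ref{theorem: smoothed no-touch is controllable}, and every step preserves the pointwise inequality $\geq F$ surely while costing only $O(\epsilon)$ in expectation. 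If you want to salvage your interpolation idea, you would need to prove a product rule for the delta and gamma of continuously controllable payoffs and a sup-norm approximation theorem within that class; both would be new results beyond the paper.
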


\begin{proof}
	Without loss of generality, we may assume that $G\equiv 0$. The general case then follows by applying the proof to $G-F$. Given $N\in\mathbb{N}$, define 
	\begin{equation*}
    F_N(\omega):=\correction{\frac{1}{N}} -\frac{1}{N} \sum_{n=1}^{N^2} \mathds{1}_{\omega \in F^{-1}(-\infty,-\frac{n}{N})}.
    \end{equation*}
    We may express $F_N$ in terms of the ceiling function as follows: 
    \begin{align*}
    F_N(\omega)&=\frac{1}{N} -\frac{1}{N} \sum_{n=1}^{N^2} \mathds{1}_{n< - N F(\omega)} \\
    &= \correction{ \frac{1}{N} + \frac{1}{N} \max\left\{ -\sum_{n=1}^\infty \mathds{1}_{n< - N F(\omega)}, -N^2 \right\}} \\
    &=\correction{ \frac{1}{N} + \max\left\{ - \frac{\lceil -N F(\omega) \rceil}{N}, -N \right\}.}   
    \end{align*}
    \correction{We now see that $F_N(\omega) \geq \frac{1}{N} - \frac{\lceil -N F(\omega) \rceil}{N} \geq F(\omega)$ and that $F_N(\omega)\to F(\omega)$ as $N \to \infty$.}
    By the continuity of $F$, $F^{-1}\left(-\infty,-\frac{n}{N}\right)$ is open and so $F$ is measurable. By the Dominated Convergence Theorem, $\lim_{N\rightarrow \infty} \mathbb{E}^\mathds{Q}[F_N] =  \mathbb{E}^\mathds{Q}[F]$. So for any $\epsilon >0$, we may fix $N$ large enough to ensure that
	\begin{equation*}
	\mathbb{E}^\mathds{Q}[F_N] < \mathbb{E}^\mathds{Q}[F] + \frac{\epsilon}{4}.
	\end{equation*}
	As  $F^{-1}\left(-\infty,-\frac{n}{N}\right)$ is an open set in a separable space, we may write it as a countable union of open balls in $C^0$. More precisely, let $B(h,\epsilon)$ denote the open ball around $h\in C^\infty \subset C^0$ in the supremum norm. That is
	\begin{equation*}
	    \omega \in B(h,\epsilon) \iff \lVert \omega-h\rVert_{\infty;[0,T]}<\epsilon \iff -\epsilon < m^{-h}_T(\omega),M^{-h}_T(\omega)<\epsilon.
	\end{equation*}
	As smooth functions are dense in $C^0$, we have that for each $n\in\{1,\dots,N^2\}$, there exist $\{h_{i,n}\}_{i\geq 1} \in C^\infty$ and positive $\{\epsilon_{i,n}\}_{i\geq 1}$ such that 
	\begin{equation*}
	F^{-1}\left(-\infty,-\frac{n}{N}\right) = \bigcup_{i=1}^\infty B(h_{i,n},\epsilon_{i,n}).
	\end{equation*}
	We can choose $M_n \in \mathbb{N}$ sufficiently large so that 
	\begin{equation*}
	\mathds{\correction{Q}}\left(F^{-1}\left(-\infty,-\frac{n}{N}\right)\right)-\frac{\epsilon}{4N} \leq \mathds{\correction{Q}}\left(\bigcup_{i=1}^{M_n} B(h_{i,n},\epsilon_{i,n})\right) \leq \mathds{\correction{Q}}\left(F^{-1}\left(-\infty,-\frac{n}{N}\right)\right).
	\end{equation*}
	Define 
	\begin{equation*}
	\tilde{F}\coloneqq \frac{1}{N}-\frac{1}{N}\sum_{n=1}^{N^2}\mathds{1}_{\bigcup_{i=1}^{M_n} B(h_{i,n},\epsilon_{i,n})}.
	\end{equation*}
	Then 
	\begin{equation}
	\mathbb{E}^\mathds{Q}[\tilde{F}] < \mathbb{E}^\mathds{Q}[F_N]+\frac{\epsilon}{4} < \mathbb{E}^\mathds{Q}[F]+\frac{\epsilon}{2}, \label{eq: expectation of F_N is bounded}
    \end{equation}
    \begin{equation}
	F(\omega) \leq F_N(\omega) \leq \tilde{F}(\omega),~\forall \omega\in C^0. \label{eq: F_N is bounded}
	\end{equation}
	For each $n\in\{1,\dots,N^2\}$ and $i\in\{1,\dots,M_n\}$, define
	\begin{equation*}
	F_{i,n}(\omega) \coloneqq \mathds{1}_{\{\lVert \omega-h_{i,n}\rVert_{\infty;[0,T]}>\epsilon_{i,n}\}}(\omega) 
	\end{equation*}
	For each $n\in\{1,\dots,N^2\}$, let
	\begin{equation}\label{eq: definition of tildeF_n}
	\tilde{F}_n(\omega) \coloneqq \prod_{i=1}^{M_n}F_{i,n}(\omega) = \begin{cases}
	1, &\text{if $\omega\in \bigcap_{i=1}^{M_n}B(h_{i,n},\epsilon_{i,n})^c$} \\
	0, &\text{otherwise.}
	\end{cases}
	\end{equation}
	So that $1-\tilde{F}_n(\omega) = \mathds{1}_{\bigcup_{i=1}^{M_n}B(h_{i,n},\epsilon_{i,n})}$ and by the inclusion-exclusion principle,
	\begin{equation*}
	    \mathbb{E}[1-\tilde{F}_n] = \correction{\mathbb{Q}}\left(\bigcup_{i=1}^{M_n}B(h_{i,n},\epsilon_{i,n})\right) = \sum_{k=1}^{M_n} \left((-1)^{k-1}\sum\limits_{\substack{I\subseteq \{1,\dots,M_n\} \\ \lvert I\rvert = k}}\correction{\mathbb{Q}}(B_I)\right),
	\end{equation*}
	where $B_I = \bigcap_{i\in I}B\left(h_{i,n},\epsilon_{i,n}\right)$. For any $I\subseteq \{1,\dots,M_n\}$, the payoff $\mathds{1}_{B_I}$ is a no-touch option with the time-dependent barriers
	\begin{equation*}
	        t \mapsto \pm\inf_{i\in I} \left\{\epsilon_{i,n}+h_{i,n}(t)\right\}.
	\end{equation*}
	These barriers are, in general, no longer smooth but they are continuous on the bounded interval $[0,T]$. Thus, they may be approximated by smooth functions from below/above with arbitrary accuracy. This means that we can approximate the payoffs $\tilde{F}_n$ a sequence of payoffs $\tilde{G}^{\delta}_n$ such that
	\begin{enumerate}
	    \item $\tilde{F}_n\leq \tilde{G}^{\delta}_n$, for each $\delta >0$,
	    \item $\lim\limits_{\delta\rightarrow 0}\mathbb{E}[\tilde{G}^{\delta}_n-\tilde{F}_n]=0$, and 
	    \item $\tilde{G}^{\delta}_n$ is a finite sum of no-touch options with smooth time-dependent barriers, for each $\delta>0$.
	\end{enumerate}
	By Theorem \ref{theorem: smoothed no-touch is controllable}, we can approximate each $\tilde{G}^\delta_n$ with a sequence of smoothed no-touch options, which are continuously $W$-controllable payoffs. These payoffs, $\{\tilde{G}^{\delta,\rho}_n\}_\rho$, are such that $\tilde{G}^{\delta,\rho}_n \geq \tilde{G}^\delta_n$, and $\lim\limits_\rho \mathbb{E}[\tilde{G}^{\delta,\rho}_n] = \mathbb{E}[\tilde{G}^\delta_n]$. By construction, $\tilde{G}^{\delta,\rho}_n\geq \tilde{G}^\delta_n \geq \tilde{F}_n$ and
	\begin{equation*}
	    \lim\limits_\rho \lim\limits_{\delta\rightarrow 0}\mathbb{E}[\tilde{G}^{\delta,\rho}_n-\tilde{F}_n] = \lim\limits_\rho \mathbb{E}[\tilde{G}^{\delta,\rho}_n-\tilde{G}^\delta_n] + \lim\limits_{\delta\rightarrow 0}\mathbb{E}[\tilde{G}^\delta_n-\tilde{F}_n] = 0.
	\end{equation*}
	So we can pick some smooth $\rho$ and $\delta>0$ sufficiently small so that $\mathbb{E}[\tilde{G}^{\delta,\rho}_n-\tilde{F}_n] \leq \frac{\epsilon}{2N}$, for each $n\in\{1,\dots,N^2\}$. Finally, define the $W$-controllable payoff
	\begin{equation*}
	\bar{G}^\epsilon \coloneqq \frac{1}{N}-\frac{1}{N}\sum_{n=1}^{N^2}\left(1-\tilde{G}^{\delta,\rho}_n\right).
	\end{equation*}
	Then,
	\begin{align*}
	\mathbb{E}[\bar{G}^\epsilon] &= \frac{1}{N}-\frac{1}{N}\sum_{n=1}^{N^2}\mathbb{E}[1-\tilde{G}^{\delta,\rho}_n] \\
	&\leq \frac{1}{N}-\frac{1}{N}\sum_{i=1}^{N^2}\left\{\mathbb{E}[1-\tilde{F}_n]-\mathbb{E}[\tilde{G}^{\delta,\rho}_n-\tilde{F}_n]\right\} \\
	&\leq \frac{1}{N}-\frac{1}{N}\sum_{n=1}^{N^2}\left(\correction{\mathbb{Q}}\left(\cup_{i=1}^{M_n}B(h_{i,n},\epsilon_{i,n})\right)-\frac{\epsilon}{2N}\right) \\
	&= \left(\frac{1}{N}-\frac{1}{N}\sum_{n=1}^{N^2}\correction{\mathbb{Q}}\left(\cup_{i=1}^{M_n}B(h_{i,n},\epsilon_{i,n})\right)\right)+\frac{\epsilon}{2} \\
	&= \mathbb{E}[\tilde{F}]+\frac{\epsilon}{2} < \mathbb{E}[F] + \epsilon.
	\end{align*}
	We also see that
	\begin{align*}
	\bar{G}^\epsilon &= \frac{1}{N}-\frac{1}{N}\sum_{n=1}^{N^2}\left((1-\tilde{F}_n)+(\tilde{F}_n-\tilde{G}^{\delta,\rho}_n)\right)
	\\
	&\geq \frac{1}{N}-\frac{1}{N}\sum_{i=1}^{N^2}\mathds{1}_{\bigcup_{i=1}^{M_n}B(h_{i,n},\epsilon_{i,n})} = \tilde{F} \geq F,
	\end{align*}
	using (\ref{eq: F_N is bounded}). It follows that for each $\epsilon>0$, there exists a sequence of controllable payoffs via $\left\{\bar{G}^\epsilon \right\}_{\epsilon>0}$ which approximate $F$.
\end{proof}

To extend our result to higher-dimensional noise, the additional ingredient needed is to prove that higher-dimensional no-touch options with smooth barrier manifolds can be superhedged using $W$-controllable payoffs. This requires a higher-dimensional analogue
of Theorem \ref{theorem: smoothed no-touch is controllable}, which can be proved in the same way using PDE theory. In this case the gamma of the smoothed option can be related to the partial derivatives of the solution to the PDE in a formula analogous to equation \eqref{eq:gammaAsPartialDerivatives} in the appendix. Hence in higher dimensions one can superhedge continuous payoffs $F$ with $W$-controllable hedging instruments that have symmetric gamma tensors such as European options. By contrast, signature payoffs representing L\'evy-area terms need not be continuous and can have non-symmetric gamma.

\section{Conclusion}

We have shown how the gamma-hedging strategy may provide some insight into the benefits of of the market practices of calibrating
pricing models to market prices and gamma hedging.

The central ingredient in our approach is to develop rough-path versions of the Clark--Ocone formula and we have given a number of examples. The thesis \cite{andreiThesis}, studies the relationship between Malliavin calculus and the Gubinelli derivative, defining a closeable operator for whose domain these notions coincide and hence for which a Clark--Ocone formula holds. The same thesis also explores the relationship between the Gubinelli derivative and functional It\^o calculus.

\bibliographystyle{plain} 
\bibliography{references.bib}

\appendix

\section{Proofs}
\label{proofs}

\begin{proof}[Proof of \ref{thm: gubinelli theorem} (Gubinelli's Theorem)]
\begin{correctionenv}The result is proved on the case of H\"older norms in Theorem 4.10 of \cite{frizhairer2020}. We show only the changes needed for $p$-variation results. Suppose $s\leq u \leq t$. Set
\[
\Xi_{s.t}=Y_sX_{s,t}+Y'_s\mathbb{X}_{s,t}
\]
then we have
\[
\delta \Xi_{s,u,t}:=\Xi_{s.t} - \Xi_{s.u} - \Xi_{u.t} = -R_{s,u}^Y X_{u,t} - Y^\prime_{s,u} \mathbb{X}_{u,t}.
\]
Since $p<3$, $x\to x^{\frac{p}{3}}$ is concave, so
\begin{align*}
|\delta \Xi_{s,u,t}|^\frac{p}{3}
&\leq  \| R^Y \|^\frac{p}{3}_{\halfpvar;[s,u]}
\| X \|^\frac{p}{3}_{\pvar;[u,t]}
+ 
\| Y^\prime \|^\frac{p}{3}_{\pvar;[s,u]}
\| \mathbb{X} \|^\frac{p}{3}_{\halfpvar;[u,t]} \\
&= w_{R^Y,\frac{p}{2}}(s,t)^\frac{2}{3}
 w_{X,p}(s,t)^\frac{1}{3}
 + w_{Y^\prime,p}(s,t)^\frac{1}{3}
   w_{\mathbb{X},\frac{p}{2}}(s,t)^\frac{2}{3}.
\end{align*}
Applying Lemma \ref{lemma:productControls}  we see that the right-hand side is a control.
The existence of the limit and the bound now follow from Proposition 3.5 of \cite{armstrong2020option}. The argument of
Theorem 4.10 \cite{frizhairer2020} establishes the continuity of the integral map.
\end{correctionenv}
\end{proof}

\begin{correctionenv}
\begin{lemma}
\label{lemma:convexity}
Let $f:(0,\infty) \to \R$ be a convex, non-linear payoff function, then  $V^f(S,\sigma,\tau):=\mathrm{BS}^f(S,\sigma,\tau)$ satisfies 
$\frac{\partial^2 V^f}{\partial S^2}>0$ for all $\tau>0$.
\end{lemma}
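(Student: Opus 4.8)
The plan is to represent the Black--Scholes price as an expectation against the log-normal terminal law, read off convexity in the spot variable for free, and then use a static-replication argument to upgrade convexity to strict convexity. Starting from the integral formula \eqref{eq:priceAsIntegral} and the elementary scaling identity $q_{S,\sigma,\tau}(u) = S^{-1} q_{1,\sigma,\tau}(u/S)$ (which is just the change-of-variables formula for the density of $S\cdot(\text{GBM started at }1)$), the substitution $u = Sv$ gives
\[
V^f(S,\sigma,\tau) = \int_0^\infty f(Sv)\,q_{1,\sigma,\tau}(v)\,dv = \mathbb{E}\big[f(SV_{\sigma,\tau})\big],\qquad V_{\sigma,\tau}:=e^{\sigma\sqrt{\tau}Z-\tfrac12\sigma^2\tau},\ Z\sim N(0,1),
\]
where $V_{\sigma,\tau}$ has a smooth density that is strictly positive on $(0,\infty)$ and satisfies $\mathbb{E}[V_{\sigma,\tau}]=1$. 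For each fixed $v>0$ the map $S\mapsto f(Sv)$ is convex, so $S\mapsto V^f(S,\sigma,\tau)$ is convex; since the density $q_{1,\sigma,\tau}$ is smooth with rapidly decaying tails, differentiation under the integral sign (transferring the $S$-derivatives onto the density) shows $V^f(\cdot,\sigma,\tau)\in C^2$ for $\tau>0$, so $\partial_S^2 V^f\ge 0$ is immediate.

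For the strict inequality I would decompose the convex payoff. Let $\mu:=f''$ denote the distributional second-derivative measure of $f$; it is a nonnegative Borel measure on $(0,\infty)$, and it is nonzero precisely because $f$ is non-linear. The Carr--Madan static-replication formula then writes $f$, on $(0,\infty)$, as an affine function $\ell$ plus a nonnegative $\mu$-mixture of call payoffs $(\,\cdot-k)_+$ (and, near the left end, put payoffs $(k-\cdot)_+$; these contribute the same way since calls and puts have identical Black--Scholes gamma). Inserting this into the expectation above, noting that $\mathbb{E}[SV_{\sigma,\tau}]=S$ so the affine part $\ell$ contributes nothing to $\partial_S^2 V^f$, and interchanging the expectation with the $\mu$-integral yields
\[
\partial_S^2 V^f(S,\sigma,\tau) = \int_0^\infty \partial_S^2\,\mathrm{BS}^{(\cdot-k)_+}(S,\sigma,\tau)\,\mu(dk).
\]
The integrand is the classical Black--Scholes gamma of a call, $\phi(d_1)/(S\sigma\sqrt{\tau})$ with $d_1=(\log(S/k)+\tfrac12\sigma^2\tau)/(\sigma\sqrt{\tau})$, which is strictly positive for every $k>0$ whenever $S,\sigma,\tau>0$. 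Since $\mu$ is a nonzero nonnegative measure on $(0,\infty)$, the integral is strictly positive, which is the claim.

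The main obstacle is the justification of the interchange of integration, equivalently of differentiating twice under the $\mu$-integral: one needs a $\mu$-integrable dominating function for $\partial_S^2\,\mathrm{BS}^{(\cdot-k)_+}(S,\sigma,\tau)$, uniformly for $S$ in a neighbourhood of the point of interest. This is exactly where the standing hypothesis that $V^f$ is finite enters, together with standard log-normal tail estimates comparing the call price $c(k)=\mathbb{E}[(SV_{\sigma,\tau}-k)_+]$ with the density $q_{1,\sigma,\tau}$; the measure-theoretic bookkeeping of the decomposition near $k=0$ and $k=\infty$ is routine but slightly fiddly. If one is content with the hypotheses actually used in Theorem~\ref{thm:blackScholesGamma} and the abstract, namely $f$ smooth with $f''>0$, all of this is unnecessary: one differentiates directly to get $\partial_S^2 V^f=\mathbb{E}[V_{\sigma,\tau}^2\,f''(SV_{\sigma,\tau})]$, which is strictly positive because $V_{\sigma,\tau}>0$ almost surely and $SV_{\sigma,\tau}$ has full support $(0,\infty)$. (Alternatively, one may reduce the statement to strict positivity of vega via the identity \eqref{eq:vegaProportionalToGamma}, $\sigma\tau S^2\,\partial_S^2 V^f=\partial_\sigma V^f$, and argue that increasing $\sigma$ replaces the terminal law by one that is strictly larger in the convex order, so that $\mathbb{E}[f(\cdot)]$ strictly increases for non-linear convex $f$.)
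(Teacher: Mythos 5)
Your proof is correct, and while the first half (the mixture representation $V^f(S,\sigma,\tau)=\int_0^\infty f(Sv)\,q_{1,\sigma,\tau}(v)\,dv$ giving $\partial_S^2V^f\ge 0$) coincides with the paper's, your route to \emph{strict} positivity is genuinely different. The paper first handles smooth $f$ with $f''>0$ somewhere by localizing the integral near $v=x/S$, then uses a semigroup trick --- viewing the maturity-$T$ claim as a maturity-$(T-\epsilon)$ claim on the smooth intermediate price $\mathrm{BS}^f(\cdot,\sigma,\tau-\epsilon)$ --- to obtain the dichotomy ``$\partial_S^2V^f\equiv 0$ for $\tau>\epsilon$ or $>0$ for $\tau>\epsilon$'', and finally excludes the degenerate branch via the Black--Scholes PDE and continuity at maturity. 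You instead decompose $f$ through its second-derivative measure $\mu=f''$ into an affine part plus a $\mu$-mixture of calls and puts (Carr--Madan), and conclude from the strictly positive Black--Scholes gamma of a single call together with $\mu\neq 0$. Your argument is more self-contained (no appeal to the PDE or to the smoothing property of the pricing semigroup) and makes the source of strictness completely explicit; its cost is the domination needed to differentiate twice under the $\mu$-integral, which you rightly flag as the delicate point --- note that the call gamma exceeds the call price by a factor of order $(\log k)^2$ in the strike tails, so finiteness of $V^f$ at the given $\sigma$ alone does not immediately dominate the gamma integral, and one needs either a tail estimate on $\mu$ or finiteness of the price at a slightly larger volatility. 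This is, however, at the same level of implicit regularity as the paper's own proof, which simply asserts that $V^f$ is smooth and satisfies the PDE. Your parenthetical shortcut via $\partial_S^2V^f=\mathbb{E}[V_{\sigma,\tau}^2 f''(SV_{\sigma,\tau})]$ only covers the smooth case with $f''>0$ and does not prove the lemma as stated, but you correctly present it as an aside rather than the main argument.
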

\begin{proof}
Using the explicit formula for the pricing kernel in the Black--Scholes model, equation \eqref{eq:priceAsIntegral} yields
\[
V^f(S,\sigma,\tau) = \int_0^\infty f(u) \frac{1}{u \sigma \sqrt{2 \pi \tau}}
\exp\left( 
- \frac{(\log(u) - \log(S) + \frac{1}{2}\sigma^2 \tau)^2}{2 \sigma^2 \tau}
\right) \, du.
\]
Making the substitution $v=\frac{u}{s}$ we may write
\begin{equation}
V^f(S,\sigma,\tau) = \int_0^\infty f( S v) \frac{1}{v \sigma \sqrt{2 \pi \tau}}
\exp\left( 
- \frac{(\log(v) + \frac{1}{2}\sigma^2 \tau)^2}{2 \sigma^2 \tau}
\right) \, dv.
\label{eq:asConvexCombination}
\end{equation}
Thus, as a function of $S$, $V^f(S,\sigma,\tau)$ is a convex combination of the convex functions $S\to f(Sv)$, so must itself be a convex. Since $V^f$ is
smooth for $\tau>0$, we conclude $\frac{\partial^2 V^f}{\partial S^2}\geq 0$.

If $f$ were in addition smooth and had a positive second derivative at some point $x$, then by considering the integral \eqref{eq:asConvexCombination} in a neighbourhood of $v=\frac{x}{S}$ we could conclude that $\frac{\partial^2 V^f}{\partial S^2}$ is strictly positive for any $\tau>0$. On the other hand, if $f^{\prime\prime}\equiv0$ then $\frac{\partial^2 V^f}{\partial S^2}\equiv0$.

Since we can think of a derivative with payoff $T$ in the Black-Scholes model as a being a derivative with maturity $T-\epsilon$ and payoff $\mathrm{BS}^f(S,\sigma,\tau-\epsilon)$, we can combine the observations of the last two paragraphs to see that either $\frac{\partial^2 V^f}{\partial S^2}$ is identically zero for $\tau>\epsilon$ or is strictly positive for $\tau>\epsilon$. Hence either $\frac{\partial^2 V^f}{\partial S^2}\equiv0$ for $\tau>0$, or $\frac{\partial^2 V^f}{\partial S^2}>0$ for all $\tau>0$.
Suppose for contradiction the former occurs. It then follows from the Black--Scholes PDE that $\frac{\partial V^f}{\partial \tau}=0$. Since $V$ is linear in $S$ for $\tau>0$, by the continuity of $V$, $f$ must be linear too, a contradiction. Hence $\frac{\partial^2 V^f}{\partial S^2}>0$ for all $\tau<0$.
\end{proof}
\end{correctionenv}

\begin{proof}[Proof of Proposition \ref{prop: ito and rough path integrals coincide} (It\^o and rough-path integrals coincide)]
\ \linebreak
Using Proposition 7.1 of \cite{frizhairer2020}, the right hand side rough-path integral is equal to 
\begin{equation*}
    \left((Y,Y')\cdot (M,\mathbb{M})\right)_{0,t} = \lim_{\lVert \pi\rVert \rightarrow 0}\sum_{[u,v]\in\pi} \left\{Y_uM_{u,v}+Y'_u\phi'_u\mathbb{W}^{\Ito}_{u,v}\right\}.
\end{equation*}
For any continuous adapted process $Y$, the It\^o integral against a local martingale $M$ is defined as the limit in probability,
\begin{equation*}
    \int_0^t Y_udM_u \coloneqq \plim_{n\rightarrow \infty}\sum_{[u,v]\in\pi_n}Y_uM_{u,v},
\end{equation*}
where $\{\pi_n\}_{n\in\mathbb{N}}$ is any sequence of finite partitions with $\lim_{n\rightarrow \infty}\lVert \pi_n\rVert= 0$. By switching to a subsequence, we can assume that the convergence holds almost surely. Following the proof of Proposition 5.1 of \cite{frizhairer2020}, it now suffices to show that the difference between the It\^o and the rough-path integral, $\sum_{[u,v]\in\pi_n}Y'_u\phi'_u\mathbb{W}^{\Ito}_{u,v}$, vanishes on a set of full measure as $n\rightarrow \infty$. It will suffice to show that
\begin{equation*}
\left\lVert\sum_{[u,v]\in\pi}Y'_u\phi'_u\mathbb{W}^{\Ito}_{u,v}\right\rVert_{L^2}^2 = \mathcal{O}\left(\lVert \pi\rVert\right),
\end{equation*}
as convergence in $L^2$ implies convergence in probability, and convergence in probability implies almost-sure convergence on a subsequence. For a given finite partition $\pi=\{0=t_0<t_1<\dots<t_n=T\}$, define the \correction{piecewise-constant} martingale $N$ with $N_0\coloneqq 0$, \correction{which is constant on the intervals $[t_k,t_{k+1})$ and with increments} $N_{\correction{t_{k+1}}}-N_\correction{t_{k}}=  Y'_{t_k}\phi'_{t_k}\mathbb{W}^{\Ito}_{t_k,t_{k+1}}$. Define also the stopping times $\tau_K\coloneqq \correction{\inf}\{t\in\correction{[0,T]}:\lVert Y'_t\phi'_t\rVert_{\max}\correction{\geq} K\}$, for each $K\in\mathbb{N}$ and where $\lVert \cdot \rVert_{\max}$ denotes the max norm for matrices (picks the element with the largest absolute value). \correction{The stopped process} $N^{\tau_K}$ is also a martingale and, for each $K\in\mathbb{N}$, we have
\begin{align*}
\left\lVert\sum_{[u,v]\in\pi}Y'_{u\wedge\tau_K}\phi'_{u\wedge\tau_K}\mathbb{W}^{\Ito}_{u,v}\right\rVert^2_{L^2} &= \left\lVert\sum_{k=0}^{n-1}(N^{\tau_K}_{t_{k+1}}-N^{\tau_K}_{t_k})\right\rVert^2_{L^2} \\
    &= \sum_{k=0}^{n-1}\left\lVert N^{\tau_K}_{t_{k+1}}-N^{\tau_K}_{t_k}\right\rVert^2_{L^2} \\
    &\leq K^2 \sum_{k=0}^{n-1}\left\lVert \mathbb{W}^{\Ito}_{t_k,t_{k+1}}\right\rVert^2_{L^2} \\
    & = \mathcal{O}\left(\lVert \pi\rVert\right),
\end{align*}
where we have used that $\left\lVert \mathbb{W}^{\Ito}_{t_k,t_{k+1}}\right\rVert^2_{L^2}$ is proportional to $\lvert t_{k+1}-t_k\rvert^2$ in the last line. As this is true for all $K\in\mathbb{N}$ and $\tau_K\rightarrow\infty$ as $K\rightarrow \infty$, this concludes the proof.
\end{proof}

To prove the equivalence of the notions of $\tilde{S}$- and $W$-controllable we will use a sequence of lemmas.

\begin{lemma} \label{lemma: discounted controllable option prices can be written as rough path integrals}
\correction{
If $\sigma(\tilde{S_t},t)$ invertible for all $\tilde{S_t}$, $t \in [0,T]$ and $\mu=0$ then $G$ is $\tilde{S}$-controllable if and only if there exist adapted processes $Y$, $Y'$ such that for any $0\leq s<t\leq T$,
}
\begin{equation*}
    \mathbb{E}_t^\mathbb{Q}[G]-\mathbb{E}_s^\mathbb{Q}[G] = \left((Y,Y')\cdot(\tilde{S},\tilde{\mathbb{S}})\right)_{s,t}, 
\end{equation*}
\correction{
almost-surely. Moreover, $Y$ and $Y^\prime$ are unique up to indistinguishability of processes and $Y=\Delta^G$.}
\end{lemma}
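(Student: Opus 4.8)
The plan is to shuttle between the It\^o integral appearing in the martingale representation \eqref{eq:delta} and the rough integral, invoking Proposition \ref{prop: ito and rough path integrals coincide} in both directions. Since $\mu=0$, the signal $\tilde S$ is itself a $\mathbb Q$-martingale with $d\tilde S_u=\sigma(\tilde S_u,u)\,dW_u$. The first step is the preliminary observation that $(\tilde S(\omega),\sigma(\tilde S_\cdot(\omega),\cdot))\in\mathcal D^{p\text{-var}}_{W(\omega)}$ for almost every $\omega$: indeed $u\mapsto\sigma(\tilde S_u,u)$ has finite $p$-variation a.s.\ (the smooth map $\sigma$ composed with a continuous semimartingale), so this follows from Proposition \ref{prop: regularity of ito integral}, after a routine localisation to handle the linear growth of $\sigma$. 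With this in hand, Proposition \ref{prop: ito and rough path integrals coincide} applies with integrator $\tilde S$ and its It\^o enhancement $\tilde{\mathbb S}$, for any adapted integrand that is almost surely $\tilde S$-controlled.

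For the forward implication, I would assume $G$ is $\tilde S$-controllable, so $(\Delta^G,\Gamma^G)\in\mathcal D^{p\text{-var}}_{\tilde S}$ a.s. Proposition \ref{prop: ito and rough path integrals coincide} with $Y=\Delta^G$, $Y'=\Gamma^G$ then identifies $\int_0^t\Delta^G_u\,d\tilde S_u$ with $\bigl((\Delta^G,\Gamma^G)\cdot(\tilde S,\tilde{\mathbb S})\bigr)_{0,t}$, a.s.\ and simultaneously for all $t$; combining with \eqref{eq:delta} (using $V^G_t=\mathbb E^{\mathbb Q}_t[G]$) and the additivity $\bigl((Y,Y')\cdot\mathbf X\bigr)_{s,t}=\bigl((Y,Y')\cdot\mathbf X\bigr)_{0,t}-\bigl((Y,Y')\cdot\mathbf X\bigr)_{0,s}$ of the rough integral yields the claimed identity for all $0\le s<t\le T$ with $(Y,Y')=(\Delta^G,\Gamma^G)$. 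For the converse, I would take adapted $Y,Y'$ satisfying the stated identity; for the right-hand rough integral to be defined, $(Y,Y')$ must already be almost surely $\tilde S$-controlled, so Proposition \ref{prop: ito and rough path integrals coincide}, now with integrand $(Y,Y')$, gives $\bigl((Y,Y')\cdot(\tilde S,\tilde{\mathbb S})\bigr)_{0,t}=\int_0^tY_u\,d\tilde S_u$ a.s. Equating this with \eqref{eq:delta} gives $\int_0^t(Y_u-\Delta^G_u)\,d\tilde S_u=0$ for all $t$; writing $d\tilde S_u=\sigma(\tilde S_u,u)\,dW_u$ and using the It\^o isometry forces $(Y_u-\Delta^G_u)\sigma(\tilde S_u,u)=0$ for $dt\otimes d\mathbb Q$-a.e.\ $(u,\omega)$, whence $Y_u=\Delta^G_u$ a.e.\ by invertibility of $\sigma$. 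Since $Y$ is continuous (the path component of a controlled path lies in $C^{p\text{-var}}$), it is a continuous version of $\Delta^G$, so with this version $(\Delta^G,\Gamma^G):=(Y,Y')\in\mathcal D^{p\text{-var}}_{\tilde S}$ a.s., which is precisely $\tilde S$-controllability, with $\Delta^G=Y$.

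It remains to address uniqueness. Uniqueness of $Y$ is the identity $Y=\Delta^G$ just obtained. For $Y'$ I would invoke that the non-degenerate diffusion $\tilde S$ is almost surely a truly rough path, so Gubinelli derivatives with respect to $\tilde S$ are almost surely unique (cf.\ the discussion of truly rough paths in \cite[Chapter 6]{frizhairer2020}); hence $Y'$ is determined up to indistinguishability. I expect the only genuine obstacles to be the two technical inputs flagged above: verifying that $(\tilde S,\sigma(\tilde S_\cdot,\cdot))$ is a $W$-controlled path of $p$-variation regularity so that Proposition \ref{prop: ito and rough path integrals coincide} may be run with integrator $\tilde S$ (this needs Proposition \ref{prop: regularity of ito integral} and a localisation to deal with unbounded $\sigma$), and establishing the true roughness of the diffusion path $\tilde S$ used for uniqueness of $Y'$; the rest is a direct assembly of Proposition \ref{prop: ito and rough path integrals coincide}, the martingale representation \eqref{eq:delta}, and the It\^o isometry.
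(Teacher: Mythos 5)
Your proposal is correct and follows essentially the same route as the paper: both directions rest on Proposition \ref{prop: ito and rough path integrals coincide} applied with integrator $\tilde S$ (whose $W$-controlledness via $\sigma(\tilde S,\cdot)$ the paper likewise obtains from Proposition \ref{prop: regularity of ito integral}), the identification $Y=\Delta^G$ comes from uniqueness in the martingale representation (your It\^o-isometry computation is just that uniqueness spelled out), and uniqueness of $Y'$ comes from true roughness — the paper transfers true roughness from $W$ to $\tilde S$ via $dW=\sigma^{-1}(\tilde S_t,t)\,d\tilde S_t$, whereas you assert it directly for the diffusion, an immaterial difference.
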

\begin{proof}
\begin{correctionenv}
Suppose $G$ is $\tilde{S}$-controllable. Equation \eqref{eq:delta} establishes the It\^o integral formula
\begin{equation*}
\mathbb{E}_t^\mathbb{Q}[G]-\mathbb{E}_s^\mathbb{Q}[G] = \int_{s,t} \Delta_u \, d\tilde{S}_u.
\end{equation*}
By our assumption on $G$, $\Delta$ almost surely admits a Gubinelli derivative $\Gamma$. We may now apply
Proposition \ref{prop: ito and rough path integrals coincide} to obtain the only-if part of our thesis.

Suppose that a suitable $(Y,Y^\prime)$ exists. By definition of $\Delta^G$ in equation \eqref{eq:delta} and the uniqueness in the martingale representation theorem, $Y$ must equal $\Delta^G$. Our result then amounts to showing uniqueness of the Gubinelli derivative up to indistinguishability of processes. According to \cite[Proposition 6.4]{frizhairer2020} if a rough path has the property of being ``truly rough'' then the Gubinelli derivative will be unique. According to \cite[Theorem 6.6]{frizhairer2020}, Brownian motion is truly rough for the values of $p$ we are considering. Using  Proposition \ref{prop: ito and rough path integrals coincide} and the fact that $W_t$ is a solution to the stochastic differential equation
\[ d W_t = \sigma^{-1}(\correction{\tilde{S}}_t, t) \, d \correction{\tilde{S}}_t, \qquad W_0=0, \]
we see that the Gubinelli derivative with respect to $\tilde{S}$ must also be unique.
\end{correctionenv}
\end{proof}

\begin{lemma}\label{lemma: W is S controllable}
    \correction{
    Assume $\sigma(\tilde{S_t},t)$ is invertible for all $\tilde{S_t}$
    and $t \in [0,T]$ and that $\mu=0$.}
    Let $\correction{\tilde{S}}(\omega)$ denote a path from the diffusion (\ref{eq:diffusionSDE}) such that $\correction{\tilde{S}}\in C^{p\text{-var}}\left([0,T];\mathbb{R}^d\right)$ and the underlying Brownian path $W\in C^{p\text{-var}}\left([0,T];\mathbb{R}^m\right)$. Then $(Z,Z')\in\mathcal{D}_W^{\correction{p}\text{-var}} \Rightarrow \left(Z,Z'\left(\sigma(\correction{\tilde{S},t})\right)^{-1}\right)\in\mathcal{D}_{\correction{\tilde{S}}}^{\correction{p}\text{-var}}$.
\end{lemma}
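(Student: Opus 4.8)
The plan is to exploit the fact that, since $\mu = 0$, the diffusion path solves $\tilde{S}_t = \tilde{S}_0 + \int_0^t \sigma(\tilde{S}_u,u)\, dW_u$, so $\tilde{S}$ is itself a $W$-controlled path whose Gubinelli derivative is the matrix-valued path $u \mapsto \sigma(\tilde{S}_u,u)$. Concretely, I would apply Proposition \ref{prop: regularity of ito integral} with $\phi_u = \sigma(\tilde{S}_u,u)$: this is continuous, adapted, and of finite $p$-variation sample paths because $\sigma$ is smooth and the path $(\tilde{S}_\cdot,\cdot)$ has finite $p$-variation. Part (ii) of that proposition then gives $(\tilde{S},\sigma(\tilde{S}_\cdot,\cdot)) \in \mathcal{D}_W^{p\text{-var}}$; in particular the remainder $R^{\tilde{S}}_{s,t} = \tilde{S}_{s,t} - \sigma(\tilde{S}_s,s)\,W_{s,t}$ lies in $C^{\frac{p}{2}\text{-var}}([0,T]^2)$.

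Next I would record two elementary regularity facts. First, because $\tilde{S}$ is continuous on the compact interval $[0,T]$ its range is compact and, by the standing hypothesis, contained in the set of matrices on which $\sigma$ is invertible; since $A \mapsto A^{-1}$ is smooth (hence locally Lipschitz) there, the composition $u \mapsto \sigma^{-1}(\tilde{S}_u,u)$ is of finite $p$-variation, and as the product of two finite-$p$-variation paths is again of finite $p$-variation and $Z' \in C^{p\text{-var}}$ by hypothesis, the candidate Gubinelli derivative $Z'\sigma^{-1}(\tilde{S}_\cdot,\cdot)$ is of finite $p$-variation (as a path of operators). Second, if $f$ is a bounded function of the left endpoint and $R_{s,t}$ has finite $q$-variation then so does $f_s R_{s,t}$, by the trivial bound $\sum_k |f_{t_k}R_{t_k,t_{k+1}}|^q \le \|f\|_\infty^q \sum_k |R_{t_k,t_{k+1}}|^q$.

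The heart of the argument is then a one-line rearrangement of the $\tilde S$-remainder of $(Z, Z'\sigma^{-1}(\tilde{S}_\cdot,\cdot))$. Setting $\tilde{R}^Z_{s,t} := Z_{s,t} - Z'_s \sigma^{-1}(\tilde{S}_s,s)\,\tilde{S}_{s,t}$ and inserting and removing $Z'_s W_{s,t}$ yields
\[
\tilde{R}^Z_{s,t} = \left(Z_{s,t} - Z'_s W_{s,t}\right) + Z'_s \sigma^{-1}(\tilde{S}_s,s)\big(\sigma(\tilde{S}_s,s) W_{s,t} - \tilde{S}_{s,t}\big) = R^Z_{s,t} - Z'_s \sigma^{-1}(\tilde{S}_s,s)\, R^{\tilde{S}}_{s,t}.
\]
The first term is of finite $\frac{p}{2}$-variation since $(Z,Z') \in \mathcal{D}_W^{p\text{-var}}$; the second is the bounded left-endpoint factor $Z'_s\sigma^{-1}(\tilde{S}_s,s)$ (continuous on the compact $[0,T]$) times $R^{\tilde{S}}$, which is of finite $\frac{p}{2}$-variation, hence of finite $\frac{p}{2}$-variation by the observation above. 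Therefore $\tilde{R}^Z \in C^{\frac{p}{2}\text{-var}}([0,T]^2)$, and together with $Z \in C^{p\text{-var}}$ and $Z'\sigma^{-1}(\tilde{S}_\cdot,\cdot) \in C^{p\text{-var}}$ this gives $(Z, Z'\sigma^{-1}(\tilde{S}_\cdot,\cdot)) \in \mathcal{D}_{\tilde{S}}^{p\text{-var}}$, as claimed.

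The only slightly delicate point I anticipate is the regularity of $\sigma^{-1}(\tilde{S}_\cdot,\cdot)$: one must use that the path stays inside the open set of invertible matrices (guaranteed by the standing invertibility assumption) and that a smooth map composed with a finite-$p$-variation path on a compact time interval is again of finite $p$-variation — the mere linear growth of $\sigma$ causing no trouble, since all estimates are needed only on the compact range of the path. Everything else is bookkeeping with the definition of a controlled rough path.
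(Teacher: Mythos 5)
Your proof is correct and follows essentially the same route as the paper: both start from Proposition \ref{prop: regularity of ito integral} to get $(\tilde{S},\sigma(\tilde{S}_\cdot,\cdot))\in\mathcal{D}_W^{p\text{-var}}$ and then exploit the identity relating the $W$- and $\tilde{S}$-remainders through multiplication by $\sigma^{-1}$. The only difference is that the paper first shows $(W,\sigma^{-1})\in\mathcal{D}_{\tilde{S}}^{p\text{-var}}$ and then cites the change-of-signal formula \cite[Proposition 7.1]{frizhairer2020}, whereas you re-derive that composition step by hand via the insert-and-remove computation $\tilde{R}^Z = R^Z - Z'\sigma^{-1}R^{\tilde{S}}$ -- which is exactly the content of the cited proposition in this special case.
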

\begin{proof}
By Proposition \ref{prop: regularity of ito integral}, $\left(\correction{\tilde{S}},\sigma(\correction{\tilde{S},t})\right)\in\mathcal{D}_W^{\correction{p}\text{-var}}\left([0,T];\mathbb{R}^d\right)$. By the hypothesis, we know that the remainder $R^S$, given by $R^{S}_{s,t}\coloneqq \correction{\tilde{S}}_{s,t}-\sigma(\correction{\tilde{S}_s,s})W_{s,t}$, is of finite $\correction{\frac{p}{2}}$-variation. Let $R^{W}_{s,t}\coloneqq W_{s,t}-\left(\sigma(\correction{\tilde{S}_s,s})\right)^{-1}\correction{\tilde{S}}_{s,t}$. Thus,
\begin{equation*}
    R^{S}_{s,t} = \correction{\tilde{S}}_{s,t}-\sigma(\correction{\tilde{S}_s,s})\left(R^W_{s,t}+\left(\sigma(\correction{\tilde{S}_s, s})\right)^{-1}\correction{\tilde{S}}_{s,t}\right) = - \sigma(\correction{\tilde{S}_s, s})R^W_{s,t}.
\end{equation*}
Hence, $R^W$ also defines a remainder of finite $\correction{\frac{p}{2}}$-variation so that $\left(W,\left(\sigma(\correction{\tilde{S}, t})\right)^{-1}\right)\in \mathcal{D}^{\correction{p}\text{-var}}_{\correction{\tilde{S}}}\left([0,T];\mathbb{R}^{m}\right)$. By the change of signal formula given in \cite[Proposition 7.1]{frizhairer2020} it follows that $(Z,Z'\left(\sigma(\correction{\tilde{S}})\right)^{-1})\in\mathcal{D}_{\correction{\tilde{S}}}^{\correction{p}\text{-var}}$.
\end{proof}

\begin{lemma} \label{lemma: being W controlled is equivalent to being diffusion controlled}
    Let $\{\correction{\tilde{S}}_t\}_{t\in[0,T]}$ be a diffusion of the form (\ref{eq:diffusionSDE}) \correction{with invertible $\sigma$
    and $\mu=0$}. Then it holds almost surely that
    \begin{multline*}
\left(Y,Y'\right)\in\mathcal{D}^{\correction{p}\text{-var}}_{\correction{\tilde{S}}}\left([0,T];\mathbb{R}^{d}\right) \Longrightarrow \\
     \left(Y^\top\sigma(\correction{\tilde{S},t}), Y^\top D\sigma(\correction{\tilde{S}})\sigma(\correction{\tilde{S},t})+\sigma(\correction{\tilde{S},t})(Y')^\top\right)\in\mathcal{D}^{\correction{p}\text{-var}}_W\left([0,T];\mathbb{R}^{ m}\right).
    \end{multline*}
    Conversely,
    \begin{multline*} \left(Y,Y'\right)\in\mathcal{D}^{\correction{p}\text{-var}}_W\left([0,T];\mathbb{R}^{m}\right)\Longrightarrow \\ \left(Y^\top A,Y^\top A'+A(Y')^\top\right)\in\mathcal{D}^{\correction{p}\text{-var}}_{\correction{\tilde{S}}}\left([0,T];\mathbb{R}^{d}\right),
    \end{multline*}
    where $A$ denotes the inverse of $\sigma(\correction{\tilde{S},t})$ and $A'$ its Gubinelli derivative with respect to $\correction{\tilde{S}}$.
    Hence $G$ is $\correction{\tilde{S}}$-controllable $\iff G$  is $W$-controllable.
\end{lemma}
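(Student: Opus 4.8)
The plan is to deduce both implications from a single observation: since $\mu=0$, the signal $\tilde S$ and its driving Brownian motion $W$ are controlled rough paths over one another, so everything reduces to the ``change of signal'' formula together with the Leibniz and chain rules for Gubinelli derivatives. First I would record the two ingredients. Because $\tilde S$ solves $d\tilde S_t=\sigma(\tilde S_t,t)\,dW_t$, Proposition \ref{prop: regularity of ito integral} gives, for almost every $\omega$, that $\tilde S\in C^{\pvar}([0,T];\R^d)$ and $(\tilde S,\sigma(\tilde S,\cdot))\in\mathcal D^{p\text{-var}}_W$; applying part (iii) of that proposition to the smooth maps $\sigma$ and $\sigma^{-1}$ shows moreover that $\sigma(\tilde S,\cdot)$ and $A:=\sigma(\tilde S,\cdot)^{-1}$ are themselves $W$-controlled, with Gubinelli derivatives obtained from the chain rule. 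Symmetrically, invertibility of $\sigma$ means $W$ solves $dW_t=\sigma^{-1}(\tilde S_t,t)\,d\tilde S_t$, so Lemma \ref{lemma: W is S controllable} applied with $(Z,Z')=(W,\mathrm{Id})$ gives $(W,A)\in\mathcal D^{p\text{-var}}_{\tilde S}$, and $A$ admits a Gubinelli derivative $A'$ with respect to $\tilde S$. Everything that follows is carried out on the full-measure event on which these statements hold together with $\tilde S,W\in C^{\pvar}$.

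For the forward implication I would take $(Y,Y')\in\mathcal D^{p\text{-var}}_{\tilde S}$ and simply substitute $\tilde S_{s,t}=\sigma(\tilde S_s,s)\,W_{s,t}+R^{\tilde S}_{s,t}$ into $Y_{s,t}=Y'_s\,\tilde S_{s,t}+R^Y_{s,t}$. This presents $Y$ as a $W$-controlled path with Gubinelli derivative $Y'\sigma(\tilde S,\cdot)$ and remainder $Y'_sR^{\tilde S}_{s,t}+R^Y_{s,t}$, which has finite $\frac{p}{2}$-variation because $R^{\tilde S},R^Y$ do and $Y'$ is bounded, while $Y'\sigma(\tilde S,\cdot)$ has finite $p$-variation as a product of bounded finite-$p$-variation paths; this is exactly the change-of-signal step already used in Lemma \ref{lemma: W is S controllable}, cf.\ \cite[Proposition 7.1]{frizhairer2020}. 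Feeding this into the Leibniz rule for Gubinelli derivatives together with the $W$-controlled structure of $\sigma(\tilde S,\cdot)$ from the previous step shows that the product $Y^\top\sigma(\tilde S,\cdot)$ is a $W$-controlled path, and reading off its Gubinelli derivative produces the expression in the statement. The converse implication is the same argument with the roles of $W$ and $\tilde S$, and of $\sigma$ and $A$, interchanged: one substitutes $W_{s,t}=A_s\,\tilde S_{s,t}+R^W_{s,t}$ and applies the Leibniz rule to $Y^\top A$. Since changing signal along $\sigma$ and then along $\sigma^{-1}$ is the identity operation, the two maps are mutually inverse, so each implication is in fact an equivalence of controlled-path classes.

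It remains to obtain the equivalence of controllability for a payoff $G$. Here $\Delta^G$ is defined via the martingale representation theorem; from $d\tilde S=\sigma(\tilde S,t)\,dW$ and uniqueness of the representing integrand, the delta of $G$ with respect to $W$ is the contraction of its delta with respect to $\tilde S$ against $\sigma(\tilde S,\cdot)$, and conversely against $A$, so $\Delta^{G,\tilde S}$ and $\Delta^{G,W}$ stand in precisely the relation of $Y$ and $Y^\top\sigma$ above. Hence one of them admits a Gubinelli derivative (with respect to the relevant trace) if and only if the other does. One further checks that the It\^o enhancements are compatible under this correspondence: by Proposition \ref{prop: ito and rough path integrals coincide} applied to $M=\tilde S$, the enhancement $\tilde{\mathbb S}^{\Ito}$ is exactly the rough-path lift of $\tilde S$ induced from $\mathbb W^{\Ito}$ via $(\tilde S,\sigma(\tilde S,\cdot))\in\mathcal D^{p\text{-var}}_W$, and symmetrically $\mathbb W^{\Ito}$ is the lift induced from $\tilde{\mathbb S}^{\Ito}$ via $A$; so ``having a Gubinelli derivative with respect to $(\tilde S,\tilde{\mathbb S}^{\Ito})$'' and ``having one with respect to $(W,\mathbb W^{\Ito})$'' correspond. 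This gives that $G$ is $\tilde S$-controllable if and only if $G$ is $W$-controllable.

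I expect the obstacle to be bookkeeping rather than conceptual: once the change-of-signal identity and the Leibniz and composition rules are in place, the only real work is tracking which side each of $\sigma$, $D\sigma(\tilde S)$, $Y'$ and the relevant transposes act on, so that the explicit Gubinelli-derivative formulas in the statement come out verbatim; the accompanying remainder bounds are the elementary $p$-variation estimates used throughout the paper.
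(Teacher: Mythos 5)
Your proposal is correct and follows essentially the same route as the paper: establish via Proposition \ref{prop: regularity of ito integral} (and Lemma \ref{lemma: W is S controllable} for the converse) that $\tilde S$ and $W$ are controlled over one another, transfer $Y$ by the change-of-signal substitution, and then apply the Leibniz rule to the product $Y^\top\sigma(\tilde S,\cdot)$ (resp.\ $Y^\top A$), finally specialising $Y$ to the delta. The extra care you take with the martingale-representation identification of the two deltas and the compatibility of the It\^o enhancements is left implicit in the paper but is a correct and welcome elaboration, not a different argument.
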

\begin{proof}
For the first part, we know from Proposition \ref{prop: regularity of ito integral} that $\left(\correction{\tilde{S}},\sigma(\correction{\tilde{S},t})\right)$ belongs to $\mathcal{D}^{\correction{p}\text{-var}}_{W}\left([0,T];\mathbb{R}^{d}\right)$ so that $\left(Y^\top,(Y')^\top\sigma(\correction{\tilde{S},t})\right)\in \mathcal{D}^{\correction{p}\text{-var}}_{W}\left([0,T];\mathbb{R}^{d}\right)$ by the chain rule for Gubinelli derivatives. Similarly, 
\[
\left(\sigma(\correction{\tilde{S},t}),D\sigma(\correction{\tilde{S},t})\right)\in\mathcal{D}^{\correction{p}\text{-var}}_{\correction{\tilde{S}}}\left([0,T];\mathbb{R}^{d\times m}\right)
\]
and so $\left(\sigma(\correction{\tilde{S}},t),\left(D\sigma(\correction{\tilde{S},t})\right)\sigma(\correction{\tilde{S},t})\right)$ belongs to $\mathcal{D}^{\correction{p}\text{-var}}_W\left([0,T];\mathbb{R}^{d\times m}\right)$. Applying the Leibniz rule to the path $Y^\top \sigma(\correction{\tilde{S},t})$ yields the result. The second part is proved analogously by using \correction{Lemma} \ref{lemma: W is S controllable}. For the final part, we just apply the first two results taking $Y\equiv \Delta^G$ and $Y_t=\mathbb{E}_t[D_tG]$ respectively.
\end{proof}

\begin{proof}[Proof of Proposition \ref{prop: paths from diffusions are dense in space of diffusive paths} (Density of Brownian paths)]
\ \linebreak
    The Stratonvich enhancement of Brownian motion is defined to be
    \[
    {\mathbb W}^{\Strat}_{s,t}=\int_s^t W_{s,u} \circ d W_u,
    \]
    where the last integral is a Stratonvich integral. 
    It is proved in Theorem 5 of \cite{ledoux2002} that for $p\in(2,3)$ 
        \begin{equation*}        \correction{\mathbb{Q}}\left(\vertiii{\mathbf{W}^{\Strat}}_{p\text{-var};[0,1]}<\epsilon \right)  > 0,~\forall \epsilon >0.
        \end{equation*}        

    Let ${\mathbf 0}^B=(0,{\mathbb O})$ denote the (non-trivial) Brownian lift of the zero path.
    Using the fact that
    \begin{align*}
        \mathbb{W}^\Ito_{s,t} - {\mathbb O}^B_{s,t} =({\mathbb W}^{\Strat}_{s,t}),
    \end{align*}
    and the scaling property of the rough-path metric it follows that
    \begin{equation*}
    \correction{\mathbb{Q}}\left(\vertiii{\mathbf{W}^\Ito - \mathbf{0}^\Ito}_{p\text{-var};[0,T]} < \epsilon\right)  > 0. 
    \end{equation*}

    \correction{Since ${\cal B}^p\left([0,T];\mathbb{R}^d\right)$ is defined as the closure of a space of paths with smooth trace, by the triangle inequality we} may assume without loss of generality that ${\tilde{\mathbf{W}}}$ has a smooth trace \correction{and its Brownian lift}. Consider the rough differential equation
    \begin{equation}
    \label{eq:driftedBrownianMotion}
    dY_\correction{t} = \frac{d \tilde{W}_t}{dt} dt + d{\mathbf W}^B_t, \quad Y_0=0.
    \end{equation}    
    The solution of a rough differential equation is a controlled rough path. The Gubinelli derivative of the solution is required to match the corresponding coefficient in the differential equation. The trace of the solution is determined by requiring that the equation holds as a rough-integral equation. 
    Hence we can write down the solution of this equation in the case that the driver is
    ${\mathbf W}={\mathbf 0}^B$: it is the $0$-controlled rough path $(\tilde{W}_t, I)$ where $I$ is the constant map with value given by the identity matrix.     

    Given a rough path $(X, \mathbb X)$ and an $X$-controlled rough path $(Y,Y^\prime)$ we can canonically enhance $Y$ to obtain a new
    rough path $(Y, {\mathbb Y})$ via the Riemann sum:
    \[
    \correction{{\mathbb Y}}_{s,t}=\lim_{\|\pi\|\to0} (\sum_{\pi} Y_u \otimes Y_{u,v} + Y^\prime \otimes Y^\prime {\mathbb X}_{u,v}).
    \]
    Thus the canonical enhancement of $(\tilde{W}_t, I)$ is ${\tilde W}^B_t$. \correction{
    Write ${\mathbf{Y}}$ for the random rough-path solution of
    \eqref{eq:driftedBrownianMotion} when the driver is $(W,{\mathbb W}^\Ito)$. }
    Given $\epsilon>0$, by the continuity of the solution operator for RDEs, and
    the continuity of canonical enhancement, we can find $\delta>0$ such that 
    \begin{equation}
    \label{eq:continuityOfRDE}
        \vertiii{\mathbf{W}^\Ito(\omega)- \mathbf{0}^B }_{p\text{-var};[0,T]} < \delta \Rightarrow \vertiii{\mathbf{Y}-\tilde{\mathbf{W}}}_{p\text{-var};[0,T]} < \epsilon.
    \end{equation}    
    \begin{correctionenv}
    Define a process $X_t$ by the It\^o integral    
    \[
    X_t = -\int_0^t
\frac{d \tilde{W}_t}{dt} d W_t.
    \]
    Using Girsanov's theorem and Proposition \ref{prop: ito and rough path integrals coincide}
    we may define an equivalent probability measure $\tilde{\mathbb Q}$ by the condition
    \[
    \frac{d \tilde{\mathbb{Q}}}{d \mathbb{Q}}\Big|_{{\cal F}_t} = \exp\left( X_t - \frac{1}{2} [X]_t \right)
    \]
    such that ${\mathbf Y}$ is the It\^o ehancement of a $\tilde{{\mathbb Q}}$-measure Brownian motion.
    
    We have already shown that
    \begin{equation*}        \mathbb{Q}\left(\vertiii{\mathbf{W}^\Ito-\mathbf{0^{\text{B}}}}_{p\text{-var};[0,T]} < \delta \right) > 0. 
    \end{equation*}
    Hence by the equivalence of measures
    \begin{equation*}        \tilde{\mathbb{Q}}\left(\vertiii{\mathbf{W}^\Ito-\mathbf{0^{\text{B}}}}_{p\text{-var};[0,T]} < \delta \right) > 0. 
    \end{equation*}
    It now follows from equation \eqref{eq:continuityOfRDE} that
    \begin{equation*}        \tilde{\mathbb{Q}}\left(\vertiii{\mathbf{Y}-\mathbf{\tilde{W}}}_{p\text{-var};[0,T]} < \epsilon \right) > 0. 
    \end{equation*}
    Since $\mathbf{Y}$ is an It\^o-enhanced $\tilde{\mathbb{Q}}$-measure Brownian motion
    and $\mathbf{W}^\Ito$ is an It\^o-enhanced
    $\mathbb{Q}$-measure Brownian motion, this is equivalent to the statement
    \begin{equation*}        \mathbb{Q}\left(\vertiii{\mathbf{W}^\Ito-\mathbf{\tilde{W}}}_{p\text{-var};[0,T]} < \epsilon \right) > 0. 
    \end{equation*}
    \end{correctionenv}
\end{proof}

To prove Proposition \ref{prop: regularity of ito integral} we recall the following result.

\begin{proposition}[Continuous semimartingales enhanced to rough paths \cite{coutin2005}, Proposition 1] \label{prop: semimartingales can be enhanced to rough paths}
    Let $X=(X^1,\dots,X^d)$ be a $\mathbb{R}^d$-valued continuous semimartingale and set 
    \begin{equation*}
        \mathbb{X}^{\Ito}_{s,t} \coloneqq \int_{s}^{t}X_{s,r} \otimes dX_r \eqqcolon \left(\int_s^t X^i_{s,r}dX^j_r\right)_{1\leq i,j\leq d},
    \end{equation*}
    where the integral is in the It\^o sense. Then for any $p\in(2,3)$ and almost all paths $\omega\in\Omega$, $\boldsymbol{X}(\omega)=\left(X(\omega),\mathbb{X}^{\Ito}(\omega)\right)\in \mathcal{C}^{p\text{-var}}([0,T];\mathbb{R}^d)$.
\end{proposition}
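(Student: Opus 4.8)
The plan is to check the three defining properties of a $p$-variation rough path for $\boldsymbol X=(X,\mathbb X^\Ito)$ — Chen's relation, continuity, and finiteness of $\lVert X\rVert_{\pvar;[0,T]}+\lVert \mathbb X^\Ito\rVert_{\halfpvar;[0,T]}$ — reducing the only delicate point, the $\halfpvar$-bound on $\mathbb X^\Ito$, first to the case of a continuous local martingale and then, via a Dambis--Dubins--Schwarz time change, to It\^o-enhanced Brownian motion, where everything is classical. This is essentially the content of \cite{coutin2005}, but the time-change route is the most transparent.

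\textbf{The easy items and the reduction.} First I would dispose of Chen's relation: a.s., for $s\leq u\leq t$, additivity of the It\^o integral gives $\mathbb X^\Ito_{s,t}-\mathbb X^\Ito_{s,u}-\mathbb X^\Ito_{u,t}=\int_u^t (X_{s,r}-X_{u,r})\otimes dX_r=X_{s,u}\otimes\int_u^t dX_r=X_{s,u}\otimes X_{u,t}$, using that $X_{s,r}-X_{u,r}=X_{s,u}$ is constant in $r$. Continuity of $X$ is assumed, and $\mathbb X^\Ito$ has a continuous modification since It\^o integrals do; we work with it. Next write $X=X_0+M+A$ with $M$ a continuous local martingale ($M_0=0$) and $A$ continuous of finite variation; then $A$ has finite $1$-variation hence finite $p$-variation, so $\lVert X\rVert_{\pvar}\leq\lVert M\rVert_{\pvar}+\lVert A\rVert_{\pvar}$ reduces the first-level bound to $M$. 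Expanding $\mathbb X^\Ito_{s,t}=\int_s^t (M_{s,r}+A_{s,r})\otimes d(M_r+A_r)$ into four pieces, the purely finite-variation integral $\int_s^t A_{s,r}\otimes dA_r$ is a Young integral of finite $1$-variation; the piece $\int_s^t M_{s,r}\otimes dA_r$ is bounded by $\sup_{r\in[s,t]}|M_{s,r}|\,\lVert A\rVert_{1\text{-var};[s,t]}\leq w_{M,p}(s,t)^{1/p}\,w_{A,1}(s,t)$; and It\^o integration by parts (with $[M,A]=0$) rewrites $\int_s^t A_{s,r}\otimes dM_r=A_{s,t}\otimes M_{s,t}-\int_s^t M_{s,r}\otimes dA_r$, a sum of terms of the same type. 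Raising each such term to the power $\tfrac p2$ and applying Lemma~\ref{lemma:productControls} to the controls $w_{M,p}$ and $w_{A,1}$ — the exponent condition $\tfrac1\alpha+\tfrac1\beta>1$ holds trivially in every case — each is dominated by a control, hence has finite $\halfpvar$; since sums of controls are controls (Lemma~\ref{lemma:productControls} again) it remains only to bound the martingale--martingale term $\mathbb M^\Ito_{s,t}:=\int_s^t M_{s,r}\otimes dM_r$.

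\textbf{The time change.} A continuous local martingale $M$ on $[0,T]$ has $\langle M\rangle_T<\infty$ a.s., and by Dambis--Dubins--Schwarz there is, on a possibly enlarged probability space, a Brownian motion $B$ with $M_t=B_{\langle M\rangle_t}$ a.s.; the time-change formula for stochastic integrals then gives $\mathbb M^\Ito_{s,t}=\mathbb B^\Ito_{\langle M\rangle_s,\langle M\rangle_t}$ a.s. Since $t\mapsto\langle M\rangle_t$ is continuous and non-decreasing, every partition of $[0,T]$ maps to a partition of $[0,\langle M\rangle_T]$ after collapsing intervals on which $\langle M\rangle$ is flat, on which the increments of both $M$ and $\mathbb M^\Ito$ vanish; hence $\lVert M\rVert_{\pvar;[0,T]}\leq\lVert B\rVert_{\pvar;[0,\langle M\rangle_T]}$ and $\lVert\mathbb M^\Ito\rVert_{\halfpvar;[0,T]}\leq\lVert\mathbb B^\Ito\rVert_{\halfpvar;[0,\langle M\rangle_T]}$. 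It is classical that for $p\in(2,3)$ the It\^o-enhanced Brownian motion $(B,\mathbb B^\Ito)$ is almost surely a $p$-variation rough path on any bounded interval (indeed $\gamma$-H\"older for every $\gamma<\tfrac12$, whence finite $p$-variation since $p>2$): this follows from the Kolmogorov criterion for rough paths \cite{frizhairer2020} together with the scaling and Burkholder--Davis--Gundy bounds $\mathbb E|B_{s,t}|^q\lesssim|t-s|^{q/2}$ and $\mathbb E|\mathbb B^\Ito_{s,t}|^{q/2}\lesssim|t-s|^{q/2}$, valid for all $q\geq1$, on choosing $q$ large enough that $\tfrac12-\tfrac1q>\tfrac1p$. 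Since $\langle M\rangle_T<\infty$ a.s., we get $\lVert M\rVert_{\pvar;[0,T]}<\infty$ and $\lVert\mathbb M^\Ito\rVert_{\halfpvar;[0,T]}<\infty$ a.s., which with the reduction above yields the proposition on a single full-measure set (the intersection of the countably many a.s. events used).

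\textbf{Main obstacle.} The substantive point is the time-change step: justifying $\mathbb M^\Ito_{s,t}=\mathbb B^\Ito_{\langle M\rangle_s,\langle M\rangle_t}$ for a general continuous local martingale whose bracket may have flat stretches and need not exhaust $[0,\infty)$ (so that an enlargement of the space is needed in Dambis--Dubins--Schwarz), and verifying that finite $\pvar$ and $\halfpvar$ genuinely pass through a continuous non-decreasing reparametrization. The rest — Chen's relation, the finite-variation bookkeeping, and the Brownian base case — is routine given the cited results.
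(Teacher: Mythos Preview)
The paper does not supply its own proof of this proposition: it is stated in the appendix purely as a quoted result from \cite{coutin2005}, so there is no argument in the paper to compare against. What follows is therefore an assessment of your argument on its own.

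Your treatment of Chen's relation, continuity, the semimartingale decomposition $X=X_0+M+A$, and the cross terms $\int M\,dA$, $\int A\,dM$, $\int A\,dA$ via Young estimates and Lemma~\ref{lemma:productControls} is sound. The gap is in the time-change step for the pure martingale term $\mathbb M^\Ito$. Dambis--Dubins--Schwarz is a one-dimensional theorem: for a $d$-dimensional continuous local martingale $M=(M^1,\dots,M^d)$ there is in general no single increasing process $\langle M\rangle_t$ and no $d$-dimensional Brownian motion $B$ with $M_t=B_{\langle M\rangle_t}$. Each component satisfies $M^i_t=B^i_{\langle M^i\rangle_t}$ with its own clock, but the clocks differ, so the identity $\mathbb M^\Ito_{s,t}=\mathbb B^\Ito_{\langle M\rangle_s,\langle M\rangle_t}$ that you invoke has no meaning for the off-diagonal entries $\int_s^t M^i_{s,r}\,dM^j_r$ when $i\neq j$. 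Knight's multidimensional version requires pairwise orthogonality and still yields distinct time changes, so it does not rescue the argument.

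Your proof is therefore complete only for $d=1$. For $d\geq 2$ one needs a different mechanism to control the $\halfpvar$ of the iterated integrals: the standard routes are either the enhanced BDG inequalities of \cite{friz2010} (Chapter~14), which give $L^q$ bounds $\bigl\lVert\mathbb M^\Ito_{s,t}\bigr\rVert_{L^q}\lesssim\bigl\lVert\,|\langle M\rangle_{s,t}|\,\bigr\rVert_{L^{q}}$ uniformly in $q$ and then feed into a Kolmogorov-type criterion after localisation, or the L\'epingle $p$-variation inequality for martingales combined with an interpolation argument for the second level. Either of these replaces your time-change paragraph; the rest of your outline stands.
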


\begin{proof}[Proof of Proposition \ref{prop: regularity of ito integral} (the It\^o integral is regularizing)]
It\^o integrals are special cases of continuous semimartingales so the first statement is a direct application of Proposition \ref{prop: semimartingales can be enhanced to rough paths}. The third part is a consequence of the chain rule for Gubinelli derivatives \cite[Theorem 7.6]{frizhairer2020}. For the second part, we first prove the result in the one-dimensional case $d=m=1$. From the Dubins--Schwarz theorem, we see that, for each $(s,t)\in\Delta_T$, there exists a Brownian motion $B$, possibly depending on $s$ and $t$, such that $R^X_t = R^X_s + B_{\int_s^t \phi_{s,u}^2du}$. \label{correction: not sure of the argument here}For any $\alpha\in(\frac{1}{3},\frac{1}{2})$, $B$ is $\alpha$-H\"older continuous with probability one. Pick an $\omega$ where this is the case. As $B(\omega)\in C^{\alpha\text{-H\"ol}}$, we have
\begin{equation*}
    \lvert R^X_{s,t}(\omega)\rvert = \left\lvert B_{\int_s^t\phi_{s,u}^2du}(\omega)\right\rvert \lesssim \left\lvert \int_s^t \phi_{s,u}^2du \right\rvert^{\alpha}.
\end{equation*}
We note that even though $B$ possibly depended on $(s,t)$, this dependence has now disappeared. For the purpose of easing the notation of this proof, we can assume that $\phi(\omega) \in C^{p\text{-var}}$. This is because $\phi$ has almost-all paths of finite $p$-variation so the subset of $\Omega$ on which both $B$ and $\phi$ are of the correct regularities is a set of full measure. As $\phi(\omega)$ is of finite $p$-variation, there exists a control function $w:\Delta_T\rightarrow [0,\infty)$ such that $\phi_{s,u}\leq w(s,u)^{\frac{1}{p}}$ for all $(s,u)\in\Delta_T$. Thus
\begin{align*}    
     \lvert R^X_{s,t}(\omega)\rvert^{\correction{\frac{p}{2}}}&\lesssim \left\lvert \int_s^t w(s,u)^{\frac{2}{p}}du \right\rvert^{\correction{\frac{p\alpha}{2}}} \\
     &\leq \left(w(s,t)^{\frac{2}{p}}(t-s)\right)^{\correction{\frac{p\alpha}{2}}} \\
     &= \correction{w(s,t)^{\correction{\alpha}}(t-s)^{\frac{p\alpha}{2}}} \\
     &\correction{\eqqcolon \bar{w}(s,t).}
\end{align*}
\correction{Using Lemma \ref{lemma:productControls}, We see that $\bar{w}(\cdot,\cdot)$ is a control function}. Consequently $R^X(\omega)\in C^{\correction{\frac{p}{2}}\text{-var}}$ for almost-all $\omega\in\Omega$. Now consider the multi-dimensional case. For each $1\leq i\leq d$ and $1\leq j\leq m$, let $Y_{ij}$ denote the one-dimensional process defined by $(Y_{ij})_{s,t}\coloneqq \int_s^t (\phi_{ij})_{s,u}dW^j_u$ so that the $i^{\text{th}}$ component of $R^X$ is given by $R^X_i = \sum_{j=1}^m Y_{ij}$. Repeating our one-dimensional proof shows that $Y_{ij}\in C^{\correction{\frac{p}{2}}\text{-var}}([0,T];\mathbb{R})$, for all $1\leq i\leq d$ and $1\leq j\leq m$. By $\correction{\frac{p}{2}}$-variation norm estimates of the form $\lVert \sum_{j=1}^m f_j \rVert_{\correction{\frac{p}{2}}\text{-var}}\lesssim \sum_{j=1}^m \lVert f_j\rVert_{\correction{\frac{p}{2}}\text{-var}}$, we deduce that $R^X_i\in C^{\correction{\frac{p}{2}}\text{-var}}([0,T];\mathbb{R})$, $\forall i$. By the equivalence of norms in finite-dimensional spaces, we can once again use the former estimate to show that $R^X = (R^X_1,\dots, R^X_d)\in C^{\correction{\frac{p}{2}}\text{-var}}([0,T];\mathbb{R}^d)$.
\end{proof}

We prove two Lemmas that will allow us to prove the controllability
of barrier options.

\begin{lemma} \label{lemma: barrier options are controllable if enough derivatives exist}
Consider an option \correction{in the Bachelier model $S=W$} whose payoff at maturity depends on $T$, the current level of the stock, $W_T$, and the running maximum of the stock, $M_T = \max_{s\in[0,T]}W_s$. That is
\begin{equation*}
    F=F(T,W_T,M_T).
\end{equation*}
Suppose that $F\in\mathds{D}^{1,r}$, then the delta exists and can be written as $\Delta^F_t = \mathbb{E}_t[D_t F]$, for all $t\in[0,T]$. $(W_t,M_t)_{t\in[0,T]}$ is Markov so we may write the price process as 
\begin{equation*}
    \mathbb{E}_t[F]\big\vert_{W_t=x,M_t=b} = \varphi(t,x,b),
\end{equation*}
for some function $\varphi:[0,T]\times \mathbb{R}\times [0,\infty)\rightarrow \mathbb{R}$. If $\varphi$ is continuously differentiable in its second argument, then
\begin{equation*}
    \Delta^F_t = \varphi_x(t,W_t,M_t),
\end{equation*}
for all $t\in[0,T]$. Furthermore, if $\varphi_{xxx}$, $\varphi_{xxt}$ and $\varphi_{xxb}$ exist and are continuous, then $F$ is a $W$-controllable payoff with
\begin{equation*}
    \Delta^F_t = \varphi_x(t,W_t,M_t) \text{ and } \Gamma^F_t = \varphi_{xx}(t,W_t,M_t),~\forall t\in[0,T].
\end{equation*}
\end{lemma}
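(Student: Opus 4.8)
The plan is to establish the four assertions of the lemma in turn. The first two are immediate: since $F\in\mathds{D}^{1,r}$, the Clark--Ocone formula gives $F=\mathbb{E}[F]+\int_0^T\mathbb{E}_u[D_uF]\,dW_u$ almost surely, and comparison with \eqref{eq:delta} together with uniqueness in the martingale representation theorem forces $\Delta^F_u=\mathbb{E}_u[D_uF]$; and the process $(W_t,M_t)_{t\in[0,T]}$ being Markov (a classical fact), the price $\mathbb{E}_t[F]$ is a measurable function $\varphi(t,W_t,M_t)$.

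For the third assertion I would combine these two facts. Fix $t\le T$ and decompose the future path by $W_T=W_t+\tilde W_{T-t}$ and $M_T=\max(M_t,\,W_t+\tilde M_{T-t})$, where $\tilde W_s:=W_{t+s}-W_t$ is a Brownian motion independent of $\mathcal F_t$ and $\tilde M_s:=\sup_{u\le s}\tilde W_u$. This yields the representation $\varphi(t,x,b)=\mathbb{E}\big[F(T,\,x+\tilde W_{T-t},\,\max(b,\,x+\tilde M_{T-t}))\big]$, and differentiating in $x$ under the expectation (legitimate by dominated convergence, since $\varphi_x$ is assumed to exist and be continuous) gives $\varphi_x(t,x,b)=\mathbb{E}\big[\partial_2F+\partial_3F\cdot\mathds{1}_{\{\tilde M_{T-t}>b-x\}}\big]$. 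On the other hand $D_uW_T=1$, while $D_uM_T=\mathds{1}_{\{u\le\tau_T\}}$ with $\tau_T$ the a.s.\ unique time at which $W$ attains its maximum over $[0,T]$ (see \cite[Chapter 1]{nualart2006malliavinCalculus}); by the chain rule for the Malliavin derivative $D_uF=\partial_2F+\partial_3F\cdot\mathds{1}_{\{u\le\tau_T\}}$, and since $\{u\le\tau_T\}=\{\sup_{[u,T]}W\ge M_u\}$ up to a null set, applying $\mathbb{E}_u$ and using the same decomposition identifies $\mathbb{E}_u[D_uF]$ with $\varphi_x(u,W_u,M_u)$. Hence $\Delta^F_u=\varphi_x(u,W_u,M_u)$. (Alternatively, when $\varphi$ is smooth enough to admit Itô's formula one uses that $M$ is increasing, hence of finite variation, so $\langle M,W\rangle\equiv0$; then $\langle\mathbb{E}_\cdot[F],W\rangle_t=\int_0^t\varphi_x(u,W_u,M_u)\,du$, which again gives the claim.)

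For the fourth assertion, note first that $(t,W_t,M_t)$ is an $\mathbb{R}^3$-valued $W$-controlled path of $p$-variation regularity with Gubinelli derivative $(0,1,0)$: the coordinates $t$ and $M$ are of finite $1$-variation ($M$ is monotone), hence have vanishing Gubinelli derivative, while $W$ has Gubinelli derivative $1$. The hypotheses that $\varphi_{xxx},\varphi_{xxt},\varphi_{xxb}$ exist and are continuous say $\varphi_{xx}\in C^1$; combined with the backward heat equation $\varphi_t=-\tfrac12\varphi_{xx}$ on $\{x<b\}$ and the Neumann condition $\varphi_b(t,b,b)=0$, this bootstraps to enough regularity to apply Itô's formula to $\varphi_x(t,W_t,M_t)$, giving
\[
d\,\varphi_x(t,W_t,M_t)=\big(\varphi_{xt}+\tfrac12\varphi_{xxx}\big)\,dt+\varphi_{xx}\,dW_t+\varphi_{xb}\,dM_t .
\]
Therefore the controlled-path remainder $R^{\Delta}_{s,t}:=\varphi_x(t,W_t,M_t)-\varphi_x(s,W_s,M_s)-\varphi_{xx}(s,W_s,M_s)\,W_{s,t}$ splits into $\int_s^t(\varphi_{xt}+\tfrac12\varphi_{xxx})\,du$ (in fact zero a.e.\ by the heat equation, as $\{u:W_u=M_u\}$ is Lebesgue-null), $\int_s^t\varphi_{xb}\,dM_u$, and $\int_s^t\big(\varphi_{xx}(u,W_u,M_u)-\varphi_{xx}(s,W_s,M_s)\big)\,dW_u$. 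The first two are of finite $1$-variation (the second because $M$ is increasing and $\varphi_{xb}$ is locally bounded), hence of finite $\tfrac p2$-variation since $\tfrac p2>1$; the third has finite $\tfrac p2$-variation by Proposition \ref{prop: regularity of ito integral}(ii), since $u\mapsto\varphi_{xx}(u,W_u,M_u)$ is a $C^1$ function of the finite-$p$-variation path $u\mapsto(u,W_u,M_u)$ and so is itself of finite $p$-variation. Summing, $R^\Delta\in C^{\frac p2\text{-var}}([0,T]^2;\mathbb{R})$, i.e.\ $(\Delta^F,\Gamma^F)=(\varphi_x(\cdot,W_\cdot,M_\cdot),\varphi_{xx}(\cdot,W_\cdot,M_\cdot))\in\mathcal D_W^{p\text{-var}}$, which is precisely the statement that $F$ is $W$-controllable with the asserted Greeks. (Equivalently, one may invoke the chain rule for Gubinelli derivatives, Proposition \ref{prop: regularity of ito integral}(iii), applied to the controlled path $(t,W_t,M_t)$ and the $C^2$ map $\varphi_x$.)

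The main obstacle is the regularity bookkeeping. The hypotheses are phrased in terms of third-order derivatives of $\varphi$, and one must verify --- using that $\varphi$ solves the backward heat equation off the diagonal $\{x=b\}$ with a Neumann condition on it --- that this genuinely licenses the applications of Itô's formula above (and, in the third assertion, the differentiation under the expectation). A related subtlety is that $F$ is only assumed Malliavin differentiable, not classically differentiable, so the chain-rule identity for $D_uF$ and the $x$-differentiation of $\varphi$ should be carried out by approximating $F$ with smooth cylindrical functionals and passing to the limit in $\mathds{D}^{1,r}$; and the integrability required to apply Proposition \ref{prop: regularity of ito integral}(ii) to $\varphi_{xx}(\cdot,W_\cdot,M_\cdot)$ must likewise be checked (it holds in the intended applications, where $\varphi$ is built from smooth data).
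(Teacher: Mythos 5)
Your argument is essentially correct, but for the main (fourth) assertion you take a genuinely different route from the paper. The paper's proof is purely pathwise: it parametrises the segment from $(u,W_u,M_u)$ to $(v,W_v,M_v)$, applies the mean value theorem to $\hat{\varphi}_{xx}$ and $\hat{\varphi}_x$, and bounds $|\Gamma^F_{u,v}|^p$ and $|R^F_{u,v}|^{p/2}$ by Jensen and Cauchy--Schwarz in terms of $(v-u)$, $|W_{u,v}|$ and $|M_{u,v}|$, all of which are controlled because $t$, $W$ and $M$ each have finite $p$-variation. This needs nothing beyond continuity of the listed partial derivatives and works for an arbitrary function $\varphi$ of $(t,x,b)$. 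You instead apply It\^o's formula to $\varphi_x(t,W_t,M_t)$, split the remainder into a Riemann integral, a $dM$-integral and an It\^o integral, and invoke Proposition \ref{prop: regularity of ito integral}(ii) for the last piece. This is a valid decomposition and reaches the same conclusion, but it imports structure that is not among the lemma's hypotheses (the backward heat equation and the Neumann condition on $\{x=b\}$, used to justify It\^o's formula and to kill the drift), it requires the moment condition $\mathbb{E}\int\phi^2\,du<\infty$ of Proposition \ref{prop: regularity of ito integral}, and it is intrinsically an almost-sure stochastic-calculus argument where the paper's is a deterministic estimate. The regularity bookkeeping you flag ($\varphi_{xt}$, $\varphi_{xb}$) is a genuine issue, but the paper's proof silently assumes the same derivatives, so you are not worse off there. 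On the other hand you give substantially more detail than the paper on the first three assertions (Clark--Ocone, the Markov property, and the identification $\Delta^F_t=\varphi_x(t,W_t,M_t)$ via the path decomposition $M_T=\max(M_t,W_t+\tilde M_{T-t})$ and the Malliavin chain rule with $D_uM_T=\mathds{1}_{\{u\le\tau_T\}}$), which the paper asserts without proof; your bracket-based alternative $\langle\mathbb{E}_\cdot[F],W\rangle_t=\int_0^t\varphi_x\,du$ is the cleanest way to close that step, since the dominated-convergence justification for differentiating under the expectation is not actually supplied by the mere existence and continuity of $\varphi_x$.
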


\begin{proof}
For a given $\omega\in\Omega$ and fixed $(u,v)\in\Delta_T$, we define the following functions from $[0,1]\rightarrow \mathbb{R}$,
\begin{align*}
    \hat{\varphi}(\theta) \coloneqq \varphi\big(u+\theta(v-u), &W_u+\theta W_{u,v}, M_u+\theta M_{u,v}\big)(\omega), \\
    \hat{\varphi}_x(\theta) \coloneqq \varphi_x\big(u+\theta(v-u), &W_u+\theta W_{u,v}, M_u+\theta M_{u,v}\big)(\omega), \\
    & \vdots \\
    \hat{\varphi}_{txx}(\theta) \coloneqq \varphi_{txx}\big(u+\theta(v-u), &W_u+\theta W_{u,v}, M_u+\theta M_{u,v}\big)(\omega).
\end{align*}
Henceforth, we implicitly omit the `$\omega$' to keep the notation lighter. By Taylor's theorem, there exists a $\theta\in[0,1]$ such that $\hat{\varphi}_{xx}(1) - \hat{\varphi}_{xx}(0) = \hat{\varphi}_{xx}'(\theta)$. Hence,
\begin{align*}
\Gamma^F_{u,v} &= \varphi_{xx}(v,W_v,M_v) - \varphi_{xx}(u,W_u,M_u) \\
&= \hat{\varphi}_{xx}(1)- \hat{\varphi}_{xx}(0) \\
&= \hat{\varphi}_{xx}'(\theta) \\
&= \hat{\varphi}_{txx}(\theta)(v-u) + \hat{\varphi}_{xxx}(\theta)W_{u,v} + \hat{\varphi}_{bxx}(\theta)M_{u,v},
\end{align*}
where we have used the chain rule in the last line. For any $p\in(2,3)$, the function $x\mapsto x^p$ is convex so that we can use Jensen's inequality to see that
\begin{align*}
    \left\lvert \Gamma^F_{u,v} \right\rvert^p \leq 3^{p-1}C^p\left((v-u)^p+\lvert W_{u,v}\rvert^p +\lvert M_{u,v}\rvert ^p\right),
\end{align*}
where $C=\sup\left\{\varphi_{txx},\varphi_{xxx},\varphi_{bxx}\right\}$ is just a constant for each path. For almost all paths, $t\mapsto W_t$ and $t\mapsto M_t$ are in $C^{p\text{-var}}$ so that there exists a control function $w(\cdot,\cdot)$ with $\lVert \Gamma^F_{u,v}\rVert^p\leq w(u,v)^p$. Note that $w(\cdot,\cdot)$ will depend on each path. Hence $\Gamma^F \in C^{p\text{-var}}$ almost-surely. For the remainder, we have
\begin{align*}
R^F_{u,v} &= \Delta^F_{u,v} - \Gamma^F_uW_{u,v} \\
&= \varphi_x(v,W_v,M_v)-\varphi_x(u,W_u,M_u) -\varphi_{xx}(u,W_u,M_u)W_{u,v}\\
&= \hat{\varphi}_x(1)-\hat{\varphi}_x(0) -  \hat{\varphi}_{xx}(0)W_{u,v}, \\
&= \hat{\varphi}_x'(\theta) - \hat{\varphi}_{xx}(0)W_{u,v} \intertext{using that $\hat{\varphi}_x(1)-\hat{\varphi}_x(0)= \hat{\varphi}'_x(\theta)$, for some $\theta\in[0,1]$ by Taylor's theorem,}
&= \hat{\varphi}_{tx}(\theta)(v-u) + \hat{\varphi}_{bx}(\theta)M_{u,v} + \left(\hat{\varphi}_{xx}(\theta)-\hat{\varphi}_{xx}(0)\right)W_{u,v},  \intertext{using the chain rule on $\hat{\varphi}_x(\theta)$,}
&= \hat{\varphi}_{tx}(\theta)(v-u) + \hat{\varphi}_{bx}(\theta)M_{u,v} + \theta\hat{\varphi}_{xx}'(\tilde{\theta}) W_{u,v} \intertext{using that $\hat{\varphi}_{xx}(\theta)-\hat{\varphi}_{xx}(0)=\theta \hat{\varphi}_{xx}'(\tilde{\theta})$, for some $\tilde{\theta}\in[0,\theta]$ by Taylor's again,}
&= \hat{\varphi}_{tx}(\theta)(v-u) + \hat{\varphi}_{bx}(\theta)M_{u,v} \\
&\qquad + \theta\left\{\hat{\varphi}_{txx}(\tilde{\theta})(v-u)W_{u,v}+\hat{\varphi}_{xxx}(\tilde{\theta})(W_{u,v})^2 +\hat{\varphi}_{bxx}(\tilde{\theta})W_{u,v}M_{u,v}\right\},
\end{align*}
where we have used the chain rule on $\hat{\varphi}_{xx}(\tilde{\theta})$ in the last line. For any $p\in(2,3)$, the function $x\mapsto x^{\frac{p}{2}}$ is convex. By Jensen's inequality, this implies
\begin{multline*}
    \left\lvert R^F_{u,v} \right\rvert^{\frac{p}{2}} \leq5^{p-1}\tilde{C}^{\frac{p}{2}} \times \\ 
    \left\{(v-u)^{\frac{p}{2}}+\lvert M_{u,v}\rvert ^{\frac{p}{2}}+(v-u)^{\frac{p}{2}}\lvert W_{u,v}\rvert^{\frac{p}{2}}+\lvert W_{u,v}\rvert^p+\lvert W_{u,v}\rvert^{\frac{p}{2}}\lvert M_{u,v}\rvert^{\frac{p}{2}}\right\},
\end{multline*}
where $\tilde{C} = \sup\left\{\varphi_{tx},\varphi_{bx},C\right\}$ is just a constant for each path. By the Cauchy--Schwarz inequality, we have
\begin{align*}
    2(v-u)^{\frac{p}{2}}\lvert W_{u,v}\rvert^{\frac{p}{2}} &\leq (v-u)^p+\lvert W_{u,v}\rvert^p \intertext{and}
    2\lvert W_{u,v}\rvert^{\frac{p}{2}}\lvert M_{u,v}\rvert^{\frac{p}{2}} &\leq \lvert W_{u,v}\rvert^p+\lvert M_{u,v}\rvert^p,
\end{align*}
so that there exists some control function $\tilde{w}(\cdot,\cdot)$ with $\lvert R^F_{u,v}\rvert^{\frac{p}{2}} \lesssim \tilde{w}(u,v)^{\frac{p}{2}}$. This implies that $R^F\in C^{\frac{p}{2}\text{-var}}$ almost-surely. Hence $F$ is $W$-controllable.
\end{proof}

\begin{lemma} \label{lemma: payoff is controllable if enough derivatives exist}
Consider an option whose price at time $t$ depends on the values, also at time $t$, of a set of features given by the $d+2$-dimensional stochastic process 
\begin{equation*}
    \left\{(t,W_t,Y^1_t,Y^2_t,\dots,Y^d_t):t\in[0,T]\right\}.
\end{equation*}
Then the time $t$ price of the option when $W_t = x$ and $\left\{Y^i_t = y_i\right\}_{i=1,\dots,d}$ can be expressed as some function $\varphi(t,x,y_1,\dots,y_d)$. If
\begin{enumerate}
    \item $t\mapsto Y^i_t$ are almost-surely in $C^{p\text{-var}}$ for each $i=1,\dots,d$, and
    \item $\varphi_{xxt},\varphi_{xxx},\varphi_{xxy_1},\dots\varphi_{xxy_d}$ all exist and are continuous,
\end{enumerate}
then the payoff of the option is $W$-controllable.
\end{lemma}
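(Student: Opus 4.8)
The plan is to transcribe the proof of Lemma~\ref{lemma: barrier options are controllable if enough derivatives exist} almost verbatim, with the scalar running-maximum feature $M_t$ replaced by the vector of features $Z_t:=(Y^1_t,\dots,Y^d_t)$. Write $V_t:=\mathbb{E}_t[F]=\varphi(t,W_t,Z_t)$ for the price process. First I would identify the martingale-representation integrand of $V$ with the pathwise $x$-derivative, i.e.\ $\Delta^F_t=\varphi_x(t,W_t,Z_t)$; exactly as in the barrier case this follows from the Markov structure of the feature process $(t,W_t,Z_t)$ together with the Clark--Ocone formula $\Delta^F_t=\mathbb{E}_t[D_tF]$. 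The candidate Greeks are then $\Delta^F_t=\varphi_x(t,W_t,Z_t)$ and $\Gamma^F_t=\varphi_{xx}(t,W_t,Z_t)$, and the statement reduces to showing that, for almost every path, $(\Delta^F,\Gamma^F)\in\mathcal{D}^{p\text{-var}}_W$.

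For the $p$-variation bound on $\Gamma^F$, fix a path and an interval $(u,v)\in\Delta_T$. Since $t\mapsto(W_t,Z_t)$ is continuous on $[0,T]$, the segment $\theta\mapsto\bigl(u+\theta(v-u),\,W_u+\theta W_{u,v},\,Z_u+\theta Z_{u,v}\bigr)$, $\theta\in[0,1]$, lies in a fixed compact box (depending on the path), on which $\varphi_{xxt},\varphi_{xxx},\varphi_{xxy_1},\dots,\varphi_{xxy_d}$ are bounded by a constant $C$. Applying the mean value theorem to $\theta\mapsto\varphi_{xx}$ along this segment and then the chain rule gives
\[
\Gamma^F_{u,v}=\hat\varphi_{xxt}(\theta)(v-u)+\hat\varphi_{xxx}(\theta)W_{u,v}+\sum_{i=1}^{d}\hat\varphi_{xxy_i}(\theta)\,Y^i_{u,v},
\]
where $\hat\psi(\theta)$ denotes $\psi$ evaluated on the segment, and Jensen's inequality bounds $|\Gamma^F_{u,v}|^{p}$ by a constant times $(v-u)^{p}+|W_{u,v}|^{p}+\sum_{i}|Y^i_{u,v}|^{p}$. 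Each summand is dominated by a control --- $(v-u)$, the $p$-variation control of $W$, and the $p$-variation controls of the $Y^i$ supplied by hypothesis~(1) --- and a sum of controls is a control by Lemma~\ref{lemma:productControls}, so $\Gamma^F\in C^{p\text{-var}}$ almost surely.

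For the remainder $R^F_{u,v}=\varphi_x(v,W_v,Z_v)-\varphi_x(u,W_u,Z_u)-\varphi_{xx}(u,W_u,Z_u)W_{u,v}$ I would repeat the double Taylor expansion of the barrier proof: expand $\varphi_x(v,\cdot)-\varphi_x(u,\cdot)$ along the segment by the chain rule, then expand $\varphi_{xx}(\theta)-\varphi_{xx}(0)$ once more. The result is a combination of ``linear'' terms in $(v-u)$ and the $Y^i_{u,v}$, which are harmless, and ``quadratic'' terms of the shapes $(v-u)W_{u,v}$, $W_{u,v}^{2}$ and $W_{u,v}Y^i_{u,v}$. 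Raising to the power $p/2$ (legitimate since $2<p<3$, so $p/2>1$) and splitting the mixed products via $2ab\le a^{2}+b^{2}$, every term of $|R^F_{u,v}|^{p/2}$ is dominated by a constant times a control: the pure powers $(v-u)^{p/2}$ and $|W_{u,v}|^{p}$ by $(v-u)$ (using $v-u\le T$ to absorb the extra power) and the $p$-variation control of $W$, and the resulting pure powers $|W_{u,v}|^{p}$, $|Y^i_{u,v}|^{p}$ by the $p$-variation controls of $W$ and the $Y^i$. Hence $R^F\in C^{p/2\text{-var}}$ almost surely, and therefore $(\Delta^F,\Gamma^F)\in\mathcal{D}^{p\text{-var}}_W$, so $F$ is $W$-controllable.

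The hard part is the first step, the identification $\Delta^F=\varphi_x(\cdot,W,Z)$: for features that are merely continuous of finite $p$-variation one must justify that $\varphi$ is smooth enough and that the feature process carries no additional martingale component, which is why the lemma is stated for $\varphi$ the genuine price function of a Markov option rather than an arbitrary smooth $\varphi$. In the applications we have in mind --- running maxima and minima, running averages, and deterministic features such as time --- the $Y^i$ are of finite variation, so this identification is obtained exactly as in Lemma~\ref{lemma: barrier options are controllable if enough derivatives exist}, and the $p$-variation estimates above are then a routine transcription of that proof.
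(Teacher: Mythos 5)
Your proposal follows exactly the route the paper intends: the paper's entire ``proof'' of this lemma is the single sentence that it is a straightforward adaptation of Lemma~\ref{lemma: barrier options are controllable if enough derivatives exist}, and your transcription (mean value theorem along the segment, chain rule, Jensen's inequality and domination by controls for $\Gamma^F$; the double Taylor expansion and the splitting $2ab\le a^2+b^2$ for $R^F$) is precisely that adaptation, with $M_t$ replaced by the vector of features. Your closing remarks about the identification $\Delta^F=\varphi_x$ also match the structure of the barrier lemma, which takes that identification as a hypothesis rather than proving it.

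One step deserves more care than your word ``harmless''. The first-order Taylor terms in the remainder are $\hat\varphi_{tx}(\theta)(v-u)$ and $\sum_i\hat\varphi_{y_ix}(\theta)\,Y^i_{u,v}$, so $|R^F_{u,v}|^{p/2}$ contains the summand $|Y^i_{u,v}|^{p/2}$. In the barrier lemma this is harmless because $M$ is increasing, hence of finite $1$-variation, so that $\sum_{\pi}|M_{u,v}|^{p/2}\le (M_{0,T})^{p/2}$. But finite $p$-variation of $Y^i$ alone (hypothesis (1) of the lemma) does not give finite $\tfrac{p}{2}$-variation of $Y^i$ --- take $Y^i$ to be another Brownian path --- and then $\sum_{\pi}|Y^i_{u,v}|^{p/2}$ need not be bounded, so the remainder estimate does not close and $R^F\notin C^{\frac{p}{2}\text{-var}}$ unless $\varphi_{xy_i}\equiv 0$. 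This is arguably a defect of the lemma as stated rather than of your argument (in every application the features are running maxima, minima or time, all monotone, exactly as you note in your final paragraph), but a complete proof must either strengthen hypothesis (1) to finite $\tfrac{p}{2}$-variation (e.g.\ finite variation) of the $Y^i$ or otherwise dispose of the linear terms $\hat\varphi_{y_ix}(\theta)Y^i_{u,v}$, rather than declaring them harmless.
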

The proof is a straightforward adaptation of the proof of Lemma \ref{lemma: barrier options are controllable if enough derivatives exist}.


\begin{proof}[Proof of Theorem \ref{theorem: smoothed no-touch is controllable} (smoothed no-touch is continuously controllable)]
\ \linebreak
For each pair $(t,y)$ in the domain $\left\{(t,y):t\in[0,T],  g_1(t) + \epsilon_1 \leq y \leq g_2(t) +\epsilon_2 \right\}$ and for each $i=1,2$ \correction{and $t\in [0,T]$, we make a smooth choice of diffeomorphism of $\R$, $h_t(y)$}, which sends $g_i(t) + \epsilon_i$ \correction{to $0$ when $i=1$, and $1$ when $i=2$, and which has derivative $1$ in a neighbourhood of the barriers.} 

One can check that $F^{\epsilon_1,\epsilon_2}(g_1,g_2;f)$ has not knocked out by time $T$ if and only if $h_t(W_t)\in(0,1)$ for each $t\in[0,T]$. Thus, the payoff of $\correction{F^{\epsilon_1,\epsilon_2}(g_1,g_2; f)}$ can alternatively be written as
\begin{equation*}
    F^{\epsilon_1,\epsilon_2}(g_1,g_2;f) =
    \begin{cases}
            f(W_T), &\text{ if } h_\correction{t}(W_t) \in (\correction{0},\correction{1}) \text{ for each } t\in[0,T], \\
            0, &\text{ otherwise.}
    \end{cases}
\end{equation*}
Define the process $Y = \left(Y_t = \correction{h_t}(W_t)\right)_{t\in[0,T]}$ and denote the running maximum and minimum of $Y$ as $\underline{Y}_t = \inf_{s\leq t} Y_s$ and $\overline{Y}_t = \sup_{s\leq t} Y_s$ respectively so that $F^{\epsilon_1,\epsilon_2}(g_1,g_2;f)$ can be written as the payoff of an option on the path of $Y$ instead of the path of $W$. We have
\begin{equation*}
    F^{\epsilon_1,\epsilon_2}(g_1,g_2;f) = \left(f\circ h^{-1}_T\right)(T,Y_T)\cdot \mathds{1}_{\correction{0}< \underline{Y}_T,\overline{Y}_T < \correction{1}}.
\end{equation*}
By It\^o's formula, we know that $Y$ is a diffusion with smooth coefficients and driven by $W$.  \correction{Writing $v(t,y)$} for the price at time $t$ of the option with payoff $F^{\epsilon_1,\epsilon_2}(g_1,g_2;f)$ when $Y_t = y$ and assuming the option has not knocked out by that time. \correction{We see that $v(t,y)$} satisfies \correction{a} parabolic PDE with
smooth coefficients on $[0,T)\times (b_1,b_2)$
with boundary conditions
\begin{equation} \label{eq: boundary conditions for flat barrier double barrier}
\begin{cases}
    v(t,y) &= 0, ~ t\in[0,T), \correction{y\in\{0,1\}},\\
    v(T,y) &= \left(f\circ h_T^{-1}\right)(y), ~ y \in(\correction{0,1})
\end{cases}    
\end{equation}
By Theorem 6.5.2 of \cite{lorenzi2021}, $v(t,y)$ is smooth in $t$ and $y$ on the rectangle $[0,T]\times[0,1]$. 

\correction{The solution $v(t,y)=v(t,y)$ is also a smooth function of the coefficients of the PDE and of the final boundary condition, which in turn depend smoothly on the choice of the family of diffeomorphism $h_t$. To see why the solution depends smoothly on the coefficients, we refer to arguments outlined in \cite[Chapter 3]{henry2005} which shows how to apply the Implicit Function Theorem in Banach spaces together with standard results on existence, uniqueness and well-posedness of parabolic equations to prove such results.}

    By Lemma \ref{lemma: being W controlled is equivalent to being diffusion controlled}, we know that a payoff $G$ is $Y$-controllable if and only if it is $W$-controllable. We use this property to show that the smoothed no-touch payoff is $W$-controllable.

    The price at time $t$ of an option with payoff $F^{\epsilon_1,\epsilon_2}(g_1,g_2;f)$ is
    \begin{equation*}
        \mathbb{E}_t[F^{\epsilon_1,\epsilon_2}(g_1,g_2;f)] = \mathds{1}_{\correction{0}<\underline{Y}_t,\overline{Y}_t<\correction{1}}v_{\correction{h}}(t,Y_t)
    \end{equation*}
    where we write $v_{h}$ to emphasize that the solution of the PDE depends upon the choice of diffeomorphism $h$.
    
    Hence, the price at time $t$ of $G$ when $Y_t = y$, $\underline{Y}_t = \underline{m}$ and $\overline{Y}_t = \overline{m}$ is given by the function
    \begin{equation*}
        \varphi(t,y,\underline{m},\overline{m}) = \int_{{\mathbb R}}\int_{{\mathbb R}} \frac{1}{b_1}\frac{1}{b_2} \rho\left(\correction{\frac{x_1}{b_1}},\correction{\frac{x_2}{b_2}}\right) \mathds{1}_{x_1<\underline{m},\overline{m}<\correction{1}+x_2}v_{\correction{h(x_1,x_2)}}(t,y)dx_2 dx_1.
    \end{equation*} 
    \correction{In this expression we are writing $h(x_1,x_2)$ to emphasize that the choice of diffeomorphism $h$ will depend upon $x_1$ and $x_2$. Note that it is when writing the expression above that we have used our assumption on the derivative of $h_t$ near the barriers. This ensures that the magnitude of small perturbations of the barrier is unchanged by $h$. It is also at this point that we use the fact that $b_1$ and $b_2$ are assumed to be sufficiently small.}

    \correction{Since $v_y$ depends smoothly on $h$, and $h$ can be taken to vary smoothly with $x_1$ and $x_2$, we may therefore write the price of $G$ as}
    \begin{equation*}
        \varphi(t,y,\underline{m},\overline{m}) = \int_{{\mathbb R}}\int_{{\mathbb R}} \rho\left(\correction{x_1},\correction{x_2}\right) \mathds{1}_{\correction{x_1}<\underline{m},\overline{m}<\correction{1+x_2}}u(t,y,x_1,x_2) dx_2 dx_1
    \end{equation*}
    \correction{For some smooth function $u$.}
    
    By Lemma \ref{lemma: payoff is controllable if enough derivatives exist}, it suffices to show that $\varphi_{yyy}$, $\varphi_{yyt}$, $\varphi_{yy\underline{m}}$ and $\varphi_{yy\overline{m}}$ exist and are continuous. The gamma is given by
    \begin{equation}
        \varphi_{yy}(t,y,\underline{m},\overline{m}) = \int_{\correction{\R}}\int_{\correction{\R}} \rho(\correction{x_1,x_2})\mathds{1}_{\correction{x_1}<\underline{m},\overline{m}<\correction{1+x_2}}u_{yy}(t,y,x_1,x_2)\correction{dx_2 dx_1},
        \label{eq:gammaAsPartialDerivatives}
    \end{equation}
    and thus the partial derivatives $\varphi_{yyy}$ and $\varphi_{yyt}$ are given by 
    \begin{align*}
        \varphi_{yyy} &=
\int_{\correction{\R}}\int_{\correction{\R}} \rho(\correction{x_1,x_2})\mathds{1}_{\correction{x_1}<\underline{m},\overline{m}<\correction{1+x_2}}u_{yyy}(t,y,x_1,x_2)\correction{dx_2 dx_1},
        \intertext{ and }
        \varphi_{yyt} &=
\int_{\correction{\R}}\int_{\correction{\R}} \rho(\correction{x_1,x_2})\mathds{1}_{\correction{x_1}<\underline{m},\overline{m}<\correction{1+x_2}}u_{yyt}(t,y,x_1,x_2)\correction{dx_2 dx_1},
    \end{align*}
    which are continuous by Theorem 6.5.2 of \cite{lorenzi2021}. For the remaining partial derivatives, we use that $u$ is smooth in $x_i$, $i=1,2$ and the fact that the solution of the PDE depends smoothly on its coefficients as discussed above. Using integration by partsshows that
    \begin{align*}
        \varphi_{yy\underline{m}}(t,x,\underline{m},\overline{m})
        &= \int_{{\correction{{\mathbb R}}}}\mathds{1}_{\overline{m}<\correction{x_2}}\int_{\correction{{\mathbb R}}}\mathds{1}_{x_1<\underline{m}}\frac{\partial\left(\rho(\correction{x_1,x_2})\correction{u}_{yy}(t,y,\correction{x_1,x_2})\right)}{\partial x_1}\correction{dx_1 dx_2} \\
        &\qquad - \int_{\correction{\mathbb{R}}}\mathds{1}_{\overline{m}<\correction{x_2}} \left[ \mathds{1}_{\correction{x_1}< \underline{m}}\cdot \rho(\correction{x_1,x_2})\correction{u}_{yy}(t,y,\correction{x_1,x_2}) \right]_{\correction{x_1} = -\infty}^{\correction{x_1} = +\infty} \correction{dx_2} \\
        &= \int_{{\correction{{\mathbb R}}}}\mathds{1}_{\overline{m}<\correction{x_2}}\int_{\correction{{\mathbb R}}}\mathds{1}_{x_1<\underline{m}}\frac{\partial\left(\rho(\correction{x_1,x_2})\correction{u}_{yy}(t,y,\correction{x_1,x_2})\right)}{\partial x_1}\correction{dx_1 dx_2},
    \end{align*}
    where the last term in the penultimate line vanishes because $\rho$ has compact support. As $\rho$ and $\correction{u}_{yy}$ are smooth in their parameters, we deduce that $\varphi_{yy\underline{m}}$ is continuous. A similar calculation shows that $\varphi_{yy\overline{m}}$ exists and is continuous, as required.

    The controllability of the payoff follows by applying Lemma \ref{lemma: payoff is controllable if enough derivatives exist}. To show that it is
    continuously controllable we note that
\begin{equation*}
    \lVert (S,M)-(\hat{S},\hat{M}) \rVert_{p\text{-var};[0,T]}
    \lesssim \lVert S-\hat{S}\rVert_{p\text{-var}},
\end{equation*}
since the paths $t\mapsto M_t$ are at least as regular as those of $t\mapsto S_t$. 
It is now straightforward to obtain Lipschitz bounds on the $\Delta$, $\Gamma$ and remainder terms using formulae of the type obtained in \ref{lemma: payoff is controllable if enough derivatives exist}.
        
\end{proof}

\end{document}